\newcounter{note}[section]
\newcommand{\secref}[1]{\mbox{Sec.~\ref{#1}}\xspace}
\newcommand{\lineref}[1]{\mbox{line~\ref{#1}}\xspace}
\newcommand{\linesref}[2]{\mbox{lines~\ref{#1}--\ref{#2}}\xspace}
\newcommand{\secsref}[2]{\mbox{Secs.~\ref{#1}--\ref{#2}}\xspace}
\newcommand{\secrefstatic}[1]{\mbox{Section~{#1}}}
\newcommand{\figref}[1]{\mbox{Fig.~\ref{#1}}\xspace}
\newcommand{\figrefstatic}[1]{\mbox{Fig.~{#1}}\xspace}
\newcommand{\tblref}[1]{\mbox{Table~\ref{#1}}\xspace}
\newcommand{\tblrefstatic}[1]{\mbox{Table~{#1}}\xspace}
\newcommand{\tblsrefstatic}[2]{\mbox{Tables~{#1}--{#2}}\xspace}
\newcommand{\appref}[1]{\mbox{App.~\ref{#1}}\xspace}
\newcommand{\eqnref}[1]{\mbox{(\ref{#1})}\xspace}
\newcommand{\propref}[1]{\mbox{Prop.~\ref{#1}}\xspace}
\newcommand{\propsref}[2]{\mbox{Props.~\ref{#1}--\ref{#2}}\xspace}
\newcommand{\msgref}[1]{\mbox{message~\ref{#1}}\xspace}
\newcommand{\dfnrefstatic}[1]{\mbox{Definition~{#1}}\xspace}
\newcommand{\enumref}[1]{\mbox{step~(\ref{#1})}\xspace}
\newtheorem{prop}{Proposition}
\newcommand{\gibibytes}{\ensuremath{\mathrm{GiB}}\xspace}
\newcommand{\gigahertz}{\ensuremath{\mathrm{GHz}}\xspace}
\newcommand{\secs}{\ensuremath{\mathrm{s}}\xspace}
\newcounter{requesterLineNmbr}
\renewcommand{\therequesterLineNmbr}{\ensuremath{\mathsf{r\arabic{requesterLineNmbr}}}}
\newcommand{\requesterLabel}[1]{\refstepcounter{requesterLineNmbr}\label{#1}\therequesterLineNmbr.}
\newcounter{responderLineNmbr}
\renewcommand{\theresponderLineNmbr}{\ensuremath{\mathsf{s\arabic{responderLineNmbr}}}}
\newcommand{\responderLabel}[1]{\refstepcounter{responderLineNmbr}\label{#1}\theresponderLineNmbr.}
\newcounter{messageNmbr}
\renewcommand{\themessageNmbr}{\ensuremath{\mathsf{m\arabic{messageNmbr}}}}
\newcommand{\messageLabel}[1]{\refstepcounter{messageNmbr}\label{#1}\themessageNmbr.~~}
\newlength{\figureheight}
\newcommand{\figurewidth}{\columnwidth}
\definecolor{curve_color}{rgb}{0.129411764705882,0.380392156862745,0.549019607843137}
\newcommand{\requesterTerm}{requester\xspace}
\newcommand{\requestersTerm}{requesters\xspace}
\newcommand{\RequesterTerm}{Requester\xspace}
\newcommand{\responderTerm}{responder\xspace}
\newcommand{\ResponderTerm}{Responder\xspace}
\newcommand{\respondersTerm}{responders\xspace}
\newcommand{\directoryTerm}{directory\xspace}
\newcommand{\DirectoryTerm}{Directory\xspace}
\newcommand{\extractionComplexity}{extraction complexity\xspace}
\newcommand{\accountLocationPrivacy}{account location privacy\xspace}
\newcommand{\trustedForLoginPrivacy}{TLP\xspace}
\newcommand{\untrustedForLoginPrivacy}{ULP\xspace}
\newcommand{\codeExpt}{\ensuremath{\mathtt{Experiment}~}}
\newcommand{\codeIf}{\ensuremath{\mathtt{if}~}}
\newcommand{\codeReturn}{\ensuremath{\mathtt{return}~}}
\newcommand{\codeAbort}{\ensuremath{\mathtt{abort}~}}
\newcommand{\genericRV}{\ensuremath{Y}\xspace}
\newcommand{\genericRVAlt}{\ensuremath{Y'}\xspace}
\newcommand{\genericSet}{\ensuremath{Z}\xspace}
\newcommand{\genericNat}{\ensuremath{z}\xspace}
\newcommand{\genericMatrix}{\ensuremath{\matrixNotation{Z}}\xspace}
\newcommand{\residues}[1]{\ensuremath{\mathbb{Z}_{#1}}\xspace}
\newcommand{\setSize}[1]{\ensuremath{|{#1}|}\xspace}
\newcommand{\nats}[1]{\ensuremath{[{#1}]}\xspace}
\newcommand{\getsr}{\;\stackrel{\$}{\leftarrow}\;}
\newcommand{\boolTrue}{\ensuremath{\mathit{true}}\xspace}
\newcommand{\boolFalse}{\ensuremath{\mathit{false}}\xspace}
\newcommand{\distEqual}{\ensuremath{\mathbin{\;\overset{d}{=}\;}}\xspace}
\newcommand{\testEquiv}[1]{\ensuremath{\mathbin{{\;\overset{?}{\equiv}}_{#1}}}\xspace}
\newcommand{\testIn}{\ensuremath{\mathbin{\;\overset{?}{\in}\;}}\xspace}
\newcommand{\prob}[1]{\ensuremath{\mathbb{P}\left({#1}\right)}\xspace}
\newcommand{\expv}[1]{\ensuremath{\mathbb{E}\left({#1}\right)}\xspace}
\newcommand{\cset}[3]{\ensuremath{#1\{}{#2}\ensuremath{\;#1|} \ifmmode{\;}\fi {#3}\ensuremath{#1\}}\xspace}
\newcommand{\cprob}[3]{\ensuremath{\mathbb{P}#1(}{#2}\ensuremath{\;#1|} \ifmmode{\;}\fi {#3}\ensuremath{#1)}\xspace}
\newcommand{\cexpv}[3]{\ensuremath{\mathbb{E}#1(}{#2}\ensuremath{\;#1|} \ifmmode{\;}\fi {#3}\ensuremath{#1)}\xspace}
\newcommand{\matrixNotation}[1]{\ensuremath{\mathbf{#1}}\xspace}
\newcommand{\matrixComponent}[3]{\ensuremath{\left({#1}\right)_{{#2},{#3}}}\xspace}
\newcommand{\matrixRowIdx}{\ensuremath{i}\xspace}
\newcommand{\matrixColIdx}{\ensuremath{j}\xspace}
\newcommand{\matrixNmbrRows}{\ensuremath{\alpha}\xspace}
\newcommand{\matrixNmbrCols}{\ensuremath{\alpha'}\xspace}
\newcommand{\matrixNmbrColsAlt}{\ensuremath{\alpha''}\xspace}
\newcommand{\encScheme}{\ensuremath{\mathcal{E}}\xspace}
\newcommand{\keygen}{\ensuremath{\mathsf{Gen}}\xspace}
\newcommand{\encrypt}[1]{\ensuremath{\mathsf{Enc}_{#1}}\xspace}
\NewDocumentCommand{\plaintext}{ g g }{\ensuremath{m\IfNoValueF{#1}{\IfNoValueTF{#2}{_{#1}}{_{{#1},{#2}}}}}\xspace}
\newcommand{\ciphertext}[1]{\ensuremath{c_{#1}}\xspace}
\newcommand{\ciphertextIdx}{\ensuremath{k}\xspace}
\newcommand{\plaintextMatrix}{\ensuremath{\matrixNotation{M}}\xspace}
\newcommand{\ciphertextSpace}[1]{\ensuremath{C_{#1}}\xspace}
\newcommand{\secParam}{\ensuremath{\kappa}\xspace}
\newcommand{\privKey}{\ensuremath{\mathit{sk}}\xspace}
\newcommand{\pubKey}{\ensuremath{\mathit{pk}}\xspace}
\newcommand{\encAdd}[1]{\ensuremath{+_{#1}}\xspace}
\newcommand{\encMatrixAdd}[1]{\ensuremath{+_{#1}}\xspace}
\newcommand{\encMatrixScalarMult}[1]{\ensuremath{\ast_{#1}}\xspace}
\newcommand{\encMatrixHadamardMult}[1]{\ensuremath{\circ_{#1}}\xspace}
\NewDocumentCommand{\encSum}{ g g g }{\ensuremath{\displaystyle\operatorname*{\textstyle\sum_{\mathrlap{#1}}}\IfNoValueF{#2}{_{#2}}\IfNoValueF{#3}{^{#3}}\hphantom{_{#1}}}\xspace}
\newcommand{\encZeroTest}[1]{\ensuremath{\mathsf{isZero}_{#1}}\xspace}
\newcommand{\encZeroTestResult}{\ensuremath{z}\xspace}
\newcommand{\encSumIdx}{\ensuremath{k}\xspace}
\newcommand{\encSumIdxAlt}{\ensuremath{k'}\xspace}
\newcommand{\ciphertextMatrix}{\ensuremath{\matrixNotation{C}}\xspace}
\newcommand{\ciphertextMatrixAlt}{\ensuremath{\matrixNotation{C'}}\xspace}
\newcommand{\fieldOrder}{\ensuremath{r}\xspace}
\newcommand{\fieldAdd}{\ensuremath{+}\xspace}
\newcommand{\fieldSum}{\ensuremath{\sum}\xspace}
\newcommand{\fieldMult}{\ensuremath{\times}\xspace}
\newcommand{\fieldAddIdentity}{\ensuremath{\mathbf{0}}\xspace}
\newcommand{\fieldMultIdentity}{\ensuremath{\mathbf{1}}\xspace}
\newcommand{\fieldNegative}{\ensuremath{-}\,\xspace}
\newcommand{\fieldMinus}{\ensuremath{-}\xspace}
\newcommand{\elgGen}{\ensuremath{\mathsf{ElGamalInit}}\xspace}
\newcommand{\elgPrivKey}{\ensuremath{u}\xspace}
\newcommand{\elgPubKey}{\ensuremath{U}\xspace}
\newcommand{\elgEphemeralPrivKey}[1]{\ensuremath{v_{#1}}\xspace}
\newcommand{\elgEphemeralPubKey}[1]{\ensuremath{V_{#1}}\xspace}
\newcommand{\elgCiphertext}[1]{\ensuremath{W_{#1}}\xspace}
\newcommand{\elgGroup}{\ensuremath{G}\xspace}
\newcommand{\elgGroupGenerator}{\ensuremath{g}\xspace}
\newcommand{\elgGroupExponent}{\ensuremath{y}\xspace}
\newcommand{\timeBound}{\ensuremath{t}\xspace}
\newcommand{\timeBoundAlt}{\ensuremath{t'}\xspace}
\newcommand{\queryBound}{\ensuremath{q}\xspace}
\newcommand{\indcpaAdversary}{\ensuremath{A}\xspace}
\newcommand{\indcpaExperiment}[2]{\ensuremath{\mathsf{Expt}^{\textsc{cpa}\mbox{-}{#1}}_{#2}}\xspace}
\newcommand{\indcpaAdvantage}[1]{\ensuremath{\mathsf{Adv}^{\textsc{cpa}}_{#1}}\xspace}
\newcommand{\indcpaLROracle}[3]{\ensuremath{\encrypt{#1}(\textrm{LR}({#2},{#3},\indcpaBit))}\xspace}
\newcommand{\requesterAlgorithm}[2]{\ensuremath{\requester_{{#1}\mbox{-}{#2}}}\xspace}
\newcommand{\responderAdversary}[1]{\ensuremath{B_{#1}}\xspace}
\newcommand{\responderAdversaryState}[1]{\ensuremath{\phi_{#1}}\xspace}
\newcommand{\responderExperiment}[2]{\ensuremath{\mathsf{Expt}^{\textsc{pmt}\mbox{-}{#1}}_{#2}}\xspace}
\newcommand{\responderAdvantage}[1]{\ensuremath{\mathsf{Adv}^{\textsc{pmt}}_{#1}}\xspace}
\newcommand{\passwordChoiceBit}{\ensuremath{b}\xspace}
\newcommand{\requesterOutput}{\ensuremath{b'}\xspace}
\newcommand{\responderAdversaryBit}{\ensuremath{b''}\xspace}
\newcommand{\indcpaBit}{\ensuremath{\hat{b}}\xspace}
\newcommand{\indcpaAdversaryBit}{\ensuremath{\check{b}}\xspace}
\newcommand{\genericBucketCapacity}{\ensuremath{\chi}\xspace}
\newcommand{\cuckooFilter}{\ensuremath{\matrixNotation{X}}\xspace}
\newcommand{\cuckooFingerprintCtext}{\ensuremath{f}\xspace}
\newcommand{\cuckooFingerprintCtextMatrix}{\ensuremath{\matrixNotation{F}}\xspace}
\newcommand{\cuckooFingerprintFn}{\ensuremath{\mathsf{fprint}}\xspace}
\newcommand{\cuckooFingerprintRange}{\ensuremath{F}\xspace}
\newcommand{\cuckooHashFn}{\ensuremath{\mathsf{hash}}\xspace}
\newcommand{\cuckooNmbrBuckets}{\ensuremath{\beta}\xspace}
\newcommand{\cuckooBucketOne}{\ensuremath{i_1}\xspace}
\newcommand{\cuckooBucketTwo}{\ensuremath{i_2}\xspace}
\newcommand{\cuckooBucketCapacity}{\genericBucketCapacity}
\newcommand{\cuckooPMT}{\textsc{cuckoo-pmt}\xspace}
\newcommand{\bloomPMT}{\textsc{bloom-pmt}\xspace}
\newcommand{\zipfShape}[1]{\ensuremath{\lambda_{#1}}\xspace}
\newcommand{\zipfRank}{\ensuremath{k}\xspace}
\newcommand{\pwdDist}[1]{\ensuremath{D_{#1}}\xspace}
\newcommand{\passwordArray}[1]{\ensuremath{\mathsf{pwd}[{#1}]}\xspace}
\newcommand{\password}[1]{\ensuremath{\pi_{#1}}\xspace}
\newcommand{\leakedPwd}{\ensuremath{\password{\mathrm{leaked}}}\xspace}
\newcommand{\pmtSetSize}{\ensuremath{\ell}\xspace}
\newcommand{\pmtFPR}{\ensuremath{p}\xspace}
\newcommand{\pmtSet}{\ensuremath{Z}\xspace}
\NewDocumentCommand{\pmtSetElmt}{ g }{\ensuremath{e\IfNoValueF{#1}{_{#1}}}\xspace}
\newcommand{\requester}{\ensuremath{R}\xspace}
\newcommand{\responder}[1]{\ensuremath{S_{#1}}\xspace}
\NewDocumentCommand{\responderIdx}{ g }{\ensuremath{i\IfNoValueF{#1}{_{#1}}}\xspace}
\newcommand{\responderIdxAlt}{\ensuremath{i'}\xspace}
\NewDocumentCommand{\queryMatrix}{ g }{\ensuremath{\matrixNotation{Q}\IfNoValueF{#1}{_{#1}}}\xspace}
\newcommand{\resultMatrix}{\ensuremath{\matrixNotation{R}}\xspace}
\newcommand{\accountId}[1]{\ensuremath{a_{#1}}\xspace}
\NewDocumentCommand{\siteId}{ g g }{\ensuremath{s_{#1}\IfNoValueF{#2}{.{#2}}}\xspace}
\NewDocumentCommand{\siteIdAlt}{ g }{\ensuremath{s'\IfNoValueF{#1}{.{#1}}}\xspace}
\newcommand{\loginId}{\ensuremath{L}\xspace}
\newcommand{\loginIdx}{\ensuremath{l}\xspace}
\newcommand{\attackWidth}{\ensuremath{w}\xspace}
\newcommand{\nmbrResponders}[1]{\ensuremath{n_{#1}}\xspace}
\newcommand{\suspiciousArray}[1]{\ensuremath{\mathsf{susp}[{#1}]}\xspace}
\newcommand{\pwdRule}{\ensuremath{\textsc{susp}}\xspace}
\newcommand{\afaRule}{\ensuremath{\textsc{susp}^+}\xspace}
\newcommand{\mrHydeTerm}{Mr.\,Hyde\xspace}
\newcommand{\drJekyllTerm}{Dr.\,Jekyll\xspace}
\newcommand{\jekyllhydeTerm}{Jekyll-Hyde\xspace}
\newcommand{\mrHyde}{\ensuremath{\mathcal{H}}\xspace}
\newcommand{\drJekyll}{\ensuremath{\mathcal{J}}\xspace}
\newcommand{\credStuffer}{\ensuremath{\mathcal{C}}\xspace}
\newcommand{\accessedSet}{\ensuremath{\mathsf{accessed}}\xspace}
\newcommand{\detectedSet}{\ensuremath{\mathsf{detected}}\xspace}
\newcommand{\collectionFlag}{\ensuremath{\mathsf{collectionFlag}}\xspace}
\newcommand{\attemptedFlag}{\ensuremath{\mathsf{attemptedFlag}}\xspace}
\newcommand{\detectedFlag}{\ensuremath{\mathsf{detectedFlag}}\xspace}
\newcommand{\afaSet}[1]{\ensuremath{\mathsf{has2FA}_{#1}}\xspace}
\newcommand{\adsCollect}{\ensuremath{\mathsf{col}}\xspace}
\newcommand{\adsCount}{\ensuremath{\mathsf{cnt}}\xspace}
\newcommand{\adsLow}{\ensuremath{\mathit{lo}}\xspace}
\newcommand{\adsHigh}{\ensuremath{\mathit{hi}}\xspace}
\newcommand{\adsPR}[1]{\ensuremath{\mathsf{DR}_\mathsf{ads}^{#1}}\xspace}
\newcommand{\adsFPR}[1]{\ensuremath{\mathsf{FDR}_\mathsf{ads}^{#1}}\xspace}
\newcommand{\adsTPR}[1]{\ensuremath{\mathsf{TDR}_\mathsf{ads}^{#1}}\xspace}
\newcommand{\csd}{\ensuremath{\mathsf{csd}}\xspace}
\newcommand{\csdFPR}{\ensuremath{\mathsf{FDR}_{\csd}}\xspace}
\newcommand{\csdTPR}{\ensuremath{\mathsf{TDR}_{\csd}}\xspace}
\NewDocumentCommand{\adsDetected}{ g g }{\ensuremath{\mathsf{abnormal}^{#1}\IfNoValueF{#2}{({#2})}}\xspace}
\newcommand{\ads}{\ensuremath{\mathsf{ads}}\xspace}
\newcommand{\countingPhase}{\textit{counting phase}\xspace}
\newcommand{\countingPhases}{\textit{counting phases}\xspace}
\newcommand{\CountingPhase}{\textit{Counting phase}\xspace}
\newcommand{\collectingPhase}{\textit{collecting phase}\xspace}
\newcommand{\collectingPhases}{\textit{collecting phases}\xspace}
\newcommand{\CollectingPhase}{\textit{Collecting phase}\xspace}
\begin{document}

\title{Detecting Stuffing of a User's Credentials at Her Own Accounts}

\author{
  {\rm Ke Coby Wang}\\
  Department of Computer Science\\
  University of North Carolina at Chapel Hill
  \and
  {\rm Michael K.\ Reiter}\\
  Department of Computer Science\\
  University of North Carolina at Chapel Hill
}

\date{}
\maketitle
\thispagestyle{plain}
\pagestyle{plain}

\begin{abstract}
  We propose a framework by which websites can coordinate to detect
  credential stuffing on individual user accounts.  Our detection
  algorithm teases apart normal login behavior (involving password
  reuse, entering correct passwords into the wrong sites, etc.) from
  credential stuffing, by leveraging modern anomaly detection and
  carefully tracking suspicious logins.  Websites coordinate using a
  novel private membership-test protocol, thereby ensuring that
  information about passwords is not leaked; this protocol is highly
  scalable, partly due to its use of cuckoo filters, and is more
  secure than similarly scalable alternatives in an important measure
  that we define.  We use probabilistic model checking to estimate our
  credential-stuffing detection accuracy across a range of operating
  points.  These methods might be of independent interest for their
  novel application of formal methods to estimate the usability
  impacts of our design.  We show that even a minimal-infrastructure
  deployment of our framework should already support the combined
  login load experienced by the airline, hotel, retail, and consumer
  banking industries in the U.S.
\end{abstract}  

\section{Introduction}
\label{sec:intro}

In the past decade, massive numbers of website account credentials
have been compromised via password database breaches, phishing, and
keylogging.  According to a report by Shape
Security~\cite{shape2018:spill},\footnote{We recognize
  that this and other reports produced by companies that market
  credential-stuffing defenses might exaggerate the risks or costs of
  credential stuffing.  We are unaware of more objective sources with
  which to corroborate or refute their claims, however.}  2.3 billion
credentials were reported compromised in 2017 alone.  Such compromised
username-password pairs place those users' \textit{other} accounts in
jeopardy, since people tend to reuse their passwords across different
websites (e.g.,~\cite{das2014:tangled, gao2018:forgetting,
  ion2015:expert, pearman2017:habitat, shay2010:reuse}).  As such,
automatically attempting leaked username-password pairs at a wide
array of sites compromises vast numbers of accounts, a type of attack
termed \textit{credential stuffing}.  Credential stuffing is now a
dominant method of account takeover~\cite{shape2018:spill} and is
remarkably commonplace; e.g., Akamai estimates it observed 30 billion
credential-stuffing attempts in 2018~\cite{akamai2019:report}.
Credential stuffing imposes \textit{actual} losses estimated at
\$300M, \$400M, \$1.7B and \$6B on the hotel, airline, consumer
banking, and retail industries, respectively, per
year~\cite[\tblrefstatic{2}]{shape2018:spill}.  A survey of 538 IT
security practitioners who are responsible for the security of their
companies' websites revealed a total annualized cost of credential
stuffing across their organizations, excluding fraud, of \$3.85M,
owing to costs of prevention, detection, and remediation; downtime;
and customer churn~\cite[\tblsrefstatic{1}{3}]{ponemon2018:report}.

Despite the prominence of credential stuffing, users are remarkably
resistant to taking steps to defend themselves against it.  Thomas et
al.~\cite{thomas2017:credential} report that less than 3.1\% of users
who suffer account hijacks enable two-factor authentication after
recovering their accounts.  Users are similarly resistant to stopping
password reuse even despite specific warnings when doing so, leading
Golla et al.\ to conclude that ``notifications alone appear
insufficient in solving password
reuse''~\cite{golla2018:notifications}.  And though password managers
would seem to enable users to more easily avoid password reuse, users
are reluctant to adopt them.  In a 2016 survey of 1040 American
adults, only 12\% reported ever using password management software,
and only 3\% said this is the password technique they rely on
most~\cite{smith2017:americans}.  In a 2019 Google/Harris Poll survey
of 3000 U.S.\ adults, still only 24\% reported using a password
manager~\cite{google2019:security}.

Conceding that the factors that enable credential stuffing to succeed
today are likely to persist for the foreseeable future, we propose a
framework by which websites could cooperate to detect
active credential-stuffing attacks on a per-user
basis.  Developing such a framework is not straightforward, in part
because the exact behaviors that such a framework should detect are
difficult to define.  Anecdotally, users sometimes engage in behaviors
that might appear quite similar to a credential-stuffing attack, e.g.,
submitting the same small handful of passwords to multiple sites in
the course of logging into each, if she is unsure of which password
she set at which site.  A framework to detect credential stuffing on a
user's accounts will need to tease apart behaviors that the user might
normally undertake from actual credential abuse.

To do so, our framework leverages the following technique.  Anomaly
detection systems (ADS) now exist by which a site can differentiate
login attempts by the legitimate user from those by attackers, even
sophisticated ones, with moderately good accuracy, using features
\textit{other} than the password entered
(e.g.,~\cite{freeman2016:ads}).  A site in our framework leverages
this capability to track \textit{suspicious} login attempts locally,
namely abnormal attempts (per the ADS) using an incorrect password
or, for sites requiring second-factor authentication for abnormal
login attempts, such attempts for which the second-factor
authentication fails (even if the password is correct).  Then, our
framework enables a site (the \textit{\requesterTerm}) receiving a
login attempt that it deems abnormal to query other sites (the
\textit{\respondersTerm}) where this user has accounts, to determine
the number of them at which this same password has been submitted in
suspicious login attempts.  If this number is larger than a threshold,
then the \requesterTerm deems this login attempt to be credential
abuse---even if the password is correct.

Of course, such an approach raises concerns.  First, it risks false
detections, and lacking datasets of how legitimate users submit login
attempts---both correct and incorrect ones---across their many
accounts, the false detection rate seems hopeless to evaluate.
Second, measuring the true detection rate of this scheme would require
knowledge of how attackers conduct credential-stuffing attacks today
(again, we are aware of no such datasets) and, more importantly, how
attackers would respond if our framework were deployed by a collection
of websites.  Finally, since both the \requesterTerm's query and a
\responderTerm's suspicious-password set will contain sensitive
passwords, supporting these queries has the potential to leak
sensitive data to the \requesterTerm or \responderTerm.

We address these concerns as follows.  To estimate the true and false
detection rates of our design, we formulate experiments in the form of
Markov decision processes (MDPs), in which the adversary's choices in
the experiment determine a probability of the adversary achieving a
specified goal in our framework.  In the true-detection-rate MDP, the
adversary corresponds to a credential stuffer, and we leverage
probabilistic model checking to calculate the true detection rate for
the \textit{best} adversarial strategy, yielding what we believe is a
conservative estimate of our true detection rate in practice.  The
false-detection-rate MDP casts the ``adversary'' as the legitimate
user who knows \textit{how} she chooses her passwords (i.e., the
distribution) but who cannot recall which one she set at which
website.  Again, we allow the ``adversary'' (forgetful user) arbitrary
flexibility to submit login attempts, toward the ``goal'' of ensuring
that she will be detected as a credential stuffer when eventually
entering her correct password at a designated website.  We use
probabilistic model checking to find the \textit{best} strategy for
this ``adversary'', which we believe serves as a conservative estimate
of our false-detection rate.

To protect passwords while allowing queries to
suspicious-password sets, we develop a new private membership-test
(PMT) protocol that ensures that \respondersTerm do not learn the
\requesterTerm's query or the protocol result (no matter how they
misbehave) and that limits the information about the \responderTerm's
suspicious-password set that is leaked to the \requesterTerm.  We
quantify the suspicious-password-set leakage in terms of a measure we
call \textit{extraction complexity}, which informally is the number of
protocol runs a \responderTerm can tolerate before succumbing to an
\textit{offline} attack on its set.  We show that our protocol
improves over previous communication-efficient PMT protocols
substantially in this measure.

Finally, we present an implementation of our framework by which a
\requesterTerm leverages a \textit{\directoryTerm} to contact
\respondersTerm where a user holds accounts.  We evaluate performance
of our design in two privacy contexts, one where the \directoryTerm is
trusted to hide the \requesterTerm (i.e., where the user is currently
active) from \respondersTerm, and one where it is not and so the
\requesterTerm contacts the \directoryTerm using
Tor~\cite{dingledine2004:tor}.  We show that even with just one
\directoryTerm machine, various configurations of our design can
already support the typical login load experienced by the airline,
hotel, retail, and consumer banking industries in the U.S., combined.

To summarize, our contributions are as follows:
\begin{itemize}[nosep,leftmargin=1em,labelwidth=*,align=left]
\item We develop a framework by which websites can coordinate to
  detect active credential stuffing on a user's accounts, and we
  estimate the true and false detection rates of this
  algorithm using probabilistic model checking
  (\secsref{sec:stuffing}{sec:directory}).

\item We instantiate this framework with a new PMT protocol that
  ensures security against a malicious \requesterTerm or
  \responderTerm, including improving on other communication-efficient
  designs in a security measure (\textit{extraction complexity}) that
  is important in our context (\secref{sec:pmt}).

\item We report the performance of an implementation of our framework
  under two privacy configurations, in experiments ranging up to 256
  \respondersTerm (\secref{sec:performance}).  Our results indicate
  that even with minimal infrastructure, our design should scale to
  accommodate real login loads experienced by major sectors of the
  U.S. economy.
\end{itemize}

\section{Related Work}
\label{sec:related}

\paragraph*{Interfering with password reuse}
A user's reuse of the same passwords across her accounts is the
impetus for credential stuffing.  Password reuse is widespread
(e.g.,~\cite{brown2004:reuse, riley2006:reuse, shay2010:reuse,
  das2014:tangled, ion2015:expert, pearman2017:habitat,
  wang2018:domino}) and is very resistant to warnings to avoid
it---even reactive warnings triggered by a detected
reuse~\cite{golla2018:notifications}.  Most closely related to our
work is a recent proposal by which websites could coordinate to
actively interfere with a user's attempt to reuse the same or similar
passwords across those sites~\cite{wang2019:reuse}.  While we borrow
ingredients of this design (see \secref{sec:stuffing:assumptions} and
\secref{sec:directory}), our work targets credential stuffing
directly, \textit{without} interfering with a user's password reuse
across sites or assuming that it does not occur.  These different
goals lead to a fundamentally different design, requiring novel
underlying cryptographic protocols (\secref{sec:pmt}) and wholly novel
detection algorithms (\secref{sec:stuffing}).

\paragraph*{Detecting user selection of compromised or popular passwords}
It is now common (and recommended~\cite{nist2017:800-63B}) for sites
to cross-reference account passwords against known-leaked passwords,
either for their own users (e.g.,~\cite{collins2016:facebook}) or as a
service for others (e.g.,~\url{https://haveibeenpwned.com}).  Thomas
et al.~\cite{thomas2019:stuffing} and Li et al.~\cite{li2019:stuffing}
proposed improvements to these types of services that leak less
information to or from the service.  Pal et
al.~\cite{pal2019:similarity} developed personalized password strength
meters that warn users when selecting passwords similar to ones
previously compromised, particularly their own.  More distantly
related are services that track password popularity and enable a
website to detect if one of its users selects a popular
password~\cite{schechter2010:popularity, naor2019:popular}.  In
contrast to these works, our work detects credential stuffing of a
user's password before its compromise is reported (which today takes
an \textit{average} of 15 months~\cite{shape2018:spill}) and
irrespective of its popularity.

\paragraph*{Discovering compromised accounts}
Several techniques have been proposed to discover compromised
accounts.  For example, to detect the breach of its password
database, a site might list several site-generated \textit{honey
  passwords} in the database alongside the valid password for each
account~\cite{bojinov2010:kamouflage, juels2013:honeywords,
  erguler2016:flatness}.  Any submission of a honey password in a
login attempt then discloses the breach of the password database.
Similarly, \textit{honey accounts} for a user can be set up at
websites where she does not have an account, specifically for
detecting any attempt to log into them with the password of one of her
actual accounts~\cite{deblasio2017:tripwire}.  Both of these can be
used in conjunction with our framework but do not supersede it, as
attackers holding a user's correct password for one account (e.g., as
obtained from phishing the user) can use it to attempt logins at the
user's actual accounts at other websites without either of these
techniques detecting it.  This is precisely the type of attack that we
seek to detect here.

\paragraph*{Detecting guessing attacks}
Herley and Schechter~\cite{herley2019:scale} provide an algorithm by
which a large-volume website can estimate the likelihood that a login
is part of a guessing attack, based on the assumption that these
malicious logins are a small fraction of total logins.  They point out
that this assumption ``is of course not true for an attacker
exploiting password re-use or other non-guessing approach'', which is
our interest here.  Schechter et al.~\cite{schechter2019:stopguessing}
suggest features for distinguishing benign login attempts from
guessing attacks, though these features should not be characteristic
of credential stuffing.

\section{Detecting Credential Stuffing}
\label{sec:stuffing}

In this section, we present our framework to detect credential
stuffing on a user's accounts.  Our framework detects credential
stuffing on a per-user basis, and so is agnostic to whether the user
is the only one being subjected to credential stuffing (e.g., after
one of her passwords was phished) or whether she is one of many (e.g.,
after a password database breach).

\subsection{Assumptions}
\label{sec:stuffing:assumptions}

\paragraph*{Account identifiers}
Our framework assumes the ability to associate the accounts of the
same user across different websites---or more specifically, to do so
at least as well as a credential-stuffing attacker could.  When user
accounts are tied to email addresses confirmed during account
creation, this can generally be done, even despite the email-address
variations for a single email account that are supported by email
service providers~\cite{wang2019:reuse}.  As such, we will generally
refer to a user's account identifier \accountId{} as being the same
across multiple websites.  We also assume the ability of one website
at which a user has an account to contact (perhaps anonymously) other
websites where the user has an account.  In our design, this ability
is supported using a logical \textit{\directoryTerm} service, as will
be detailed in \secref{sec:directory}.

\paragraph*{Password management}
\textit{Nothing in our framework requires that a site store passwords
  in the clear or in a reversible fashion}, and most existing best
practices (e.g., using expensive hash functions to compute password
hashes~\cite{spafford1992:opus, biryukov2016:argon2}) can be applied
within our scheme.  In particular, while in
\secref{sec:stuffing:algorithm} we will use
\siteId{}{\passwordArray{\accountId{}}} to denote the correct password
for account \accountId{} at site \siteId{}, we stress that \siteId{}
need not (and should not) store this password explicitly.

We do make one concession in password-management best practices,
however, similarly to some other defenses
(e.g.,~\cite{gentry2006:resilient, wang2019:reuse}).  In our
algorithm, each site \siteId{} will maintain a set
\siteId{}{\suspiciousArray{\accountId{}}} of hashes of passwords
submitted in \textit{suspicious} login attempts on account
\accountId{}.  (This will be detailed in
\secref{sec:stuffing:algorithm}.)  For one site \siteId{} (the
\textit{\requesterTerm}) to query whether a hash value \pmtSetElmt is
present in \siteIdAlt{\suspiciousArray{\accountId{}}} at another site
\siteIdAlt (the \textit{\responderTerm}), it is necessary that any
value that \siteId{} uses to salt \pmtSetElmt be the same as the value
that \siteIdAlt uses to salt the elements of
\siteIdAlt{\suspiciousArray{\accountId{}}}.  To do so, the salt
corresponding to identifier \accountId{} could be generated
deterministically from \accountId{} or generated randomly and
distributed to a site \siteId{} by the \directoryTerm when \siteId{}
registers as a \responderTerm for \accountId{} (see
\secref{sec:directory:dos}).  Below, we elide these details
and simply write ``$\password{} \in
\siteIdAlt{\suspiciousArray{\accountId{}}}$'' to denote the membership
of a hash \pmtSetElmt in the set
\siteIdAlt{\suspiciousArray{\accountId{}}}, where \pmtSetElmt uniquely
identifies the password \password{} from which it was computed but
that also incorporates this salting.

While less secure than per-site, per-account salting, an attacker's
precomputation (before breaching \siteIdAlt) to aid his search for the
passwords whose hashes are contained in
\siteIdAlt{\suspiciousArray{\accountId{}}} would need to be repeated
per identifier \accountId{}.  Moreover, the need for consistent
salting per account \accountId{} across sites pertains only to the
storage and querying of \siteId{}{\suspiciousArray{\accountId{}}}
sets, and not to the hashing of
\siteId{}{\passwordArray{\accountId{}}} for the purposes of checking
the correctness of a password entered in a login attempt.  That is,
login attempts can be checked against a hash of
\siteId{}{\passwordArray{\accountId{}}} that is salted with a
per-site, per-account salt value.  As we will discuss in
\secref{sec:stuffing:algorithm}, while a hash of
\siteId{}{\passwordArray{\accountId{}}} might end up in
\siteId{}{\suspiciousArray{\accountId{}}} in certain cases, we do not
expect this to be the common case.  Consequently, we believe that
resorting to consistent salting across sites for the management of
\siteId{}{\suspiciousArray{\accountId{}}} sets per account
\accountId{} is a small concession to make.

\paragraph*{Anomaly detection}
The key assumption we make about sites that participate in our design
is that each one conducts anomaly detection on the login attempts to
its site, based on locally available features such as the time, client
IP address, useragent string, etc.  For a 10\% false-detection rate,
Freeman et al.~\cite{freeman2016:ads} report a $99\%$ true detection
rate for attacker logins to an account \accountId{} from the country
from which the user for account \accountId{} normally logs in (a
so-called \textit{researching} attacker).  Also for a 10\%
false-detection rate, they report a 74\% true-detection rate for the
most advanced attackers they consider, who also initiate login
attempts with the same useragent string as the legitimate user (a
so-called \textit{phishing} attacker).

Here we treat each site's anomaly-detection system (ADS) as a
block-box that takes as input a group of login features and classifies
the login as either normal or abnormal.  We assume that the ADS can be
parameterized (e.g., with a threshold) to tune its true- and
false-detection rates, where a ``detection'' means an abnormal
classification.  Our credential-stuffing detection algorithm will
leverage two parameter settings (e.g., thresholds)
for the ADS, yielding for each login
attempt \loginId an ADS output of the form $\langle
\adsDetected{\adsCollect}{\loginId},
\adsDetected{\adsCount}{\loginId}\rangle$, a pair of boolean values.
The ``\adsCollect'' and ``\adsCount'' qualifiers denote
  the ``collecting'' and ``counting'' phases of our algorithm,
  respectively, which will be explained in
  \secref{sec:stuffing:algorithm}.  The flexibility provided by
  allowing different ADS parameter settings in the two phases of our
  algorithm is important to permit algorithm tuning; see
  \secref{sec:stuffing:eff}.
Because \adsDetected{\adsCollect}{\loginId} and
\adsDetected{\adsCount}{\loginId} are derived from a common ADS, these
indicators are not independent.  We denote their (false-, true-)
detection rates as $(\adsFPR{\adsCollect}, \adsTPR{\adsCollect})$ and
$(\adsFPR{\adsCount}, \adsTPR{\adsCount})$, respectively, and assume
that either $\adsTPR{\adsCollect} \ge \adsTPR{\adsCount}$ and
$\adsFPR{\adsCollect} \ge \adsFPR{\adsCount}$ or $\adsTPR{\adsCount}
\ge \adsTPR{\adsCollect}$ and $\adsFPR{\adsCount} \ge
\adsFPR{\adsCollect}$, as otherwise one parameter setting would be
strictly better than the other.  When the login \loginId is clear
from context, we will generally elide it and simply denote the ADS
output as $\langle \adsDetected{\adsCollect},
\adsDetected{\adsCount}\rangle$.

\paragraph*{Threat model}
We specify our credential-stuffing detection algorithm in
\secref{sec:stuffing:algorithm} assuming sites that cooperate to
detect credential stuffing against a user's accounts at those sites.
  That is, the attacker can submit login attempts at participating
  sites, possibly using passwords it stole, but cannot participate as
  a site in our framework.
In \secsref{sec:directory}{sec:pmt}, however, we address the potential
for \textit{malicious} participating sites.  In particular, we address
user login privacy against participating sites in
\secref{sec:directory:privacy} and the risk of denial-of-service
attacks by participating sites in \secref{sec:directory:dos}.
Finally, we address user account security despite misbehavior of
participating sites in \secref{sec:pmt}.  We make no effort to address
sites that misbehave so as to reduce true detections of our framework,
since these sites could equally well do so by simply not
participating.  The \directoryTerm, introduced in
\secref{sec:directory}, is trusted to not conduct denials-of-service
and to help defend against them (see \secref{sec:directory:dos}), but
is not trusted for security of sites' user accounts.

\subsection{Algorithm}
\label{sec:stuffing:algorithm}

To detect credential stuffing, each website processes each login
attempt in two \textit{phases}, called the \collectingPhase and the
\countingPhase.  To support these phases, each site \siteId{}
maintains a set \siteId{}{\suspiciousArray{\accountId{}}} of (salted
hashes of) passwords used in ``suspicious'' login attempts to account
\accountId{} at site \siteId{}, as discussed below.  Site \siteId{}
assembles \siteId{}{\suspiciousArray{\accountId{}}} in the
\collectingPhases of local login attempts to account \accountId{}, and
queries the \siteIdAlt{}{\suspiciousArray{\accountId{}}} sets at other
sites \siteIdAlt in the \countingPhases of logins to \accountId{} at
\siteId{}.  These queries are performed using \textit{private
  membership tests} (PMTs), which hide \siteId{}'s query when acting
as a \requesterTerm, and hide the contents of
\siteId{}{\suspiciousArray{\accountId{}}} when acting as a
\responderTerm in the protocol.  We defer details of the PMT protocol
to \secref{sec:pmt}.

Our algorithm begins when a site receives a local login attempt.  The
site submits the login features for classification by its ADS,
yielding a classification $\langle \adsDetected{\adsCollect},
\adsDetected{\adsCount}\rangle$.  The site then performs the
\collectingPhase and the \countingPhase, in that order.

\medskip

\paragraph*{\CollectingPhase} In the \collectingPhase of a login attempt
to account \accountId{} at site \siteId{}, if
$\adsDetected{\adsCollect} = \boolTrue$, then \siteId{} applies one of
the following two rules as appropriate,
where \password{} is the submitted password:

\smallskip

\noindent
\begin{tabular}{p{0.125\columnwidth}@{\hspace{1em}}p{0.8\columnwidth}}
  \pwdRule & If \siteId{} does not support second-factor
  authentication for \accountId{},
  then \siteId{} adds \password{} to
  \siteId{}{\suspiciousArray{\accountId{}}} if the password is
  incorrect (i.e., $\password{} \neq
  \siteId{}{\passwordArray{\accountId{}}}$).
\end{tabular}
  
\noindent
\begin{tabular}{p{0.125\columnwidth}@{\hspace{1em}}p{0.8\columnwidth}}
  \afaRule & If \siteId{} supports second-factor authentication
  for \accountId{}, then
  \siteId{} adds \password{} to
  \siteId{}{\suspiciousArray{\accountId{}}}.  If $\password{} =
  \siteId{}{\passwordArray{\accountId{}}}$, then it is subsequently
  removed from \siteId{}{\suspiciousArray{\accountId{}}} only once a
  second-factor challenge issued by \siteId{} to the owner of account
  \accountId{} is completed successfully.
\end{tabular}

\medskip

\paragraph*{\CountingPhase} In the \countingPhase of a login attempt to
account \accountId{} at site \siteId{}, if $\adsDetected{\adsCount} =
\boolTrue$ and if the submitted password \password{} is correct for
account \accountId{} (i.e., $\password{} =
\siteId{}{\passwordArray{\accountId{}}}$), then \siteId{} performs the
role of the \requesterTerm using password \password{} in PMTs.  In
each of these PMTs, another site \siteIdAlt where account \accountId{}
exists performs the role of the \responderTerm with set
\siteIdAlt{}{\suspiciousArray{\accountId{}}}; i.e., \siteIdAlt
interacts with \siteId{} to allow \siteId{} to learn whether
$\password{} \in \siteIdAlt{\suspiciousArray{\accountId{}}}$.  (In
\secref{sec:directory} we will discuss how \siteId{} contacts each
\siteIdAlt.)  Site \siteId{} then detects credential stuffing if
$\setSize{\cset{\big}{\siteIdAlt}{\siteIdAlt \neq \siteId{} \wedge
    \password{} \in \siteIdAlt{\suspiciousArray{\accountId{}}}}} \ge
\attackWidth$, for a specified \textit{attack width} \attackWidth.

\bigskip

Again, the \collectingPhase is performed first, or more precisely,
\siteId{} begins its \countingPhase only once any addition to
\siteId{}{\suspiciousArray{\accountId{}}} in the preceding
\collectingPhase is complete.  Similarly, upon receiving a PMT query
from \siteId{}, site \siteIdAlt defers responding until any additions
to \siteIdAlt{\suspiciousArray{\accountId{}}} in ongoing
\collectingPhases for the same account \accountId{} are completed
locally.

Note that if \pwdRule would add \password{} to
\siteId{}{\suspiciousArray{\accountId{}}}, then \afaRule would, as
well.  In addition, \afaRule allows even
\siteId{}{\passwordArray{\accountId{}}} to be added to
\siteId{}{\suspiciousArray{\accountId{}}} if it is submitted in a
login that is deemed abnormal but the resulting second-factor
challenge is not completed successfully.
We do not expect this to be the norm, however: A rough
estimate assuming $\adsFPR{\adsCollect} = 0.10$ and a second-factor
  failure rate by the correct user of $0.12$ (e.g.,
  see~\cite{doerfler2019:evaluating}) is that a legitimate login
  attempt with the correct password at a site \siteId{} supporting
  \afaRule leaves that password in
  \siteId{}{\suspiciousArray{\accountId{}}} with probability only
  $0.012$.

When evaluating \afaRule, we
assume that an attacker is unable to complete the second-factor
challenge (which is generally true~\cite{doerfler2019:evaluating}),
but that for usability purposes, the site invokes the second-factor
challenge only on logins for which $\adsDetected{\adsCollect} =
\boolTrue$.  Some sites \siteId{} can maintain
\siteId{}{\suspiciousArray{\accountId{}}} according to \pwdRule while
others use \afaRule.  In our evaluations in
\secref{sec:stuffing:eff}, we will consider the impact of
different balances of sites that use \pwdRule versus \afaRule.

\pwdRule and \afaRule indicate when \siteId{} should \textit{add} a
password to \siteId{}{\suspiciousArray{\accountId{}}}, but not when
\siteId{} should \textit{remove} a password from it.  One approach
would be for \siteId{} to remove a password from
\siteId{}{\suspiciousArray{\accountId{}}} if that password is not used
in an attempted login to account \accountId{} for a specified
\textit{expiration time}.  Provided that \siteId{}'s login interface
rate-limits login attempts on \accountId{} (as is
recommended~\cite{nist2017:800-63B}), an upper bound on the capacity
of \siteId{}'s set \siteId{}{\suspiciousArray{\accountId{}}} can be
ensured.  For example, if \siteId{} permits 100 failed login attempts
on a single account in any 30-day
period~\cite[\secrefstatic{8.2.3}]{nist2013:800-63-2}, and if each
password expires from \siteId{}{\suspiciousArray{\accountId{}}} in 30
days since its last use in a login attempt, then
\setSize{\siteId{}{\suspiciousArray{\accountId{}}}} will never exceed
101.  Such a delay should allow ample time for our framework to detect
even a moderately aggressive credential-stuffing attack, or conversely
should dramatically slow a credential-stuffing attack if it is to go
undetected.

Finally, when adding a password \password{} to
\siteId{}{\suspiciousArray{\accountId{}}}, \siteId{} may reduce
\password{} to a canonical form, e.g., converting capital letters at
selected positions to lowercase, or converting a specific digit to a
digit wildcard.  Provided that the rules for this canonicalization are
employed by both \requestersTerm and \respondersTerm, our framework
can then detect stuffing of some passwords \textit{similar} to that
chosen by this user at another site (``credential tweaking'').  Of
course, \siteId{} could also explicitly add selected passwords similar
to \password{}, but at the cost of increasing
\setSize{\siteId{}{\suspiciousArray{\accountId{}}}}.  We do not
consider these extensions further here.

\subsection{Effectiveness}
\label{sec:stuffing:eff}

We now estimate the false- and true-detection rates for the algorithm
in \secref{sec:stuffing:algorithm} across a range of parameter
settings.  In doing so, we seek to demonstrate that our algorithm can
be effective in detecting credential stuffing without imposing
significantly on legitimate users.  We stop short of recommending a
specific course of action when a site detects credential stuffing via
our algorithm, though we will discuss alternatives at the end of this
section.

Evaluating false- and true-detection rates empirically would require
datasets that are unavailable to us.  To evaluate false detections
empirically, we would presumably need datasets that shed light on how
users both set passwords across websites and then try (and sometimes
fail) to log into websites using them.  To evaluate true detections,
we would need datasets of recorded credential-stuffing campaigns,
along with the correct and guessed passwords and (for sites supporting
\afaRule) the results of second-factor challenges.  

In the absence of any foreseeable way of obtaining such datasets, we
instead perform an evaluation using probabilistic model checking.  The
tool we used to perform probabilistic model checking is
Prism~\cite{kwiatkowska2011:prism}, which supports automated analysis
of Markov decision processes (MDP).  Each MDP we design models an
actor interacting with a specific account \accountId, who is either
the legitimate user of \accountId{} or an attacker, across multiple
websites.  To do so, we specify the actor as a set of \textit{states}
and possible \textit{actions}.  When in a state, the actor can choose
from among these actions nondeterministically; the chosen action
determines a probability distribution on the state to which the actor
then transitions.  These state transitions satisfy the \textit{Markov
  property}: informally, the probability of next transitioning to a
specific state depends only on the current state and the actor's
chosen action.  Prism exhaustively searches all decisions an actor can
make to maximize the probability of the actor succeeding in its goal.

Below, we assume that the legitimate user's password choices across
websites are represented by a probability distribution
\pwdDist{\accountId{}}; i.e., $\pwdDist{\accountId{}}(\password{})$ is
the probability with which the user selects \password{} as its
password for any given site.  We abuse notation slightly and also use
\pwdDist{\accountId{}} to denote the set of passwords with non-zero
probability.  For example, we write $\password{} \in
\pwdDist{\accountId{}}$ to indicate that
$\pwdDist{\accountId{}}(\password{}) > 0$;
\setSize{\pwdDist{\accountId{}}} to denote the number of passwords
\password{} for which $\password{} \in \pwdDist{\accountId{}}$; and
$\password{} \getsr \pwdDist{\accountId{}}$ to denote the selection of
a password from distribution \pwdDist{\accountId{}} and its assignment
to \password{}.  As some prior works~\cite{blocki2018:zipf,
  wang2017:zipf}, we model \pwdDist{\accountId{}} as a Zipf
distribution with parameter $\zipfShape{\accountId{}} \ge 0$, so that
the user chooses her \zipfRank-th most probable password ($1 \le
\zipfRank \le \setSize{\pwdDist{\accountId{}}}$) independently with
probability
$(1/\zipfRank^{\zipfShape{\accountId{}}})/(1/1^{\zipfShape{\accountId{}}}
+ 1/2^{\zipfShape{\accountId{}}} + \ldots +
1/\setSize{\pwdDist{\accountId{}}}^{\zipfShape{\accountId{}}})$.

The MDPs below need to synthetically model the distribution of
$\langle \adsDetected{\adsCollect}$, $\adsDetected{\adsCount}\rangle$
pairs for login attempts or sessions thereof, similar to their
distribution in practice (notably, lacking independence).  To do so
for specified detection rates \adsPR{\adsCollect} and
\adsPR{\adsCount}, let $\adsHigh \in \{\adsCollect, \adsCount\}$ and
$\adsLow \in \{\adsCollect, \adsCount\}$ be such that \adsPR{\adsHigh}
and \adsPR{\adsLow} are the larger and smaller of \adsPR{\adsCollect}
and \adsPR{\adsCount}, respectively.  Then, we let
\begin{align*}
  \prob{\adsDetected{\adsHigh} = \boolTrue} & = \adsPR{\adsHigh} \\
  \cprob{\Big}{\adsDetected{\adsLow} = \boolTrue}{\adsDetected{\adsHigh} = \boolTrue}
  & = \adsPR{\adsLow}/\adsPR{\adsHigh} \\
    \cprob{\Big}{\adsDetected{\adsLow} = \boolTrue}{\adsDetected{\adsHigh} = \boolFalse}
  & = 0
\end{align*}
We denote selection of $\langle \adsDetected{\adsCollect}$,
$\adsDetected{\adsCount}\rangle$ according to this distribution in the
experiment descriptions below by the notation $\langle
\adsDetected{\adsCollect}$, $\adsDetected{\adsCount}\rangle$ $\getsr$
$\ads(\adsPR{\adsCollect}$, $\adsPR{\adsCount})$.

\subsubsection{Estimating the false detection rate}
\label{sec:stuffing:eff:fdr}
False detections can arise in our framework; indeed, even the entry of
the \textit{correct} password for an account at a website by the
legitimate user might trigger a credential-stuffing detection if the
user erroneously submitted the same password to other websites (that
use \pwdRule), or even if correctly but without completing a
second-factor challenge from those sites (that use \afaRule).  Here we
leverage probabilistic model checking to conservatively estimate the
probability with which a user induces a false detection.

We express the process by which a user might do so as a MDP
in which the legitimate user is represented by
two parties, to whom we refer here as ``\drJekyllTerm'' (\drJekyll)
and ``\mrHydeTerm'' (\mrHyde).\footnote{In their namesake
  novella~\cite{stevenson2015:strange}, \mrHydeTerm and \drJekyllTerm
  are the evil and good personae, respectively, of the same
  person.}\ Informally, the user's \mrHyde persona knows the
distribution from which the user previously set passwords at various
websites, but does not remember which password the user set at which
site.  \mrHyde attempts a number of logins before turning control over
to the \drJekyll persona, who is challenged to log into another
website, for which he does remember the password.  Still, \drJekyll's
entry of the correct password might be detected as possible credential
stuffing, depending on the actions of \mrHyde before him.  In a
\jekyllhydeTerm experiment, then, we say that \mrHyde wins (and
\drJekyll loses) if \drJekyll's login attempt is (falsely) detected as
credential stuffing, and otherwise \mrHyde loses (and \drJekyll wins).

Since both \drJekyll and \mrHyde represent the legitimate user, we
assume both can complete any second-factor challenges that a website
issues.  Under this assumption, there is no difference between
\pwdRule and \afaRule, and so we do not differentiate sites supporting
\afaRule from those supporting \pwdRule in \jekyllhydeTerm
experiments.  Since users who forget their passwords presumably tend
to attempt multiple password guesses in the same login session (and so
from the same platform and location) until finding the right one,
\mrHyde is classified once by the ADS at site \siteId{}
for all login attempts there.  Finally, we forbid \mrHyde from
attempting logins at a site \siteId{} after he has already submitted
the correct password to \siteId{} (see step~\ref{exp:j-h:hyde}) to
preclude him from trivially winning.  After all, once \mrHyde has
``recalled'' the correct password for \siteId{}, he could artificially
add extra passwords to \siteId{}{\suspiciousArray{\accountId{}}} by
``attempting'' logins with them, thereby unreasonably inflating his
chances of winning.

More precisely, a \jekyllhydeTerm experiment takes as input an account
identifier \accountId{}, the distribution \pwdDist{\accountId{}}, a
number of \respondersTerm \nmbrResponders{\accountId{}}, an integer
\attackWidth, and probabilities \adsFPR{\adsCollect} and
\adsFPR{\adsCount}, and proceeds as follows:
\begin{enumerate}[nosep,labelsep=3pt,labelwidth=*,leftmargin=1.5em,align=left]
\item Sites \siteId{\responderIdx}, for $0 \le \responderIdx \le
  \nmbrResponders{\accountId{}}$, are initialized as follows:
  \begin{itemize}[nosep,leftmargin=1em,labelwidth=*,align=left]
  \item A password
    $\siteId{\responderIdx}{\passwordArray{\accountId{}}} \getsr
    \pwdDist{\accountId{}}$ is selected independently for account
    \accountId{} at website \siteId{\responderIdx}.
  \item The suspicious password set is cleared:
    $\siteId{\responderIdx}{\suspiciousArray{\accountId{}}} \gets
    \emptyset$.
  \item To model the classification of \mrHyde by the ADS at
    \siteId{\responderIdx}, a boolean
    \siteId{\responderIdx}{\collectionFlag} is set to
    \adsDetected{\adsCollect} where $\langle
    \adsDetected{\adsCollect}$, $\adsDetected{\adsCount}\rangle$
    $\getsr$ $\ads(\adsFPR{\adsCollect}$,
    $\adsFPR{\adsCount})$.
  \end{itemize}
\item \mrHyde is given the experiment inputs and performs login
  attempts on \accountId{} at any of $\siteId{1}, \ldots,
  \siteId{\nmbrResponders{\accountId{}}}$, provided that if \mrHyde
  submits the correct password
  \siteId{\responderIdx}{\passwordArray{\accountId{}}} in a login
  attempt at \siteId{\responderIdx}, then this is \mrHyde's last login
  attempt at \siteId{\responderIdx}.  Each incorrect login attempt at
  \siteId{\responderIdx} adds the attempted password to
  \siteId{\responderIdx}{\suspiciousArray{\accountId{}}} if and only
  if \siteId{\responderIdx}{\collectionFlag} is \boolTrue.
  \label{exp:j-h:hyde}
\item Once \mrHyde is done, \drJekyll logs into \siteId{0} using the
  correct password \siteId{0}{\passwordArray{\accountId{}}}.  If
  $\adsDetected{\adsCount} = \boolTrue$ for $\langle
  \adsDetected{\adsCollect}$, $\adsDetected{\adsCount}\rangle$
  $\getsr$ $\ads(\adsFPR{\adsCollect}$, $\adsFPR{\adsCount})$
  and if
  $\setSize{\cset{\big}{\siteId{\responderIdx}}{\siteId{0}{\passwordArray{\accountId{}}}
      \in \siteId{\responderIdx}{\suspiciousArray{\accountId{}}}
      \wedge 1 \le \responderIdx \le \nmbrResponders{\accountId{}}}}
  \ge \attackWidth$, then \mrHyde wins.  Otherwise, \drJekyll wins.
\end{enumerate}
We define \csdFPR
(``\csd'' denotes ``credential-stuffing detection'')
to be the probability with which \mrHyde wins and so
\drJekyll loses, under an optimal strategy for \mrHyde.  We
believe that \csdFPR is a very conservative estimate on the false
detection rate of our framework in practice, in that it reflects the
worst case behavior (in terms of usability) of \mrHyde.  Moreover, by
testing
$\setSize{\cset{\big}{\siteId{\responderIdx}}{\siteId{0}{\passwordArray{\accountId{}}}
    \in \siteId{\responderIdx}{\suspiciousArray{\accountId{}}} \wedge
    1 \le \responderIdx \le \nmbrResponders{\accountId{}}}} \ge
\attackWidth$ only for \drJekyll, i.e., after \mrHyde has filled the
suspicious password sets of sites as much as possible, our
false-detection estimates are even more conservative (notably, ignoring
\mrHyde's logins where he went undetected).

Consider an example with $\nmbrResponders{\accountId{}} = 2$ sites
(\siteId{1} and \siteId{2}), $\setSize{\pwdDist{\accountId{}}} = 2$
passwords (\password{1} and \password{2}), and $\attackWidth = 1$.
Consider the following sequence of choices by \mrHyde: First, \mrHyde
attempts \password{1} at \siteId{2}, which is correct with probability
$\pwdDist{\accountId{}}(\password{1})$ and incorrect with probability
$1 - \pwdDist{\accountId{}}(\password{1}) =
\pwdDist{\accountId{}}(\password{2})$.  In addition, this login
attempt is detected as abnormal if $\siteId{2}{\collectionFlag} =
\boolTrue$, which occurs with probability \adsFPR{\adsCollect}.
Suppose that \password{1} is incorrect and
$\siteId{2}{\collectionFlag} = \boolTrue$, and so
\siteId{2}{\suspiciousArray{\accountId{}}} now contains \password{1}.
Then suppose \mrHyde attempts password \password{2} at \siteId{1}, but
that this guess is incorrect and \mrHyde's login is not deemed
abnormal.  Finally, suppose that \drJekyll gains control and submits
the correct password \password{1} to \siteId{0}, but that his attempt
is detected as abnormal (with probability \adsFPR{\adsCount}).
Because \siteId{0}{ \passwordArray{\accountId{}}} (= \password{1}) is
in at least $\attackWidth = 1$ of the suspicious sets at other sites,
i.e., \siteId{2}{\suspiciousArray{\accountId{}}}, \mrHyde wins.
\mrHyde's choices induce this sequence of events with probability
$\left[\pwdDist{\accountId{}}(\password{2}) \cdot
  \adsFPR{\adsCollect}\right]$ $\cdot$
$\left[\pwdDist{\accountId{}}(\password{1}) \cdot
  (1-\adsFPR{\adsCollect})\right]$ $\cdot$
$\left[\pwdDist{\accountId{}}(\password{1}) \cdot
  \adsFPR{\adsCount}\right]$, and \csdFPR is computed by exhaustively
considering all possible choices by \mrHyde and all event sequences.

\subsubsection{Estimating the true detection rate}
\label{sec:stuffing:eff:tdr}
To evaluate the true-detection rate of our credential-stuffing
algorithm, we use a different type of MDP, in which a
credential-stuffing attacker \credStuffer is given a ``leaked''
password $\leakedPwd \getsr \pwdDist{\accountId{}}$ and allowed to
attempt logins using it at sites $\siteId{1}, \ldots,
\siteId{\nmbrResponders{\accountId{}}}$ where \accountId{} has
accounts.  The attacker knows which sites have second-factor
authentication enabled for abnormal logins
to \accountId{}, as specified by a set
$\afaSet{\accountId{}} \subseteq \{\siteId{1}, \ldots,
\siteId{\nmbrResponders{\accountId{}}}\}$.  Each site
$\siteId{\responderIdx} \in \afaSet{\accountId{}}$ therefore uses \afaRule to manage
\siteId{\responderIdx}{\suspiciousArray{\accountId{}}}, and we assume
that \credStuffer cannot pass a second-factor challenge for
\accountId{}.  Other sites use \pwdRule.  We also allow the attacker
knowledge of the true-detection rates \adsTPR{\adsCollect} and
\adsTPR{\adsCount} of sites' ADS.

As such, our true-detection experiment takes as input an account
identifier \accountId{}, the distribution \pwdDist{\accountId{}}, the
number of \respondersTerm \nmbrResponders{\accountId{}}, the set
\afaSet{\accountId{}}, an integer \attackWidth, and probabilities
\adsTPR{\adsCollect} and \adsTPR{\adsCount}, and proceeds as follows:
\begin{enumerate}[nosep,labelsep=3pt,labelwidth=*,leftmargin=1.5em,align=left]
\item Sites \siteId{\responderIdx}, for $1 \le \responderIdx \le
  \nmbrResponders{\accountId{}}$, are initialized as follows:
  \begin{itemize}[nosep,leftmargin=1em,labelwidth=*,align=left]
  \item A password
    $\siteId{\responderIdx}{\passwordArray{\accountId{}}} \getsr
    \pwdDist{\accountId{}}$ is selected independently for account
    \accountId{} at website \siteId{\responderIdx}.
  \item The suspicious password set is cleared:
    $\siteId{\responderIdx}{\suspiciousArray{\accountId{}}} \gets
    \emptyset$.
  \item A boolean \siteId{\responderIdx}{\attemptedFlag} is
    initialized to \boolFalse.
  \end{itemize}
\item \credStuffer is given \accountId{}, $\leakedPwd \getsr
  \pwdDist{\accountId{}}$, \afaSet{\accountId{}}, \adsTPR{\adsCollect},
  \adsTPR{\adsCount}, \attackWidth, and the opportunity to perform one
  login attempt using \leakedPwd at each of $\siteId{1}, \ldots,
  \siteId{\nmbrResponders{\accountId{}}}$.  On \credStuffer's
  \loginIdx-th login attempt ($\loginIdx = 1, 2, \ldots$), let
  \siteId{\responderIdx{\loginIdx}} denote the site at which this
  attempt occurs.  Then:
  \begin{itemize}[nosep,leftmargin=1em,labelwidth=*,align=left]
  \item Set $\langle \adsDetected{\adsCollect}$,
    $\adsDetected{\adsCount}\rangle$ $\getsr$
    $\ads(\adsTPR{\adsCollect}$, $\adsTPR{\adsCount})$ and
    $\siteId{\responderIdx{\loginIdx}}{\collectionFlag} \gets
    \adsDetected{\adsCollect}$.
  \item \siteId{\responderIdx{\loginIdx}} adds \leakedPwd to
    \siteId{\responderIdx{\loginIdx}}{\suspiciousArray{\accountId{}}}
    if $\adsDetected{\adsCollect} = \boolTrue$ and either $\leakedPwd
    \neq
    \siteId{\responderIdx{\loginIdx}}{\passwordArray{\accountId{}}}$
    (per \pwdRule) or, if $\siteId{\responderIdx{\loginIdx}} \in
    \afaSet{\accountId{}}$, even if $\leakedPwd =
    \siteId{\responderIdx{\loginIdx}}{\passwordArray{\accountId{}}}$
    (per \afaRule, since \credStuffer cannot pass 
    second-factor authentication).
  \item If $\loginIdx > \attackWidth$, then
    \begin{itemize}[nosep,leftmargin=1em,labelwidth=*,align=left]
    \item $\siteId{\responderIdx{\loginIdx}}{\attemptedFlag} \gets \boolTrue$
    \item $\siteId{\responderIdx{\loginIdx}}{\detectedFlag} \gets
      \boolTrue$ if \adsDetected{\adsCount} and, at this point,
      \setSize{\cset{\big}{\siteId{\responderIdxAlt}}{\leakedPwd \in
          \siteId{\responderIdxAlt}{\suspiciousArray{\accountId{}}}
          \wedge \responderIdxAlt \neq \responderIdx{\loginIdx}}}
      $\ge$ \attackWidth.  Otherwise,
      $\siteId{\responderIdx{\loginIdx}}{\detectedFlag} \gets
      \boolFalse$.
    \end{itemize}
  \end{itemize}
  \label{tdr:attack}
\end{enumerate}
When the experiment is finished, define
\begin{align*}
  \accessedSet & = \cset{\Big}{\siteId{\responderIdx}}{\parbox[m]{0.69\columnwidth}{$\siteId{\responderIdx}{\attemptedFlag} \wedge \leakedPwd = \siteId{\responderIdx}{\passwordArray{\accountId{}}}~\wedge\\ (\siteId{\responderIdx}{\collectionFlag} \Rightarrow \siteId{\responderIdx} \not\in \afaSet{\accountId{}})$}} \\
  \detectedSet & = \cset{\big}{\siteId{\responderIdx}}{\siteId{\responderIdx} \in \accessedSet \wedge \siteId{\responderIdx}{\detectedFlag}}
\end{align*}
Then, we define $\csdTPR =
\frac{\expv{\setSize{\detectedSet}}}{\expv{\setSize{\accessedSet}}}$
where this ratio is computed using the adversary's optimal strategy
for minimizing \expv{\setSize{\detectedSet}} among all strategies that
maximize \expv{\setSize{\accessedSet}}.  The condition $\loginIdx >
\attackWidth$ in the last bullet of \enumref{tdr:attack} limits
\accessedSet and \detectedSet to include only sites at which the
attacker succeeded or was detected, respectively, starting with the
$(\attackWidth+1)$-th login attempt, since by design, our algorithm
cannot detect \attackWidth or fewer credential-stuffing login
attempts.  As such, \csdTPR is best interpreted as the true-detection
rate for attacks of width greater than \attackWidth.  We expect that
\csdTPR is very conservative as an estimate of the true detection rate
in practice, since it is computed using the best possible strategy for
\credStuffer, equipped with perfect knowledge of parameters he would
not generally have.

Consider an example with $\nmbrResponders{\accountId{}} = 2$ sites
(\siteId{1} and \siteId{2}), neither of which support second-factor
authentication for \accountId{}; $\setSize{\pwdDist{\accountId{}}} =
2$ passwords (\password{1} and \password{2}); and $\attackWidth = 1$.
Suppose \credStuffer is given \password{1} (with probability
$\pwdDist{\accountId{}}(\password{1})$) as the ``leaked'' password.
\credStuffer{} picks one site, say \siteId{2}, and tries to log in
with \password{1}.  With probability
$\pwdDist{\accountId{}}(\password{2}) \cdot \adsTPR{\adsCollect}$,
\credStuffer{} fails at \siteId{2} with \password{1} being added to
\siteId{2}{\suspiciousArray{\accountId{}}}.  In this event, suppose
\credStuffer{} then submits \password{1} to \siteId{1}, where
\password{1} is correct (with probability
$\pwdDist{\accountId{}}(\password{1})$) and so \siteId{1} is added to
\accessedSet, but where this login attempt is detected as abnormal in
\siteId{1}'s \countingPhase (with probability \adsTPR{\adsCount}).
Since \password{1} appears in $\attackWidth = 1$ of the suspicious
sets at other sites (i.e., in
\siteId{2}{\suspiciousArray{\accountId{}}}), \siteId{1} is added to
\detectedSet.  \credStuffer{}'s choices induce these events with
probability $\pwdDist{\accountId{}}(\password{1})$ $\cdot$
$[\pwdDist{\accountId{}}(\password{2})$ $\cdot$
  $\adsTPR{\adsCollect}]$ $\cdot$
$[\pwdDist{\accountId{}}(\password{1})$ $\cdot$ $\adsTPR{\adsCount}]$.
\csdTPR is then computed by considering all possible choices by
\credStuffer and all event sequences.

\begin{figure*}[t]

  \vspace*{-0.3em}
  \begin{subfigure}[b]{.1\columnwidth}
    \setlength\figureheight{2in}
    \centering
    \hspace*{4.0em}
    \resizebox{!}{2.75em}{\newenvironment{customlegend}[1][]{%
    \begingroup
    \csname pgfplots@init@cleared@structures\endcsname
    \pgfplotsset{#1}%
}{%
    \csname pgfplots@createlegend\endcsname
    \endgroup
}%

\def\addlegendimage{\csname pgfplots@addlegendimage\endcsname}

\begin{tikzpicture}

\begin{customlegend}[
    legend style={{font={\fontsize{10pt}{12}\selectfont}},{draw=none}},
    legend columns=3,
    legend cell align={left},
    legend entries={{$(\adsFPR{\adsCollect}, \adsTPR{\adsCollect}) = (0.05, 0.61)\quad$}, {$(\adsFPR{\adsCollect}, \adsTPR{\adsCollect}) = (0.10, 0.74)\quad$},{$(\adsFPR{\adsCollect}, \adsTPR{\adsCollect}) = (0.20, 0.88)\quad$},{$(\adsFPR{}, \adsTPR{})$~\cite{freeman2016:ads}},{Blind guessing}}]
\addlegendimage{line width=1pt, densely dotted, curve_color}
\addlegendimage{line width=1pt, dash pattern=on 1pt off 3pt on 3pt off 3pt, curve_color}
\addlegendimage{line width=1pt, solid, curve_color}
\addlegendimage{line width=1.5pt, dash pattern=on 1pt off 3pt on 3pt off 3pt, black, black}
\addlegendimage{line width=1.5pt, dotted, black}

\end{customlegend}

\end{tikzpicture}}
  \end{subfigure}
  
  \vspace*{0.25em}
  \hspace*{0em}
  \captionsetup[subfigure]{font=normalsize,labelfont=normalsize, oneside,margin={-0.5in,0in}}
  \begin{subfigure}[b]{0.197\textwidth}
  \setlength\figureheight{1.9in}
    \centering
    \resizebox{!}{11.02em}{\begin{tikzpicture}

\pgfplotsset{every axis/.append style={
                  ylabel={\csdTPR},
                    compat=1.3,
                    x label style={yshift=-1.5em, align=center},
                    label style={font=\small},
                    tick label style={font=\small}  
                    }}
\begin{axis}[
xmin=0, xmax=0.4,
ymin=0.3, ymax=1.0,
width=\figurewidth,
height=\figureheight,
xtick={0.0,0.2,0.4,0.6,0.8,1.0},
xticklabels={0.0,0.2,0.4,0.6,0.8,1.0},
ytick={0.3,0.5,0.7,0.9},
yticklabels={0.3,0.5,0.7,0.9},
tick align=outside,
tick pos=left,
minor tick num=1,
xmajorgrids,
x grid style={lightgray!92.02614379084967!black},
ymajorgrids,
y grid style={lightgray!92.02614379084967!black},
grid=both,
]
\addplot [line width=1pt, densely dotted, mark=*, mark options={scale=0.5}, curve_color]
table {%
9.84373978476361e-15 nan
1.888649568577597e-12 0.0005216194222074089
1.634194063984849e-10 0.003234801482980054
8.404594498185106e-09 0.01313719128302793
2.848793292800541e-07 0.03989077379617589
6.663972586191839e-06 0.09773936230910507
0.0001093626100797433 0.2018057507980536
0.001252518921526494 0.3586967516836818
0.009740481367446544 0.5556609426880594
0.04858882603531809 0.7623513754664278
};
\addplot [line width=1pt, dash pattern=on 1pt off 3pt on 3pt off 3pt, mark=*, mark options={scale=0.5}, curve_color]
table {%
1.007998953959794e-11 nan
9.259773836026651e-10 0.00296804660608796
3.845573190232552e-08 0.01308704005291161
9.52443696077514e-07 0.03921044096717341
1.562148723087143e-05 0.09167686778332151
0.0001780801516982991 0.1804449750995354
0.001440127041300108 0.3102559357263457
0.008280448412106239 0.4732584390392424
0.03344413447121299 0.6480184751906908
0.09277947978450021 0.811043550944416
};
\addplot [line width=1pt, solid, mark=*, mark options={scale=0.5}, curve_color]
table {%
1.032190928854829e-08 nan
4.321049562032821e-07 0.0141165733145604
8.225645491548665e-06 0.04335857395161469
9.413378146526999e-05 0.09634860474463325
0.0007215263672467441 0.1779825456180392
0.003908383052095502 0.2913609394524878
0.01540524522573553 0.4317163569295284
0.04495367247226023 0.5831517829490638
0.0985004959262589 0.7262276797148536
0.1656589752200726 0.8490345379941602
};
\addplot [line width=1.5pt, dash pattern=on 1pt off 3pt on 3pt off 3pt, black]
table {%
0.01                0.28
0.05                0.61
0.10                0.74
0.20                0.88
0.30				0.95
0.40                0.97
0.60                1.00
};
 \addplot [line width=1.5pt, dotted, black]
 table {%
 0                  0
 1                  1
 };
\end{axis}

\end{tikzpicture}}
    \begin{minipage}[t]{16em}
      \vspace*{0.6em}\caption{Baseline}
      \label{fig:roc-phishing:baseline}
    \end{minipage}%
  \end{subfigure}
  \captionsetup[subfigure]{font=normalsize,labelfont=normalsize, oneside,margin={-0.8in,0in}}
  \hspace*{-1.0em}
  \begin{subfigure}[b]{0.191\textwidth}
  \setlength\figureheight{1.9in}
    \centering
    \resizebox{!}{11.32em}{\begin{tikzpicture}
\pgfplotsset{every axis/.append style={
                    compat=1.3,
                    x label style={yshift=-1.5em, align=center},
                    label style={font=\small},
                    tick label style={font=\small}  
                    }}
\begin{axis}[
xmin=0, xmax=0.4,
ymin=0.3, ymax=1.0,
width=\figurewidth,
height=\figureheight,
xtick={0.0,0.2,0.4,0.6,0.8,1.0},
xticklabels={0.0,0.2,0.4,0.6,0.8,1.0},
ytick={0.0,0.2,0.4,0.6,0.8,1.0},
yticklabels={},
tick align=outside,
tick pos=left,
minor tick num=1,
xmajorgrids,
x grid style={lightgray!92.02614379084967!black},
ymajorgrids,
y grid style={lightgray!92.02614379084967!black},
grid=both,
]
\addplot [line width=1pt, densely dotted, mark=*, mark options={scale=0.5}, curve_color]
table {%
1.098632812500001e-14 nan
2.10754386149347e-12 0.0008341487261505076
1.823250306677074e-10 0.00577978461616846
9.374660177025951e-09 0.02502182152159316
3.176615892825976e-07 0.07636729548358656
7.427711111849641e-06 0.1764417360559714
0.0001218242121039507 0.3252944936744792
0.001393999126252267 0.5011870280666136
0.01082485661483491 0.6743537035389232
0.05385016432417706 0.8253424631597747
};
\addplot [line width=1pt, dash pattern=on 1pt off 3pt on 3pt off 3pt, mark=*, mark options={scale=0.5}, curve_color]
table {%
1.125000000000001e-11 nan
1.033124914169313e-09 0.004746357574545335
4.288836573600772e-08 0.0237745478508754
1.061687263355256e-06 0.07588778968875243
1.740161887398721e-05 0.175342168181984
0.0001981928594126702 0.3174054312577149
0.001600710316252678 0.4781752336883961
0.009185940986964916 0.6308449476114515
0.03698577800737643 0.7605149741685504
0.1020481002326944 0.8654232799868881
};
\addplot [line width=1pt, solid, mark=*, mark options={scale=0.5}, curve_color]
table {%
1.152000000000001e-08 nan
4.819199121093755e-07 0.02257454601310993
9.165719780273444e-06 0.08072110392566589
0.0001047698016210938 0.1899661521250132
0.0008018028148535163 0.3371746554904098
0.004333940525493167 0.4922738880103809
0.01703066487985108 0.630072170857406
0.04947473300660159 0.7411570808240348
0.1076777680959229 0.8277310452124949
0.1792666283997658 0.8956261972371486
};
\addplot [line width=1.5pt, dash pattern=on 1pt off 3pt on 3pt off 3pt, black]
table {%
0.01                0.28
0.05                0.61
0.10                0.74
0.20                0.88
0.30				0.95
0.40                0.97
0.60                1.00
};
 \addplot [line width=1.5pt, dotted, black]
 table {%
 0                  0
 1                  1
 };
\end{axis}

\end{tikzpicture}}
    \begin{minipage}[t]{15.5em}
      \vspace*{0.6em}\caption{$\zipfShape{\accountId{}}=0$}
      \label{fig:roc-phishing:zipf}
    \end{minipage}%
  \end{subfigure}
  \hspace*{-2.5em}
   \begin{subfigure}[b]{0.191\textwidth}
  \setlength\figureheight{1.9in}
    \centering
    \resizebox{!}{11.32em}{\begin{tikzpicture}

\pgfplotsset{every axis/.append style={
                    compat=1.3,
                    x label style={yshift=-1.5em, align=center},
                    label style={font=\small},
                    tick label style={font=\small}  
                    }}
\begin{axis}[
xmin=0, xmax=0.4,
ymin=0.3, ymax=1.0,
width=\figurewidth,
height=\figureheight,
xtick={0.0,0.2,0.4,0.6,0.8,1.0},
xticklabels={0.0,0.2,0.4,0.6,0.8,1.0},
ytick={0.0,0.2,0.4,0.6,0.8,1.0},
yticklabels={},
tick align=outside,
tick pos=left,
minor tick num=1,
xmajorgrids,
x grid style={lightgray!92.02614379084967!black},
ymajorgrids,
y grid style={lightgray!92.02614379084967!black},
grid=both,
]
\addplot [line width=1pt, densely dotted, mark=*, mark options={scale=0.5}, curve_color]
table {%
1.053619197914081e-14 nan
2.021206020411093e-12 0.0007516093953591207
1.748574157842211e-10 0.004455833685457988
8.990811320863967e-09 0.01737315518087013
3.046605572618659e-07 0.05073054268287702
7.123913241905954e-06 0.1192803269161564
0.0001168470820514333 0.2353299338692805
0.001337158032399866 0.3987576665591608
0.01038515410147282 0.5907089253568205
0.05168193495792753 0.781724343605827
};
\addplot [line width=1pt, dash pattern=on 1pt off 3pt on 3pt off 3pt, mark=*, mark options={scale=0.5}, curve_color]
table {%
1.078906058664019e-11 nan
9.908087026617161e-10 0.004276703703936957
4.113254821253148e-08 0.01789464114467321
1.018255864210856e-06 0.05127029501426805
1.669063033745341e-05 0.1150344970959065
0.0001901116362028778 0.2168939709430278
0.001535669785996978 0.3560017378975197
0.008815071042367581 0.5184897731373896
0.03551113055131241 0.6819601586894158
0.09808848134944723 0.8279727666474916
};
\addplot [line width=1pt, solid, mark=*, mark options={scale=0.5}, curve_color]
table {%
1.104799804071955e-08 nan
4.621881052760327e-07 0.02034078617817003
8.790936533185761e-06 0.05862579671479218
0.0001004963289492889 0.1241036003357133
0.0007692371061694528 0.2197636699659667
0.004159219967045387 0.3442184724697303
0.01635304433868196 0.486884492081387
0.04755230416581534 0.6298304011588711
0.1036734239310111 0.7574837803985834
0.1731294217738933 0.8636586769353314
};
\addplot [line width=1.5pt, dash pattern=on 1pt off 3pt on 3pt off 3pt, black]
table {%
0.01                0.28
0.05                0.61
0.10                0.74
0.20                0.88
0.30				0.95
0.40                0.97
0.60                1.00
};
 \addplot [line width=1.5pt, dotted, black]
 table {%
 0                  0
 1                  1
 };
\end{axis}

\end{tikzpicture}}
    \begin{minipage}[t]{15.5em}
      \vspace*{0.6em}\caption{$\setSize{\pwdDist{\accountId{}}}=5$}
      \label{fig:roc-phishing:pwds}
    \end{minipage}%
  \end{subfigure}
  \hspace*{-2.5em}
   \begin{subfigure}[b]{0.191\textwidth}
  \setlength\figureheight{1.9in}
    \centering
    \resizebox{!}{11.32em}{
\begin{tikzpicture}

\pgfplotsset{every axis/.append style={
                    compat=1.3,
                    x label style={yshift=-1.5em, align=center},
                    label style={font=\small},
                    tick label style={font=\small}  
                    }}
\begin{axis}[
xmin=0, xmax=0.4,
ymin=0.3, ymax=1.0,
width=\figurewidth,
height=\figureheight,
xtick={0.0,0.2,0.4,0.6,0.8,1.0},
xticklabels={0.0,0.2,0.4,0.6,0.8,1.0},
ytick={0.0,0.2,0.4,0.6,0.8,1.0},
yticklabels={},
tick align=outside,
tick pos=left,
minor tick num=1,
xmajorgrids,
x grid style={lightgray!92.02614379084967!black},
ymajorgrids,
y grid style={lightgray!92.02614379084967!black},
grid=both,
]
\addplot [line width=1pt, densely dotted, mark=*, mark options={scale=0.5}, curve_color]
table {%
9.613030577166449e-28 nan
3.670859832288267e-25 6.078104957119024e-07
6.661855955471085e-23 6.277869305404415e-06
7.640379893729252e-21 4.202950565546093e-05
6.211332634858669e-19 0.0002054217114069168
3.805276183148122e-17 0.0007832632565277775
1.823156475016578e-15 0.002433983295748066
6.99670647990524e-14 0.006368799571569106
2.185019023054405e-12 0.01442067093822985
5.609697241560033e-11 0.02896815713565426
1.191087287656837e-09 0.05284530330667747
2.096764403074673e-08 0.08935485718575287
3.058152798430364e-07 0.1421202968327605
3.681235697045247e-06 0.2141533994983802
3.630493509267302e-05 0.3059808251899079
0.0002900056228578935 0.413972181810161
0.001845485230009867 0.5306538070185257
0.009140344460600454 0.6474156812290042
0.03412798895697681 0.7577496571016393
0.09220418061698914 0.8586837229471741
};
\addplot [line width=1pt, dash pattern=on 1pt off 3pt on 3pt off 3pt, mark=*, mark options={scale=0.5}, curve_color]
table {%
1.007999315048289e-21 nan
1.833176606487475e-19 2.387283547067298e-05
1.585351818634625e-17 0.0001655124315526058
8.670434659479383e-16 0.0007546161745771718
3.364052074886584e-14 0.002554792601345501
9.845584691085849e-13 0.006884251293744148
2.256159034612624e-11 0.01548165364570775
4.14720831390677e-10 0.03015561723782356
6.214525033534059e-09 0.05253980751464238
7.672810046816697e-08 0.08426613477371736
7.857092311480445e-07 0.1273768580114645
6.695567735930744e-06 0.1842418780189654
4.750834366742681e-05 0.256414571083653
0.0002801063707417694 0.3429303289647162
0.001366056389168006 0.4395036166579873
0.005472122562321645 0.5396382734171832
0.01783824175888526 0.6370303664826924
0.04681842188451284 0.7275431002922979
0.09800785944756424 0.80962214617024
0.1634376947507078 0.8834707322284416
};
\addplot [line width=1pt, solid, mark=*, mark options={scale=0.5}, curve_color]
table {%
1.056963889776075e-15 nan
8.652609552179119e-14 0.000642202661213731
3.373220084737491e-12 0.002824585336654528
8.330716002923326e-11 0.008432867986716275
1.46255941547097e-09 0.01937887172132147
1.941627256874934e-08 0.03689343488343644
2.024221951474171e-07 0.06127158019387335
1.698992838502361e-06 0.0924830756746059
1.167773021134327e-05 0.1312084921632541
6.651200370908222e-05 0.1793018634509114
0.0003165034431483854 0.2389379398817096
0.001265355948317119 0.3107983569629607
0.004265764126145553 0.3926811444635957
0.01215592821945201 0.4797603643659162
0.02933897797089187 0.5663806785953649
0.06013067647104636 0.6480627458366822
0.1051596382964069 0.7224701157427316
0.1584170173016118 0.7891469460497708
0.2090319157635741 0.8486844356600578
0.2478134373236372 0.9020052657173742
};
\addplot [line width=1.5pt, dash pattern=on 1pt off 3pt on 3pt off 3pt, black]
table {%
0.01                0.28
0.05                0.61
0.10                0.74
0.20                0.88
0.30				0.95
0.40                0.97
0.60                1.00
};
 \addplot [line width=1.5pt, dotted, black]
 table {%
 0                  0
 1                  1
 };
\end{axis}

\end{tikzpicture}}
    \begin{minipage}[t]{15.5em}
      \vspace*{0.6em}\caption{$\nmbrResponders{\accountId{}}=20$}
      \label{fig:roc-phishing:sites}
    \end{minipage}%
  \end{subfigure}
  \hspace*{-2.5em}
   \begin{subfigure}[b]{0.191\textwidth}
  \setlength\figureheight{1.9in}
    \centering
    \resizebox{!}{11.32em}{\begin{tikzpicture}

\pgfplotsset{every axis/.append style={
                    compat=1.3,
                    x label style={yshift=-1.5em, align=center},
                    label style={font=\small},
                    tick label style={font=\small}  
                    }}
\begin{axis}[
xmin=0, xmax=0.4,
ymin=0.3, ymax=1.0,
width=\figurewidth,
height=\figureheight,
xtick={0.0,0.2,0.4,0.6,0.8,1.0},
xticklabels={0.0,0.2,0.4,0.6,0.8,1.0},
ytick={0.0,0.2,0.4,0.6,0.8,1.0},
yticklabels={},
tick align=outside,
tick pos=left,
minor tick num=1,
xmajorgrids,
x grid style={lightgray!92.02614379084967!black},
ymajorgrids,
y grid style={lightgray!92.02614379084967!black},
grid=both,
]
\addplot [line width=1pt, densely dotted, mark=*, mark options={scale=0.5}, curve_color]
table {%
9.84373978476361e-15 nan
1.888649568577597e-12 0.00194563180362683
1.634194063984849e-10 0.0104447111664725
8.404594498185106e-09 0.0421713566886117
2.848793292800541e-07 0.119733416768775
6.663972586191839e-06 0.257515460160462
0.0001093626100797433 0.383711473919577
0.001252518921526494 0.5098626003901729
0.009740481367446544 0.647172661248576
0.04858882603531809 0.787835458239267
};
\addplot [line width=1pt, dash pattern=on 1pt off 3pt on 3pt off 3pt, mark=*, mark options={scale=0.5}, curve_color]
table {%
1.007998953959794e-11 nan
9.259773836026651e-10 0.0110707646717117
3.845573190232552e-08 0.041094350376789
9.52443696077514e-07 0.11703431407095
1.562148723087143e-05 0.262433705914208
0.0001780801516982991 0.442681307315389
0.001440127041300108 0.573875720263742
0.008280448412106239 0.661551924670336
0.03344413447121299 0.749635325542137
0.09277947978450021 0.836225371151695
};
\addplot [line width=1pt, solid, mark=*, mark options={scale=0.5}, curve_color]
table {%
1.032190928854829e-08 nan
4.321049562032821e-07 0.0526545846065571
8.225645491548665e-06 0.154239811539004
9.413378146526999e-05 0.323183321335247
0.0007215263672467441 0.535139986676171
0.003908383052095502 0.525145593799845
0.01540524522573553 0.778415710255754
0.04495367247226023 0.812357511348438
0.0985004959262589 0.8457783423640159
0.1656589752200726 0.878578868547369
};
\addplot [line width=1.5pt, dash pattern=on 1pt off 3pt on 3pt off 3pt, black]
table {%
0.01                0.28
0.05                0.61
0.10                0.74
0.20                0.88
0.30				0.95
0.40                0.97
0.60                1.00
};
 \addplot [line width=1.5pt, dotted, black]
 table {%
 0                  0
 1                  1
 };
\end{axis}

\end{tikzpicture}}
    \begin{minipage}[t]{15em}
      \vspace*{0.6em}\caption{$\setSize{\afaSet{\accountId{}}}=5$}
      \label{fig:roc-phishing:afa}
    \end{minipage}%
  \end{subfigure}
  \hspace*{-2.5em}
   \begin{subfigure}[b]{0.191\textwidth}
  \setlength\figureheight{1.9in}
    \centering
    \resizebox{!}{11.32em}{\begin{tikzpicture}

\pgfplotsset{every axis/.append style={
                    compat=1.3,
                    x label style={yshift=-1.5em, align=center},
                    label style={font=\small},
                    tick label style={font=\small}  
                    }}
\begin{axis}[
xmin=0, xmax=0.4,
ymin=0.3, ymax=1.0,
width=\figurewidth,
height=\figureheight,
xtick={0.0,0.2,0.4,0.6,0.8,1.0},
xticklabels={0.0,0.2,0.4,0.6,0.8,1.0},
ytick={0.0,0.2,0.4,0.6,0.8,1.0},
yticklabels={},
tick align=outside,
tick pos=left,
minor tick num=1,
xmajorgrids,
x grid style={lightgray!92.02614379084967!black},
ymajorgrids,
y grid style={lightgray!92.02614379084967!black},
grid=both,
]
\addplot [line width=1pt, densely dotted, mark=*, mark options={scale=0.5}, curve_color]
table {%
1.312498637968482e-14 nan
2.518199424770129e-12 0.0005326008837275298
2.178925418646466e-10 0.003302902566832522
1.120612599758014e-08 0.01341376373109171
3.798391057067389e-07 0.0407305795603059
8.885296781589119e-06 0.09979703309456023
0.0001458168134396577 0.2060542929201179
0.001670025228701991 0.3662482622454436
0.01298730848992873 0.5673590677972817
0.06478510138042413 0.7784008781078263
};
\addplot [line width=1pt, dash pattern=on 1pt off 3pt on 3pt off 3pt, mark=*, mark options={scale=0.5}, curve_color]
table {%
1.343998605279725e-11 nan
1.23463651147022e-09 0.003030531797795111
5.127430920310071e-08 0.01336255668560449
1.269924928103352e-06 0.04003592393490352
2.082864964116191e-05 0.0936069071050758
0.0002374402022643989 0.1842438166805782
0.001920169388400144 0.3167876396363737
0.01104059788280832 0.4832217745979632
0.04459217929495066 0.6616609694052316
0.1237059730460003 0.82811815201693
};
\addplot [line width=1pt, solid, mark=*, mark options={scale=0.5}, curve_color]
table {%
1.376254571806439e-08 nan
5.761399416043762e-07 0.01441376433170904
1.096752732206489e-05 0.04427138603480663
0.00012551170862036 0.09837699642346764
0.0009620351563289922 0.1817295465784189
0.005211177402794003 0.297494853967277
0.02054032696764738 0.4408051223385712
0.05993822996301365 0.5954286625900969
0.1313339945683452 0.7415166834983242
0.2208786336267635 0.866908949320353
};
\addplot [line width=1.5pt, dash pattern=on 1pt off 3pt on 3pt off 3pt, black]
table {%
0.01                0.28
0.05                0.61
0.10                0.74
0.20                0.88
0.30				0.95
0.40                0.97
0.60                1.00
};
 \addplot [line width=1.5pt, dotted, black]
 table {%
 0                  0
 1                  1
 };
\end{axis}

\end{tikzpicture}}
    \begin{minipage}[t]{16em}
      \vspace*{0.6em}\caption{\parbox[t]{6.5em}{$\adsFPR{\adsCount}=0.40$ \\ $\adsTPR{\adsCount}=0.97$}}
      \label{fig:roc-phishing:cntads}
    \end{minipage}%
  \end{subfigure}
  
  \vspace*{-4.2em}
  \begin{subfigure}[b]{.43\columnwidth}
    \setlength\figureheight{2in}
    \begin{minipage}[b]{1\textwidth}
      \centering
      \hspace*{24.5em}
      \resizebox{!}{1.5em}{\begin{tikzpicture}
\node at (0,0)[
  scale=1,
  anchor=south,
  text=black,
  rotate=0
]{\csdFPR};
\end{tikzpicture}}\vspace*{-0.4em}
    \end{minipage}
  \end{subfigure}
   \vspace*{2.7em} 
  \caption{\textit{Phishing} attacker.  Baseline: $\setSize{\pwdDist{\accountId{}}}=4$, $\zipfShape{\accountId{}}=1$, 
    $\nmbrResponders{\accountId{}}=10$, $\setSize{\afaSet{\accountId{}}}=0$, $(\adsFPR{\adsCount}, \adsTPR{\adsCount})=(0.30, 0.95)$.}
  \label{fig:roc-phishing}
\end{figure*}

\begin{figure*}[t]

  \begin{subfigure}[b]{.1\columnwidth}
    \setlength\figureheight{2in}
    \centering
    \hspace*{4.0em}
    \resizebox{!}{2.8em}{\newenvironment{customlegend}[1][]{%
    \begingroup
    \csname pgfplots@init@cleared@structures\endcsname
    \pgfplotsset{#1}%
}{%
    \csname pgfplots@createlegend\endcsname
    \endgroup
}%

\def\addlegendimage{\csname pgfplots@addlegendimage\endcsname}

\begin{tikzpicture}

\begin{customlegend}[
    legend style={{font={\fontsize{10pt}{12}\selectfont}},{draw=none}},
    legend columns=3,
    legend cell align={left},
    legend entries={{$(\adsFPR{\adsCollect}, \adsTPR{\adsCollect}) = (0.01, 0.65)\quad$}, {$(\adsFPR{\adsCollect}, \adsTPR{\adsCollect}) = (0.02, 0.80)\quad$},{$(\adsFPR{\adsCollect}, \adsTPR{\adsCollect}) = (0.05, 0.93)\quad$},{$(\adsFPR{}, \adsTPR{})$~\cite{freeman2016:ads}}}]
\addlegendimage{line width=1pt, densely dotted, curve_color}
\addlegendimage{line width=1pt, dash pattern=on 1pt off 3pt on 3pt off 3pt, curve_color}
\addlegendimage{line width=1pt, solid, curve_color}
\addlegendimage{line width=1.5pt, dash pattern=on 1pt off 3pt on 3pt off 3pt, black, black}
\addlegendimage{line width=1.5pt, dotted, black}

\end{customlegend}

\end{tikzpicture}}
  \end{subfigure}
  
  \vspace*{0.5em}
 \hspace*{0em}
  \captionsetup[subfigure]{font=normalsize,labelfont=normalsize, oneside,margin={-0.5in,0in}}
  \begin{subfigure}[b]{0.195\textwidth}
  \setlength\figureheight{1.9in}
    \centering
    \resizebox{!}{11.32em}{\begin{tikzpicture}

\pgfplotsset{every axis/.append style={
                  ylabel={\csdTPR},
                    compat=1.3,
                    x label style={yshift=-1.5em, align=center},
                    label style={font=\small},
                    tick label style={font=\small}  
                    }}
\begin{axis}[
xmin=0, xmax=0.1,
ymin=0.3, ymax=1.0,
width=\figurewidth,
height=\figureheight,
xtick={0.00, 0.05, 0.10},
xticklabels={0.00, 0.05, 0.10},
ytick={0.1,0.3,0.5,0.7,0.9},
ytick={0.1,0.3,0.5,0.7,0.9},
tick align=outside,
tick pos=left,
minor tick num=1,
xmajorgrids,
x grid style={lightgray!92.02614379084967!black},
ymajorgrids,
y grid style={lightgray!92.02614379084967!black},
grid=both,
]
\addplot [line width=1pt, densely dotted, mark=*, mark options={scale=0.5}, curve_color]
table {%
3.359996513199312e-22 nan
3.332658451729938e-19 0.0009627577820225941
1.488115642761195e-16 0.005366100856432965
3.939931239592901e-14 0.01977989716685802
6.851200601367581e-12 0.05515942142506003
8.17942248634949e-10 0.1256565302795579
6.794719785629124e-08 0.2439157383859299
3.883786501480441e-06 0.4115491198356644
0.0001466811359796489 0.6110147828490484
0.003338508085261496 0.8118713487206077
};
\addplot [line width=1pt, dash pattern=on 1pt off 3pt on 3pt off 3pt, mark=*, mark options={scale=0.5}, curve_color]
table {%
3.440636429516095e-19 nan
1.692322639218634e-16 0.006238881127434426
3.74892754007669e-14 0.02357431434187873
4.927109725350221e-12 0.06194289113978368
4.256649309994701e-10 0.1301830180648214
2.527943984812967e-08 0.2353492296352623
1.046716362152842e-06 0.3776469619991263
2.992508584447574e-05 0.5443064637458329
0.000569169144875976 0.7127730006014348
0.006631160685159794 0.8637118631587491
};
\addplot [line width=1pt, solid, mark=*, mark options={scale=0.5}, curve_color]
table {%
3.281246594921204e-15 nan
6.295498561925324e-13 0.02419041755918094
5.447313546616164e-11 0.06510317699025503
2.801531499395035e-09 0.1308420255830325
9.495977642668471e-08 0.2247150391787237
2.22132419539728e-06 0.3483878587715873
3.645420335991443e-05 0.4942164964737625
0.0004175063071754979 0.6444413527706685
0.003246827122482181 0.7812622229580743
0.01619627534510603 0.8962171551184216
};
\addplot [line width=1.5pt, dash pattern=on 1pt off 3pt on 3pt off 3pt, black]
table {%
0.001				0.10
0.005				0.50
0.01                0.65
0.02				0.80
0.05                0.93
0.10                0.99
0.20                1.00
};
 \addplot [line width=1.5pt, dotted, black]
 table {%
 0                  0
 1                  1
 };
\end{axis}

\end{tikzpicture}}
    \begin{minipage}[t]{16em}
      \vspace*{0.6em}\caption{Baseline}
      \label{fig:roc-researching:baseline}
    \end{minipage}%
  \end{subfigure}
  \captionsetup[subfigure]{font=normalsize,labelfont=normalsize, oneside,margin={-0.8in,0in}}
  \hspace*{-0.5em}
  \begin{subfigure}[b]{0.189\textwidth}
  \setlength\figureheight{1.9in}
    \centering
    \resizebox{!}{11.32em}{\begin{tikzpicture}
\pgfplotsset{every axis/.append style={
                    compat=1.3,
                    x label style={yshift=-1.5em, align=center},
                    label style={font=\small},
                    tick label style={font=\small}  
                    }}
\begin{axis}[
xmin=0, xmax=0.1,
ymin=0.3, ymax=1.0,
width=\figurewidth,
height=\figureheight,
xtick={0.00, 0.05, 0.10},
xticklabels={0.00, 0.05, 0.10},
ytick={0.1,0.3,0.5,0.7,0.9},
yticklabels={},
tick align=outside,
tick pos=left,
minor tick num=1,
xmajorgrids,
x grid style={lightgray!92.02614379084967!black},
ymajorgrids,
y grid style={lightgray!92.02614379084967!black},
grid=both,
]
\addplot [line width=1pt, densely dotted, mark=*, mark options={scale=0.5}, curve_color]
table {%
3.750000000000002e-22 nan
3.719374971389772e-19 0.001539596002900323
1.660733518915177e-16 0.009632344223274303
4.396750863174058e-14 0.03785320981227025
7.645124878134763e-12 0.105627980904395
9.126550830175396e-10 0.2252757414643823
7.580663968173857e-08 0.3880742180965074
4.332308220615323e-06 0.5667098592796029
0.0001635756362678881 0.7334608416503727
0.003721008636267528 0.8746410442705561
};
\addplot [line width=1pt, dash pattern=on 1pt off 3pt on 3pt off 3pt, mark=*, mark options={scale=0.5}, curve_color]
table {%
3.840000000000002e-19 nan
1.888639970703126e-16 0.009976919039999976
4.183511819628908e-14 0.04323331584000001
5.497763713537112e-12 0.1208315750400001
4.749085144189792e-10 0.2483144294400002
2.819941160725803e-08 0.4075756185600001
1.167358708070388e-06 0.5689307289600001
3.336292306569925e-05 0.7104610384457144
0.00063420094854837 0.82536765696
0.007380759646320923 0.91668589056
};
\addplot [line width=1pt, solid, mark=*, mark options={scale=0.5}, curve_color]
table {%
3.662109375000005e-15 nan
7.025146204978234e-13 0.03868415387343516
6.077501022256915e-11 0.1225691470398447
3.124886725675317e-09 0.2596715014623633
1.058871964275325e-07 0.4228895137566303
2.475903703949881e-06 0.5782990374882497
4.060807070131691e-05 0.7068334320869716
0.0004646663754174224 0.8063659195211874
0.003608285538278303 0.882689813546556
0.01795005477472569 0.9422949186277142
};
\addplot [line width=1.5pt, dash pattern=on 1pt off 3pt on 3pt off 3pt, black]
table {%
0.001				0.10
0.005				0.50
0.01                0.65
0.02				0.80
0.05                0.93
0.10                0.99
0.20                1.00
};
 \addplot [line width=1.5pt, dotted, black]
 table {%
 0                  0
 1                  1
 };
\end{axis}

\end{tikzpicture}}
    \begin{minipage}[t]{15.5em}
      \vspace*{0.6em}\caption{$\zipfShape{\accountId{}}=0$}
      \label{fig:roc-researching:zipf}
    \end{minipage}%
  \end{subfigure}
  \hspace*{-2.4em}
   \begin{subfigure}[b]{0.189\textwidth}
  \setlength\figureheight{1.9in}
    \centering
    \resizebox{!}{11.32em}{\begin{tikzpicture}

\pgfplotsset{every axis/.append style={
                    compat=1.3,
                    x label style={yshift=-1.5em, align=center},
                    label style={font=\small},
                    tick label style={font=\small}  
                    }}
\begin{axis}[
xmin=0, xmax=0.1,
ymin=0.3, ymax=1.0,
width=\figurewidth,
height=\figureheight,
xtick={0.00, 0.05, 0.10},
xticklabels={0.00, 0.05, 0.10},
ytick={0.1,0.3,0.5,0.7,0.9},
yticklabels={},
tick align=outside,
tick pos=left,
minor tick num=1,
xmajorgrids,
x grid style={lightgray!92.02614379084967!black},
ymajorgrids,
y grid style={lightgray!92.02614379084967!black},
grid=both,
]
\addplot [line width=1pt, densely dotted, mark=*, mark options={scale=0.5}, curve_color]
table {%
3.596353528880061e-22 nan
3.566988354271634e-19 0.00138725239823545
1.592693538709152e-16 0.007376592870402376
4.216629440803551e-14 0.02607322750832342
7.33194438197329e-12 0.06987141097484373
8.752715589103375e-10 0.1526525562131748
7.270193590705917e-08 0.2830322290611604
4.154908882343889e-06 0.4554115884122414
0.0001568800429153053 0.6474122916023206
0.003568818465208868 0.8312574011904688
};
\addplot [line width=1pt, dash pattern=on 1pt off 3pt on 3pt off 3pt, mark=*, mark options={scale=0.5}, curve_color]
table {%
3.682666013573183e-19 nan
1.811262720552512e-16 0.008989699141311536
4.012125889676958e-14 0.03209679878257177
5.272561747357734e-12 0.08050570152652803
4.554577914922211e-10 0.162257314280686
2.704467869163058e-08 0.2808260384462379
1.119570998397059e-06 0.4301106830981204
3.199784042085918e-05 0.5925669601012995
0.0006082757437084467 0.7470633599863163
0.007079683471296426 0.8802715689817694
};
\addplot [line width=1pt, solid, mark=*, mark options={scale=0.5}, curve_color]
table {%
3.512063993046936e-15 nan
6.737353401370312e-13 0.03485634227000667
5.828580526140704e-11 0.08756505175358442
2.996937106954656e-09 0.1675034910376016
1.015535190872886e-07 0.2757991371621294
2.374637747301985e-06 0.4090120986943955
3.894902735047777e-05 0.5540218499278002
0.0004457193441332885 0.6927905971532103
0.003461718033824272 0.8126783044157522
0.01722731165264252 0.910706399019047
};
\addplot [line width=1.5pt, dash pattern=on 1pt off 3pt on 3pt off 3pt, black]
table {%
0.001				0.10
0.005				0.50
0.01                0.65
0.02				0.80
0.05                0.93
0.10                0.99
0.20                1.00
};
 \addplot [line width=1.5pt, dotted, black]
 table {%
 0                  0
 1                  1
 };
\end{axis}

\end{tikzpicture}}
    \begin{minipage}[t]{15.5em}
      \vspace*{0.6em}\caption{$\setSize{\pwdDist{\accountId{}}}=5$}
      \label{fig:roc-researching:pwds}
    \end{minipage}%
  \end{subfigure}
  \hspace*{-2.4em}
   \begin{subfigure}[b]{0.189\textwidth}
  \setlength\figureheight{1.9in}
    \centering
    \resizebox{!}{11.32em}{\begin{tikzpicture}

\pgfplotsset{every axis/.append style={
                    compat=1.3,
                    x label style={yshift=-1.5em, align=center},
                    label style={font=\small},
                    tick label style={font=\small}  
                    }}
\begin{axis}[
xmin=0, xmax=0.1,
ymin=0.3, ymax=1.0,
width=\figurewidth,
height=\figureheight,
xtick={0.00, 0.05, 0.10},
xticklabels={0.00, 0.05, 0.10},
ytick={0.1,0.3,0.5,0.7,0.9},
yticklabels={},
tick align=outside,
tick pos=left,
minor tick num=1,
xmajorgrids,
x grid style={lightgray!92.02614379084967!black},
ymajorgrids,
y grid style={lightgray!92.02614379084967!black},
grid=both,
]
\addplot [line width=1pt, densely dotted, mark=*, mark options={scale=0.5}, curve_color]
table {%
3.359997716827627e-42 nan
6.659055607871237e-39 2.117223450226469e-06
6.269354173946972e-36 1.934974143991131e-05
3.72829988235746e-33 0.0001150353113614777
1.570710941515119e-30 0.0005013849932465897
4.983249138570111e-28 0.001713410818948713
1.235389179344123e-25 0.004800828364542764
2.450683027592313e-23 0.01140950917264305
3.951118498354472e-21 0.02367143661713156
5.22872895983055e-19 0.04401982786394976
5.711203086912368e-17 0.07516783784632508
5.158631705019871e-15 0.1202220362320436
3.84719066561845e-13 0.1823737522376147
2.356769331019875e-11 0.2635793867377395
1.174889924678296e-09 0.3626361642376121
4.696682477658991e-08 0.4742505070076467
1.472340544367913e-06 0.5904161471350163
3.498067639156476e-05 0.7034533410902146
0.0005963468170941496 0.8084906754022171
0.006625522067673435 0.9038771813093204
};
\addplot [line width=1pt, dash pattern=on 1pt off 3pt on 3pt off 3pt, mark=*, mark options={scale=0.5}, curve_color]
table {%
3.523212965920246e-36 nan
3.459312432724864e-33 0.0001094265412316942
1.613694623707729e-30 0.0006279009772555133
4.755348596498721e-28 0.002394459882015698
9.92890824171713e-26 0.006863684349608379
1.56143430704718e-23 0.01588037072403548
1.91914373629018e-21 0.03115879681808664
1.887942762150366e-19 0.053923175215569
1.509904643819992e-17 0.08514046457462499
9.915526641433454e-16 0.126201472855254
5.377052815105064e-14 0.1792326858150611
2.4127774781948e-12 0.2462553845927546
8.946459316754665e-11 0.3274288981226932
2.72798583050395e-09 0.4197487423491627
6.78019557830426e-08 0.5174973785600502
1.354582810924985e-06 0.6143103408145136
2.130374126726999e-05 0.7053667784283733
0.0002556050310533321 0.7883265816203274
0.002228674453725565 0.862884030258817
0.01303768949003786 0.9297738131529472
};
\addplot [line width=1pt, solid, mark=*, mark options={scale=0.5}, curve_color]
table {%
3.20434352572215e-28 nan
1.223619944096089e-25 0.001912444946051473
2.220618651823695e-23 0.007033857261612431
2.546793297909751e-21 0.01791714497341812
2.070444211619556e-19 0.03589014014026248
1.268425394382708e-17 0.06088658709959727
6.077188250055259e-16 0.09213241299321961
2.332235493301747e-14 0.1294799980945445
7.283396743514684e-13 0.1743416634886614
1.869899080520011e-11 0.2292653650101922
3.970290958856122e-10 0.2961463315953574
6.989214676915577e-09 0.3742977828773554
1.019384266143455e-07 0.4599189173117196
1.227078565681749e-06 0.5474193865899337
1.210164503089101e-05 0.6315810471942496
9.666854095263119e-05 0.709081945607522
0.0006151617433366223 0.7787207086673172
0.003046781486866818 0.8407401699638236
0.01137599631899227 0.8960045572360942
0.03073472687232972 0.9454746320433821
};
\addplot [line width=1.5pt, dash pattern=on 1pt off 3pt on 3pt off 3pt, black]
table {%
0.001				0.10
0.005				0.50
0.01                0.65
0.02				0.80
0.05                0.93
0.10                0.99
0.20                1.00
};
 \addplot [line width=1.5pt, dotted, black]
 table {%
 0                  0
 1                  1
 };
\end{axis}

\end{tikzpicture}}
    \begin{minipage}[t]{15.5em}
      \vspace*{0.6em}\caption{$\nmbrResponders{\accountId{}}=20$}
      \label{fig:roc-researching:sites}
    \end{minipage}%
  \end{subfigure}
  \hspace*{-2.4em}
   \begin{subfigure}[b]{0.189\textwidth}
  \setlength\figureheight{1.9in}
    \centering
    \resizebox{!}{11.32em}{\begin{tikzpicture}

\pgfplotsset{every axis/.append style={
                    compat=1.3,
                    x label style={yshift=-1.5em, align=center},
                    label style={font=\small},
                    tick label style={font=\small}  
                    }}
\begin{axis}[
xmin=0, xmax=0.1,
ymin=0.3, ymax=1.0,
width=\figurewidth,
height=\figureheight,
xtick={0.00, 0.05, 0.10},
xticklabels={0.00, 0.05, 0.10},
ytick={0.1,0.3,0.5,0.7,0.9},
yticklabels={},
tick align=outside,
tick pos=left,
minor tick num=1,
xmajorgrids,
x grid style={lightgray!92.02614379084967!black},
ymajorgrids,
y grid style={lightgray!92.02614379084967!black},
grid=both,
]
\addplot [line width=1pt, densely dotted, mark=*, mark options={scale=0.5}, curve_color]
table {%
3.359996513199312e-22 nan
3.332658451729938e-19 0.00359107057778933
1.488115642761195e-16 0.0193241744617303
3.939931239592901e-14 0.06811257650160191
6.851200601367581e-12 0.175844129431334
8.17942248634949e-10 0.353883019294768
6.794719785629124e-08 0.463490101951056
3.883786501480441e-06 0.589479687041826
0.0001466811359796489 0.7219332765516659
0.003338508085261496 0.854692605083229
};
\addplot [line width=1pt, dash pattern=on 1pt off 3pt on 3pt off 3pt, mark=*, mark options={scale=0.5}, curve_color]
table {%
3.440636429516095e-19 nan
1.692322639218634e-16 0.0232709232513189
3.74892754007669e-14 0.0842920222490047
4.927109725350221e-12 0.209991380601951
4.256649309994701e-10 0.400762668380192
2.527943984812967e-08 0.624211968043257
1.046716362152842e-06 0.6966892289948931
2.992508584447574e-05 0.770659254742511
0.000569169144875976 0.843539512278799
0.006631160685159794 0.914505278073696
};
\addplot [line width=1pt, solid, mark=*, mark options={scale=0.5}, curve_color]
table {%
3.281246594921204e-15 nan
6.295498561925324e-13 0.0902298567545478
5.447313546616164e-11 0.230693768714765
2.801531499395035e-09 0.435549483729182
9.495977642668471e-08 0.6643549657216909
2.22132419539728e-06 0.86788099074327
3.645420335991443e-05 0.891140602365766
0.0004175063071754979 0.914205453481912
0.003246827122482181 0.936930887080938
0.01619627534510603 0.9592993442776629
};
\addplot [line width=1.5pt, dash pattern=on 1pt off 3pt on 3pt off 3pt, black]
table {%
0.001				0.10
0.005				0.50
0.01                0.65
0.02				0.80
0.05                0.93
0.10                0.99
0.20                1.00
};
 \addplot [line width=1.5pt, dotted, black]
 table {%
 0                  0
 1                  1
 };
\end{axis}

\end{tikzpicture}}
    \begin{minipage}[t]{15em}
      \vspace*{0.6em}\caption{$\setSize{\afaSet{\accountId{}}}=5$}
      \label{fig:roc-researching:afa}
    \end{minipage}%
  \end{subfigure}
  \hspace*{-2.4em}
   \begin{subfigure}[b]{0.189\textwidth}
  \setlength\figureheight{1.9in}
    \centering
    \resizebox{!}{11.32em}{\begin{tikzpicture}

\pgfplotsset{every axis/.append style={
                    compat=1.3,
                    x label style={yshift=-1.5em, align=center},
                    label style={font=\small},
                    tick label style={font=\small}  
                    }}
\begin{axis}[
xmin=0, xmax=0.1,
ymin=0.3, ymax=1.0,
width=\figurewidth,
height=\figureheight,
xtick={0.00, 0.05, 0.10},
xticklabels={0.00, 0.05, 0.10},
ytick={0.1,0.3,0.5,0.7,0.9},
yticklabels={},
tick align=outside,
tick pos=left,
minor tick num=1,
xmajorgrids,
x grid style={lightgray!92.02614379084967!black},
ymajorgrids,
y grid style={lightgray!92.02614379084967!black},
grid=both,
]
\addplot [line width=1pt, densely dotted, mark=*, mark options={scale=0.5}, curve_color]
table {%
6.719993026398624e-22 nan
6.665316903459876e-19 0.0009724826081033422
2.97623128552239e-16 0.005420303895387057
7.879862479185802e-14 0.01997969410793732
1.370240120273516e-11 0.0557165872980403
1.635884497269898e-09 0.1269257881611694
1.358943957125825e-07 0.2463795337231618
7.767573002960883e-06 0.4157061816521863
0.0002933622719592979 0.6171866493424731
0.006677016170522992 0.8200720694147553
};
\addplot [line width=1pt, dash pattern=on 1pt off 3pt on 3pt off 3pt, mark=*, mark options={scale=0.5}, curve_color]
table {%
6.881272859032191e-19 nan
3.384645278437268e-16 0.00630190012872156
7.497855080153379e-14 0.02381243872917049
9.854219450700442e-12 0.06256857690887241
8.513298619989403e-10 0.1314979980452742
5.055887969625934e-08 0.2377264945810731
2.093432724305684e-06 0.3814615777768954
5.985017168895149e-05 0.5498045088341748
0.001138338289751952 0.7199727278802373
0.01326232137031959 0.8724362254128779
};
\addplot [line width=1pt, solid, mark=*, mark options={scale=0.5}, curve_color]
table {%
6.562493189842408e-15 nan
1.259099712385065e-12 0.02443476521129384
1.089462709323233e-10 0.06576078483864134
5.603062998790071e-09 0.1321636622050834
1.899195528533694e-07 0.2269848880593168
4.442648390794559e-06 0.3519069280521083
7.290840671982885e-05 0.4992085822967299
0.0008350126143509957 0.6509508613845137
0.006493654244964363 0.7891537605637114
0.03239255069021207 0.9052698536549713
};
\addplot [line width=1.5pt, dash pattern=on 1pt off 3pt on 3pt off 3pt, black]
table {%
0.001				0.10
0.005				0.50
0.01                0.65
0.02				0.80
0.05                0.93
0.10                0.99
0.20                1.00
};
 \addplot [line width=1.5pt, dotted, black]
 table {%
 0                  0
 1                  1
 };
\end{axis}

\end{tikzpicture}}
    \begin{minipage}[t]{16em}
		\vspace*{0.6em}\caption{\parbox[t]{6.5em}{$\adsFPR{\adsCount}=0.20$ \\ $\adsTPR{\adsCount}=1.00$}}
      \label{fig:roc-researching:cntads}
    \end{minipage}%
  \end{subfigure}
  
  \vspace*{-4.2em}
  \begin{subfigure}[b]{.43\columnwidth}
    \setlength\figureheight{2in}
    \begin{minipage}[b]{1\textwidth}
      \centering
      \hspace*{24.75em}
      \resizebox{!}{1.5em}{\begin{tikzpicture}
\node at (0,0)[
  scale=1,
  anchor=south,
  text=black,
  rotate=0
]{\csdFPR};
\end{tikzpicture}}\vspace*{-0.4em}
    \end{minipage}
  \end{subfigure}
   \vspace*{2.7em} 
  \caption{\textit{Researching} attacker.  Baseline: $\setSize{\pwdDist{\accountId{}}}=4$, $\zipfShape{\accountId{}}=1$, 
    $\nmbrResponders{\accountId{}}=10$, $\setSize{\afaSet{\accountId{}}}=0$, $(\adsFPR{\adsCount}, \adsTPR{\adsCount})=(0.10, 0.99)$.}
  \label{fig:roc-researching}
\end{figure*}

\subsubsection{Trading off \csdTPR and \csdFPR}
\label{sec:stuffing:eff:rocs}
Given the above MDPs and resulting \csdTPR and \csdFPR measures, we
now explore how they vary together as \attackWidth is varied, for
fixed $(\adsFPR{\adsCollect}, \adsTPR{\adsCollect})$ and
$(\adsFPR{\adsCount}, \adsTPR{\adsCount})$ pairs and parameters
\zipfShape{\accountId{}}, \setSize{\pwdDist{\accountId{}}},
\nmbrResponders{\accountId{}}, and \setSize{\afaSet{\accountId{}}}.  The
$(\adsFPR{\adsCollect}, \adsTPR{\adsCollect})$ and
$(\adsFPR{\adsCount}, \adsTPR{\adsCount})$ pairs we consider were
drawn by inspection from ROC curves published by Freeman et
al.~\cite[\figrefstatic{4b}]{freeman2016:ads} for their ADS, for two
categories of attackers: a \textit{researching} attacker who issues
login attempts from the legitimate user's country (presumably after
researching that user), and a \textit{phishing} attacker who issues
login attempts both from the legitimate user's country and presenting
the same useragent string as the legitimate user would (presumably
after phishing the user).  In particular, phishing attackers are the
most powerful attackers considered by Freeman et al.  The curves
labeled $(\adsFPR{}, \adsTPR{})$ in \figref{fig:roc-phishing} and
\figref{fig:roc-researching} depict the ROC curves reported by Freeman
et al.  for \textit{phishing} and \textit{researching} attackers,
respectively.

\figref{fig:roc-phishing} shows representative ROC curves for a
\textit{phishing} attacker, and \figref{fig:roc-researching} shows
curves for a \textit{researching} attacker.  ``Baseline''
configurations, detailed in each figure's caption, are shown in
\figref{fig:roc-phishing:baseline} and
\figref{fig:roc-researching:baseline}. Each figure to the right of the
baseline shows the effects of strengthening security in one parameter,
starting from the baseline.  So, for example, starting from the
baseline, \figref{fig:roc-phishing:zipf} shows the effects of users
choosing passwords more uniformly (by changing
$\zipfShape{\accountId{}} = 1$ to $\zipfShape{\accountId{}} = 0$).
Similarly, \figref{fig:roc-phishing:pwds} shows the effects, again
starting from the baseline, of a user leveraging five passwords versus
only four (i.e., by changing $\setSize{\pwdDist{\accountId{}}}=4$ to
$\setSize{\pwdDist{\accountId{}}}=5$).

These ROC curves suggest that our credential-stuffing detector can be
highly effective in detecting credential stuffing without impinging
substantially on usability.  Notably, our detector is more effective
than simply using a state-of-the-art ADS~\cite{freeman2016:ads} for a
wide range of parameter settings.

Choosing a good operating point for our design depends on how a
credential-stuffing detection is treated at the detecting site.  An
aggressive response such as locking the account pending a password
reset (performed after two-factor authentication if deployed, or a
different intervention if not) would favor keeping \csdFPR small,
e.g., $\csdFPR < 0.05$.  A less aggressive response, such as invoking
two-factor authentication on every login attempt until the password is
reset, might allow a higher \csdFPR, e.g., $0.05 \le \csdFPR < 0.10$.
Simply warning the user might permit an even higher \csdFPR.

\section{The \DirectoryTerm}
\label{sec:directory}

Our framework in \secref{sec:stuffing} requires one website to run PMT
protocols as a \requesterTerm with other sites where the same user has
accounts.  This capability is similar to that implemented in previous
work~\cite{wang2019:reuse} using a \textit{\directoryTerm} that
stores, per account identifier \accountId{}, an address (possibly a
pseudonym) to contact each site where the account \accountId{} exists.
Assuming a one-round PMT protocol (as in \secref{sec:pmt}), the
\directoryTerm receives a PMT query from a \requesterTerm for an
account \accountId{} and forwards a copy of this query to each site
with the same account.  The \directoryTerm then receives every
\responderTerm's reply, permutes them randomly, and forwards the
responses back to the \requesterTerm in a batch.  By shuffling the
responses, the \directoryTerm ensures that the \requesterTerm learns
only the number of \respondersTerm that returned \boolTrue
(respectively, \boolFalse), not which ones, for good measure.  (The
\directoryTerm learns nothing about the private inputs to/outputs from
the PMT protocol by \requestersTerm and \respondersTerm.)  Since our
goal here is not to innovate in the design of scalable \directoryTerm
services---itself a topic with a long history, with many deployments
that far surpass our needs here, e.g.,~\cite{decandia2007:dynamo,
  noghabi2016:ambry}---we largely adopt this design in our
implementation (see \secref{sec:performance:impl}).  Below we address
two concerns about such a \directoryTerm specifically in our context,
however, namely the potentials for privacy risks and denials of
service.

\subsection{Privacy}
\label{sec:directory:privacy}

Among the design goals adopted in previous work~\cite{wang2019:reuse}
is hiding the identity of the \requesterTerm from the \respondersTerm
and the identity of each \responderTerm from other \respondersTerm and
the \requesterTerm.  The purpose of doing so is hiding where the user
has accounts, a property termed ``\accountLocationPrivacy''.  To this
end, the \requesterTerm and \respondersTerm either trust the
\directoryTerm to hide their identities (as an anonymizing proxy,
cf.,~\cite{boyan1997:anonymizer, gabber1999:lpwa}) or communicate with
the \directoryTerm using Tor~\cite{dingledine2004:tor}.

Unfortunately, \accountLocationPrivacy is impossible in our framework
against an active attacker: an attacker can attempt a login on account
\accountId{} at a site \siteId{} with a truly random password
\password{}, and if \accountId{} exists at \siteId{} and the attempt
is deemed abnormal, \password{} will be added to
\siteId{}{\suspiciousArray{\accountId{}}} under \pwdRule (or
\afaRule).  The attacker can then attempt to use \password{} in the
PMT protocol as a \requesterTerm, thereby learning whether some
\responderTerm returns a \boolTrue result; if so, then apparently
\accountId{} exists at \siteId{}.  This attack is of academic interest
only, since in practice, an attacker could equally easily determine
whether \accountId{} exists at \siteId{} by simply trying to establish
account \accountId{} at \siteId{}; most sites will inform the attacker
if \accountId{} already exists.  Still, our framework only further
renders irrelevant any attempts to hide where the user has accounts.

We thus settle for a weaker notion of privacy here, namely hiding the
identity of the \requesterTerm only, which will at least hide the site
at which the user is presently logging in.  As such, while in our
design the \requesterTerm still communicates to the \directoryTerm
using Tor if it does not trust the \directoryTerm to protect its
identity, there is no point in the \respondersTerm doing so; the
\respondersTerm receive requests directly from the \directoryTerm and
respond directly to it.  We refer to the model in which
\requestersTerm contact the \directoryTerm directly as
\trustedForLoginPrivacy (``trusted for login privacy''), and the model
in which \requestersTerm contact the \directoryTerm using Tor as
\untrustedForLoginPrivacy (``untrusted for login privacy'').

\subsection{Denials of Service}
\label{sec:directory:dos}

Like any critical service (cf., DNS), the \directoryTerm should employ
state-of-the-art defenses against blunt denial-of-service (DoS)
attempts (e.g., request overloading).
  If the \directoryTerm succumbs to such a DoS, then detecting
  credential stuffing will not be possible while the \directoryTerm is
  offline, and a site will incur a delay awaiting a timeout on the
  \directoryTerm for any login attempt with the correct password but
  for which $\adsDetected{\adsCount} = \boolTrue$.
  If the \directoryTerm is responsible for providing the salt for an
  account to each site having that account (see
  \secref{sec:stuffing:assumptions}), then a site \siteId{} with a
  newly created account \accountId{} will also be delayed in
  populating its \siteId{}{\suspiciousArray{\accountId{}}} set until
  the \directoryTerm recovers.

Our main concern here is whether the \directoryTerm introduces DoS
risks based on its particular functionality.  One such DoS risk
is associated with the process by which a website \siteId{} informs
the \directoryTerm that the user with identifier \accountId{} has
registered an account at \siteId{} and so \siteId{} should now be
consulted as a \responderTerm for \accountId{} in the framework of
\secref{sec:stuffing}.  The risk lies primarily in malicious actors
falsifying such registrations, e.g., potentially registering millions
of sites per identifier \accountId{}.

In our envisioned method of deploying our framework, this risk can be
managed.  For example, in
\secref{sec:performance:results:scalability}, we evaluate the
scalability of our design to support the U.S.\ airline, hotel, retail,
and consumer banking industries.  For a deployment by these
industries, the websites permitted to register as a \responderTerm for
an account \accountId{} can be limited to approved members of these
industry consortia.  The \directoryTerm can then limit each approved
member to at most one such registration for \accountId{}.  In doing
so, the \directoryTerm can enforce a limit on the number of site
registrations per account \accountId{}.  Moreover, owing to the
security guarantees of our framework (specifically, see
\secref{sec:pmt:security:requester}), a website has no motivation to
register for an account \accountId{} superfluously, since it learns
nothing as a \responderTerm in the protocol (except that the user for
\accountId{} is active at some website).

That said, if further limiting the registrations for account
\accountId{} is desirable, then the \directoryTerm can leverage the
online presence of the user when creating account \accountId{} at site
\siteId{} to confirm the request for \siteId{} to register as a
\responderTerm for \accountId{} at the \directoryTerm.  For example,
the \directoryTerm can send a confirmation email to the email address
\accountId{}, asking her to confirm that she created an account at
\siteId{}.  The registration attempt at the \directoryTerm is then
deferred until the user confirms it.

Not only do we contend that the \directoryTerm is not particularly
vulnerable to DoS, but it can also help in mitigating other DoS risks
of our framework:
\begin{itemize}[nosep,leftmargin=1em,labelwidth=*,align=left]
\item \textit{Defending \requestersTerm}: The primary
  DoS threat to a \requesterTerm is the possibility that some
  \respondersTerm always return a PMT protocol result indicating
  membership holds, increasing \csdFPR accordingly.  However, the
  \directoryTerm can ``audit'' \respondersTerm by issuing queries as a
  \requesterTerm itself with a truly random password, which should
  garner a \boolFalse result from every \responderTerm.  Any
  \responderTerm whose response generates a \boolTrue result is
  detected as misbehaving.

\item \textit{Defending \respondersTerm}:
  Permitting PMT queries against
  \siteId{}{\suspiciousArray{\accountId{}}} sets raises the
  possibility that an attacker will perform queries repeatedly to
  discover the contents of those sets.  (In particular, recall that
  \afaRule permits \siteId{}{\passwordArray{\accountId{}}} to be added
  to \siteId{}{\suspiciousArray{\accountId{}}}.)  A \responderTerm
  thus should rate-limit PMT queries, just as it would regular login
  attempts, to stem such online dictionary attacks.  However, for
  accounts experiencing an unusually high rate of queries, the
  \directoryTerm can pose CAPTCHAs~\cite{vonahn2004:captchas} back to
  the \requestersTerm as a precondition to forwarding their queries to
  \respondersTerm.  In this way, the limited PMT budgets of
  \respondersTerm can be allocated preferentially to \requestersTerm
  with real users, preventing bots from starving those
  \requestersTerm.
\end{itemize}

\section{Privately Testing Set Membership}
\label{sec:pmt}

An ingredient of our framework in \secref{sec:stuffing} is a protocol
by which a \requesterTerm \siteId{}, having received password
$\password{} = \siteId{}{\passwordArray{\accountId{}}}$ in a login
attempt for account \accountId{}, inquires with a \responderTerm
\siteIdAlt to determine whether $\password{} \in
\siteIdAlt{\suspiciousArray{\accountId{}}}$.  Because $\password{} =
\siteId{}{\passwordArray{\accountId{}}}$, it is important that the
protocol not disclose \password{} to \siteIdAlt.  Moreover, since
\siteIdAlt{\suspiciousArray{\accountId{}}} might contain
\siteIdAlt{\passwordArray{\accountId{}}} (see
\secref{sec:stuffing:algorithm}) or passwords similar to it, the
protocol should not divulge \siteId{}{\suspiciousArray{\accountId{}}}
to \siteId{}.  This specification is met by a \textit{private
  membership test} (PMT) protocol.

\subsection{The Need for a New Protocol}
\label{sec:pmt:need}

Several PMT protocols have been proposed
(e.g.,~\cite{nojima2009:bloom, meskanen2015:private, tamrakar2017:pmt,
  ramezanian2017:private, wang2019:reuse}).  In addition, PMT
protocols are closely related to private set-intersection (PSI;
surveyed by Pinkas et al.~\cite{pinkas2018:psi}) and private
set-intersection cardinality protocols (PSI-CA;
e.g.,~\cite{davidson2017:psi-ca, decristofaro2012:intersection,
  debnath2015:psi-ca, egert2015:union, kissner2005:psi-ca}).  In
particular, having the \requesterTerm in a PSI/PSI-CA protocol prove
in zero knowledge that its input is a set of size one yields a PMT
protocol.

Considering the additional requirements of our framework in
\secsref{sec:stuffing}{sec:directory} somewhat narrows the options for
implementing our PMT, however.  First, because our threat model
permits the \requesterTerm or \responderTerm to misbehave arbitrarily,
we require a protocol that accommodates the malicious behavior of
either party while still protecting the privacy of each party's input
to the protocol.  Second, minimizing rounds of communication in the
protocol is critical for the scalability of our framework, since these
rounds (each with a different website as \responderTerm) will traverse
wide-area links and---in the \untrustedForLoginPrivacy model (see
\secref{sec:directory:privacy})---an anonymous communication channel,
which will add even more overhead to each round.  For the same reason,
we wish to leverage bandwidth-efficient protocols to the extent
possible, and because \respondersTerm may need to respond to
significant numbers of PMT queries (as we will analyze in
\secref{sec:performance:results:scalability}), computational
efficiency for the \responderTerm is a secondary but still important
concern.

To our knowledge, among PSI protocols that are secure against
malicious behaviors (e.g.,~\cite{dachman2009:mal-psi,
  decristofaro2010:mal-psi, freedman2004:mal-psi,
  freedman2016:mal-psi, hazay2012:mal-psi, kamara2014:mal-psi,
  rindal2017:mal-psi, rindal2017:mal-psi2, thomas2019:stuffing,
  li2019:stuffing}), only those of De Cristofaro et
al.~\cite{decristofaro2010:mal-psi} and of Thomas et
al.~\cite{thomas2019:stuffing} and Li et al.~\cite{li2019:stuffing}
execute in one round.  However, the responses in these protocols are
of size $O(\pmtSetSize)$ ciphertexts for a set of size $\pmtSetSize$.
While there are several one-round PSI-CA protocols
(e.g.,~\cite{davidson2017:psi-ca, decristofaro2012:intersection,
  debnath2015:psi-ca, egert2015:union}), we are aware of none that
address malicious parties (without introducing a trusted third party,
cf.,~\cite{decristofaro2012:intersection,debnath2015:psi-ca}).

One strategy to improve performance has been to weaken security in
quantified ways against malicious parties.  For example, for an
integer $\genericBucketCapacity \le \pmtSetSize$, Thomas et
al.~\cite{thomas2019:stuffing} and Li et al.~\cite{li2019:stuffing}
explored protocols in which the \requesterTerm leaks $\log_2
\genericBucketCapacity$ bits of the \requesterTerm's input, in
exchange for reducing the response size to
$O(\pmtSetSize/\genericBucketCapacity)$ ciphertexts.
However, a protocol that gains efficiency by leaking information
only in the other direction (from \responderTerm to \requesterTerm)
is arguably more appropriate for our context, since the
\requesterTerm \siteId{} invokes the protocol with the correct
password, i.e., \siteId{}{\passwordArray{\accountId{}}}.
Ramezanian et al.~\cite{ramezanian2017:private} and Wang \&
Reiter~\cite{wang2019:reuse} proposed protocols whereby the
\responderTerm learns nothing about the \requesterTerm's element, but
the \requesterTerm learns more information about the \responderTerm's
set than just the truth of its membership query.  Specifically, in the
Ramezanian et al.\ protocol~\cite{ramezanian2017:private}, the
\responderTerm leaks its set to the \requesterTerm over
$O(\pmtSetSize/\genericBucketCapacity)$ responses, each of
$O(\genericBucketCapacity \log_2 \frac{1}{\pmtFPR})$ bits in size,
where \pmtFPR is a tunable false positive rate for the membership
test.  The Wang \& Reiter protocol~\cite{wang2019:reuse} leaks the
\responderTerm's set to a malicious \requesterTerm over $O(\pmtSetSize
\log_2 \frac{1}{\pmtFPR})$ responses, each of size only one
ciphertext.  The protocol that we propose here also
allows a malicious \requesterTerm to learn the \responderTerm's set
faster than the ideal---but only after $\Omega(\frac{1}{\pmtFPR})$
responses, much better than the Ramezanian et al.\ and Wang \& Reiter
protocols.  (Below we term this measure the ``extraction complexity''
of the protocol, and justify this claim in
\secref{sec:pmt:security:responder}.)  The request and response sizes
of our protocol are only $O(\pmtSetSize/\genericBucketCapacity)$ and
$O(\genericBucketCapacity)$ ciphertexts, respectively.

\subsection{Partially Homomorphic Encryption}
\label{sec:pmt:crypto}

Our protocol builds on a partially homomorphic encryption scheme
$\encScheme = \langle \keygen, \encrypt{}, \encZeroTest{},
\encAdd{[\cdot]}\rangle$ with these algorithms:

\begin{itemize}[nosep,leftmargin=1em,labelwidth=*,align=left]
\item \keygen is a randomized algorithm that on input $1^\secParam$
  outputs a public-key/private-key pair $\langle\pubKey,
  \privKey\rangle \gets \keygen(1^{\secParam})$.  The value of \pubKey
  identifies a prime \fieldOrder for which the \textit{plaintext
    space} for encrypting with \pubKey is the finite field $\langle
  \residues{\fieldOrder}, \fieldAdd, \fieldMult\rangle$ where
  \fieldAdd and \fieldMult are addition and multiplication modulo
  \fieldOrder, respectively.  For clarity below, we denote the
  additive identity by \fieldAddIdentity, the multiplicative identity
  by \fieldMultIdentity, and the additive inverse of $\plaintext \in
  \residues{\fieldOrder}$ by $\fieldNegative\plaintext$.  \pubKey also
  determines a \textit{ciphertext space} $\ciphertextSpace{\pubKey} =
  \bigcup_{\plaintext \in
    \residues{\fieldOrder}} \ciphertextSpace{\pubKey}(\plaintext)$,
  where $\ciphertextSpace{\pubKey}(\plaintext)$ denotes the
  ciphertexts for plaintext $\plaintext \in \residues{\fieldOrder}$.
\item \encrypt{} is a randomized algorithm that on input public key
  \pubKey and a plaintext $\plaintext \in \residues{\fieldOrder}$,
  outputs a ciphertext $\ciphertext{} \gets
  \encrypt{\pubKey}(\plaintext)$ chosen uniformly at random from
  $\ciphertextSpace{\pubKey}(\plaintext)$.
\item \encAdd{[\cdot]} is a randomized algorithm that, on input a
  public key \pubKey and ciphertexts $\ciphertext{1} \in
  \ciphertextSpace{\pubKey}(\plaintext{1})$ and $\ciphertext{2} \in
  \ciphertextSpace{\pubKey}(\plaintext{2})$, outputs a ciphertext
  $\ciphertext{} \gets \ciphertext{1} \encAdd{\pubKey} \ciphertext{2}$
  chosen uniformly at random from
  $\ciphertextSpace{\pubKey}(\plaintext{1}
  \fieldAdd \plaintext{2})$.
\item \encZeroTest{} is a deterministic algorithm that on input a
  private key \privKey and ciphertext $\ciphertext{}
  \in \ciphertextSpace{\pubKey}$, outputs a boolean
  $\encZeroTestResult \gets \encZeroTest{\privKey}(\ciphertext{})$
  where $\encZeroTestResult = \boolTrue$ iff $\ciphertext{}
  \in \ciphertextSpace{\pubKey}(\fieldAddIdentity)$.
\end{itemize}

Note that our protocol does not require an efficient decryption
capability.  Indeed, the instantiation of this scheme that we
leverage, described in \appref{sec:pmt:elgamal}, does not support
one---though it does support an efficient \encZeroTest{} calculation.

\subsection{Additional Operators}
\label{sec:pmt:operators}

To express our protocol, it will be convenient to define a few
additional operators involving ciphertexts.  These additional
operators can all be expressed using the operators given in
\secref{sec:pmt:crypto}, and so require no new functionality from the
cryptosystem.  Below, ``$\genericRV \distEqual \genericRVAlt$''
denotes that random variables \genericRV and \genericRVAlt are
distributed identically; ``$\genericMatrix
\in (\genericSet)^{\matrixNmbrRows \times \matrixNmbrCols}$'' means
that \genericMatrix is an \matrixNmbrRows-row, \matrixNmbrCols-column
matrix of elements in the set \genericSet; and
``\matrixComponent{\genericMatrix}{\matrixRowIdx}{\matrixColIdx}''
denotes the row-\matrixRowIdx, column-\matrixColIdx element of the
matrix \genericMatrix.
\begin{itemize}[nosep,leftmargin=1em,labelwidth=*,align=left]
\item \encSum{\pubKey} denotes summing a sequence
  using \encAdd{\pubKey}, i.e.,
  \[
    \encSum{\pubKey}{\encSumIdx=1}{\genericNat} \ciphertext{\encSumIdx}
    \distEqual \ciphertext{1} \encAdd{\pubKey} \ciphertext{2} \encAdd{\pubKey}
    \ldots \encAdd{\pubKey} \ciphertext{\genericNat}
  \]

\item If $\ciphertextMatrix \in
  (\ciphertextSpace{\pubKey})^{\matrixNmbrRows \times
  \matrixNmbrCols}$ and $\ciphertextMatrixAlt \in
  (\ciphertextSpace{\pubKey})^{\matrixNmbrRows \times
  \matrixNmbrCols}$, then $\ciphertextMatrix \encMatrixAdd{\pubKey}
  \ciphertextMatrixAlt \in
  (\ciphertextSpace{\pubKey})^{\matrixNmbrRows \times
    \matrixNmbrCols}$ is the result of component-wise addition using
  \encAdd{\pubKey}, i.e., so that
  \[
  \matrixComponent{\ciphertextMatrix \encMatrixAdd{\pubKey}
    \ciphertextMatrixAlt}{\matrixRowIdx}{\matrixColIdx} \distEqual
  \matrixComponent{\ciphertextMatrix}{\matrixRowIdx}{\matrixColIdx}
  \encAdd{\pubKey}
  \matrixComponent{\ciphertextMatrixAlt}{\matrixRowIdx}{\matrixColIdx}
  \]

\item If $\plaintextMatrix \in (\residues{\fieldOrder})^{\matrixNmbrRows
  \times \matrixNmbrCols}$ and $\ciphertextMatrix \in
  (\ciphertextSpace{\pubKey})^{\matrixNmbrRows \times
  \matrixNmbrCols}$, then $\plaintextMatrix
  \encMatrixHadamardMult{\pubKey} \ciphertextMatrix \in
  (\ciphertextSpace{\pubKey})^{\matrixNmbrRows \times
    \matrixNmbrCols}$ is the result of Hadamard (i.e., component-wise)
  ``scalar multiplication'' using repeated application of
  \encAdd{\pubKey}, i.e., so that
  \[
  \matrixComponent{\plaintextMatrix \encMatrixHadamardMult{\pubKey}
    \ciphertextMatrix}{\matrixRowIdx}{\matrixColIdx} \distEqual
  \encSum{\pubKey}{\encSumIdx =
    1}{\matrixComponent{\plaintextMatrix}{\matrixRowIdx}{\matrixColIdx}}
  \matrixComponent{\ciphertextMatrix}{\matrixRowIdx}{\matrixColIdx}
  \]

\item If $\plaintextMatrix \in (\residues{\fieldOrder})^{\matrixNmbrRows \times
  \matrixNmbrCols}$ and $\ciphertextMatrix \in
  (\ciphertextSpace{\pubKey})^{\matrixNmbrCols \times
  \matrixNmbrColsAlt}$, then $\plaintextMatrix
  \encMatrixScalarMult{\pubKey} \ciphertextMatrix \in
  (\ciphertextSpace{\pubKey})^{\matrixNmbrRows \times
    \matrixNmbrColsAlt}$ is the result of standard matrix
  multiplication using \encAdd{\pubKey} and ``scalar multiplication''
  using repeated application of \encAdd{\pubKey}, i.e., so that
  \[
  \matrixComponent{\plaintextMatrix \encMatrixScalarMult{\pubKey}
    \ciphertextMatrix}{\matrixRowIdx}{\matrixColIdx} \distEqual
  \encSum{\pubKey}{\encSumIdx = 1}{\matrixNmbrCols}
  \encSum{\pubKey}{\encSumIdxAlt =
    1}{\matrixComponent{\plaintextMatrix}{\matrixRowIdx}{\encSumIdx}}
  \matrixComponent{\ciphertextMatrix}{\encSumIdx}{\matrixColIdx}
  \]
\end{itemize}

\subsection{Cuckoo Filters}
\label{sec:pmt:cuckoo}

Our PMT protocol, called \cuckooPMT, uses a cuckoo
filter~\cite{fan2014:cuckoo} as an underlying building block.  A
cuckoo filter is a set representation that supports insertion and
deletion of elements, as well as testing membership.  The cuckoo
filter uses a ``fingerprint'' function $\cuckooFingerprintFn:
\{0,1\}^* \rightarrow \cuckooFingerprintRange$ and a hash function
$\cuckooHashFn: \{0, 1\}^* \rightarrow \nats{\cuckooNmbrBuckets}$,
where for an integer \genericNat, the notation ``\nats{\genericNat}''
denotes $\{1, \ldots, \genericNat\}$, and where \cuckooNmbrBuckets is
a number of ``buckets''.  We require that $\cuckooFingerprintRange
\subset \residues{\fieldOrder} \setminus \{\fieldAddIdentity\}$ for
any \fieldOrder determined by $\langle\pubKey, \privKey\rangle \gets
\keygen(1^{\secParam})$, and that members of \cuckooFingerprintRange
can be distinguished from members of $\residues{\fieldOrder} \setminus
\cuckooFingerprintRange$ using a public predicate.  (For example,
defining \cuckooFingerprintRange to be the odd elements of
\residues{\fieldOrder} would suffice.)  For an integer bucket
``capacity'' \cuckooBucketCapacity, the cuckoo filter data structure
is a \cuckooBucketCapacity-row, \cuckooNmbrBuckets-column matrix
\cuckooFilter of elements in \residues{\fieldOrder}, i.e.,
$\cuckooFilter \in (\residues{\fieldOrder})^{\cuckooBucketCapacity
  \times \cuckooNmbrBuckets}$.  Then, the cuckoo filter
contains the element \pmtSetElmt if and only if there exists
$\matrixRowIdx \in \nats{\cuckooBucketCapacity}$ such that either
\begin{align}
  \matrixComponent{\cuckooFilter}{\matrixRowIdx}{\cuckooHashFn(\pmtSetElmt)}
  & = \cuckooFingerprintFn(\pmtSetElmt) & \mbox{or} \label{eqn:cuckooOptionOne}\\
  \matrixComponent{\cuckooFilter}{\matrixRowIdx}{\cuckooHashFn(\pmtSetElmt) \oplus \cuckooHashFn(\cuckooFingerprintFn(\pmtSetElmt))}
  & = \cuckooFingerprintFn(\pmtSetElmt) \label{eqn:cuckooOptionTwo}
\end{align}
Cuckoo filters permit false positives
(membership tests that return \boolTrue for elements not previously
added or already removed) with a probability \pmtFPR that, for fixed
\cuckooBucketCapacity, can be decreased by increasing the size of
\cuckooFingerprintRange~\cite{fan2014:cuckoo}.

\subsection{Protocol Description}
\label{sec:pmt:protocol}

Our protocol is illustrated in \figref{fig:protocol}, where the steps
performed by the \requesterTerm \requester with input \pmtSetElmt are
shown on the left in
\linesref{prot:requester:init}{prot:requester:return} (in addition to
sending \msgref{prot:msg:request}), and the steps performed by the
\responderTerm \responder{} with cuckoo filter \cuckooFilter are shown
on the right in
\linesref{prot:responder:checkCiphertexts}{prot:responder:makeResult}
(in addition to sending \msgref{prot:msg:response}).  The protocol
returns \boolTrue to \requester if \pmtSetElmt is in the cuckoo filter
\cuckooFilter and \boolFalse otherwise.

In our protocol, \requester creates a \cuckooNmbrBuckets-row,
$2$-column matrix \queryMatrix of ciphertexts, where the first column
contains a ciphertext of \fieldMultIdentity in row
$\cuckooHashFn(\pmtSetElmt)$ and ciphertexts of \fieldAddIdentity in
other rows, and where the second column contains a ciphertext of
\fieldMultIdentity in row $\cuckooHashFn(\pmtSetElmt) \oplus
\cuckooHashFn(\cuckooFingerprintFn(\pmtSetElmt))$ and ciphertexts of
\fieldAddIdentity in others
(\lineref{prot:requester:calculateQueryMatrix}).  The \requesterTerm
also generates a ciphertext \cuckooFingerprintCtext of
$\fieldNegative\cuckooFingerprintFn(\pmtSetElmt)$
(\lineref{prot:requester:calculateFingerprintCtext}), and sends this
ciphertext and the matrix \queryMatrix to \responder{}, along with the
public key \pubKey (\msgref{prot:msg:request}).  After checking in
\lineref{prot:responder:checkCiphertexts} that
$\cuckooFingerprintCtext \in \ciphertextSpace{\pubKey}$, and
$\queryMatrix \in (\ciphertextSpace{\pubKey})^{\cuckooNmbrBuckets
  \times 2}$ (and that \pubKey is well-formed, which is left implicit
in \figref{fig:protocol}), \responder{} generates a matrix
$\cuckooFingerprintCtextMatrix \in
(\ciphertextSpace{\pubKey})^{\cuckooBucketCapacity \times 2}$ having a
copy of \cuckooFingerprintCtext in each component
(\lineref{prot:responder:multFingerprintCtext}) and a matrix
$\plaintextMatrix \in (\residues{\fieldOrder})^{\cuckooBucketCapacity
  \times 2}$ of random elements of $\residues{\fieldOrder} \setminus
\{\fieldAddIdentity\}$ (\lineref{prot:responder:randMatrix}).
\responder{} then forms the response matrix $\resultMatrix \gets
\plaintextMatrix \encMatrixHadamardMult{\pubKey}
\left(\left(\cuckooFilter \encMatrixScalarMult{\pubKey}
\queryMatrix\right) \encMatrixAdd{\pubKey}
\cuckooFingerprintCtextMatrix\right)$, which is best understood
component-wise:
$\matrixComponent{\resultMatrix}{\matrixRowIdx}{\matrixColIdx}$ is a
ciphertext of a random element of $\residues{\fieldOrder} \setminus
\{\fieldAddIdentity\}$ if \matrixComponent{\left(\cuckooFilter
  \encMatrixScalarMult{\pubKey} \queryMatrix\right)
  \encMatrixAdd{\pubKey}
  \cuckooFingerprintCtextMatrix}{\matrixRowIdx}{\matrixColIdx} is a
ciphertext of anything other than \fieldAddIdentity, since
\matrixComponent{\plaintextMatrix}{\matrixRowIdx}{\matrixColIdx} is
chosen at random from $\residues{\fieldOrder} \setminus
\{\fieldAddIdentity\}$.  Moreover,
\matrixComponent{\left(\cuckooFilter \encMatrixScalarMult{\pubKey}
  \queryMatrix\right) \encMatrixAdd{\pubKey}
  \cuckooFingerprintCtextMatrix}{\matrixRowIdx}{\matrixColIdx} is an
encryption of \fieldAddIdentity iff \matrixComponent{\cuckooFilter
  \encMatrixScalarMult{\pubKey}
  \queryMatrix}{\matrixRowIdx}{\matrixColIdx} is a ciphertext of
$\cuckooFingerprintFn(\pmtSetElmt)$, since
\matrixComponent{\cuckooFingerprintCtextMatrix}{\matrixRowIdx}{\matrixColIdx}
is a ciphertext of $\fieldNegative\cuckooFingerprintFn(\pmtSetElmt)$.
And \matrixComponent{\cuckooFilter \encMatrixScalarMult{\pubKey}
  \queryMatrix}{\matrixRowIdx}{\matrixColIdx} is a ciphertext of
$\cuckooFingerprintFn(\pmtSetElmt)$ iff either
\eqnref{eqn:cuckooOptionOne} holds (since
\matrixComponent{\queryMatrix}{\cuckooHashFn(\pmtSetElmt)}{1} is an
encryption of \fieldMultIdentity) or \eqnref{eqn:cuckooOptionTwo}
holds (since \matrixComponent{\queryMatrix}{\cuckooHashFn(\pmtSetElmt)
  \oplus \cuckooHashFn(\cuckooFingerprintFn(\pmtSetElmt))}{2} is an
encryption of \fieldMultIdentity).  So, if \requester and \responder{}
behave correctly, the protocol returns \boolTrue to \requester iff
\pmtSetElmt is an element of the cuckoo filter \cuckooFilter.

\newcolumntype{R}{>{$}p{0.36in}<{$}}
\newcolumntype{M}{>{$}p{0.36in}<{$}}
\begin{figure}[t]
  \centering
  \setcounter{requesterLineNmbr}{0}
  \setcounter{responderLineNmbr}{0}
  \setcounter{messageNmbr}{0}
  \framebox{\small
     $\begin{array}{@{}r@{\hspace{2pt}}R@{}r@{}M@{}r@{\hspace{2pt}}l@{}} 
      & \multicolumn{1}{c}{\mathrlap{~~~~~\requester(\pmtSetElmt)}}
      & & & & \multicolumn{1}{c}{\responder{}(\cuckooFilter)} \\[10pt]
      \requesterLabel{prot:requester:init}
      & \mathrlap{\langle\pubKey, \privKey\rangle \gets \keygen(1^{\secParam})}
      \\
      \requesterLabel{prot:requester:calculateFingerprintCtext}
      & \mathrlap{\cuckooFingerprintCtext \gets
        \encrypt{\pubKey}(\fieldNegative\cuckooFingerprintFn(\pmtSetElmt))}
      \\
      \requesterLabel{prot:requester:bucketIndexOne}
      & \mathrlap{\cuckooBucketOne \gets \cuckooHashFn(\pmtSetElmt)}
      \\ 
      \requesterLabel{prot:requester:bucketIndexTwo}
      & \mathrlap{\cuckooBucketTwo \gets \cuckooHashFn(\pmtSetElmt) \oplus 
        \cuckooHashFn(\cuckooFingerprintFn(\pmtSetElmt))}
      \\
      \requesterLabel{prot:requester:calculateQueryMatrix}
      & \mathrlap{\forall\matrixRowIdx \in \nats{\cuckooNmbrBuckets}:} \\
      & \mathrlap{\matrixComponent{\queryMatrix}{\matrixRowIdx}{1} \gets
        \left\{
           \begin{array}{ll}
             \encrypt{\pubKey}(\fieldMultIdentity) &
             \mbox{if~} \matrixRowIdx = \cuckooBucketOne \\
             \encrypt{\pubKey}(\fieldAddIdentity) & \mbox{otherwise}
           \end{array}
        \right.
        }
      \\
      & \mathrlap{\matrixComponent{\queryMatrix}{\matrixRowIdx}{2} \gets
        \left\{
           \begin{array}{ll}
             \encrypt{\pubKey}(\fieldMultIdentity) &
             \mbox{if~} \matrixRowIdx = \cuckooBucketTwo \\
             \encrypt{\pubKey}(\fieldAddIdentity) & \mbox{otherwise}
           \end{array}
        \right.
        }
      \\[20pt]
      & & \messageLabel{prot:msg:request}
      & \mathrlap{\xrightarrow{\makebox[9.75em]{$\pubKey, \cuckooFingerprintCtext, \queryMatrix$}}}
      \\[12pt]
      & & & & \responderLabel{prot:responder:checkCiphertexts}
      & \codeAbort \codeIf \cuckooFingerprintCtext \not\in \ciphertextSpace{\pubKey}
        \vee \queryMatrix \not\in (\ciphertextSpace{\pubKey})^{\cuckooNmbrBuckets \times 2}
      \\
      & & & & \responderLabel{prot:responder:multFingerprintCtext}
      & \forall \matrixRowIdx \in \nats{\cuckooBucketCapacity}, \matrixColIdx \in \nats{2}:
        \matrixComponent{\cuckooFingerprintCtextMatrix}{\matrixRowIdx}{\matrixColIdx} \gets
        \cuckooFingerprintCtext \\
      & & & & \responderLabel{prot:responder:randMatrix}
      & \forall \matrixRowIdx \in \nats{\cuckooBucketCapacity}, \matrixColIdx \in \nats{2}:
        \matrixComponent{\plaintextMatrix}{\matrixRowIdx}{\matrixColIdx} \getsr \residues{\fieldOrder} \setminus \{\fieldAddIdentity\}\!\! 
      \\
      & & & & \responderLabel{prot:responder:makeResult}
      & \resultMatrix \gets \plaintextMatrix \encMatrixHadamardMult{\pubKey}
        \left(\left(\cuckooFilter \encMatrixScalarMult{\pubKey} \queryMatrix\right)
        \encMatrixAdd{\pubKey} \cuckooFingerprintCtextMatrix\right)
      \\[12pt]
      & & \messageLabel{prot:msg:response}
      & \mathrlap{\xleftarrow{\makebox[9.75em]{\resultMatrix}}}
      \\[12pt]
      \requesterLabel{prot:requester:checkCiphertexts}
      & \mathrlap{\codeAbort \codeIf \resultMatrix \not\in (\ciphertextSpace{\pubKey})^{\cuckooBucketCapacity \times 2}}
      \\
      \requesterLabel{prot:requester:return}
      & \mathrlap{\codeReturn \displaystyle{\bigvee_{\matrixRowIdx, \matrixColIdx}}~
        \encZeroTest{\privKey}(\matrixComponent{\resultMatrix}{\matrixRowIdx}{\matrixColIdx})}
      \\
    \end{array}$
  }
  \caption{\cuckooPMT with \requesterTerm \requester and \responderTerm
    \responder{}, described in \secref{sec:pmt}.  Matrix types
    are: $\queryMatrix \in
    (\ciphertextSpace{\pubKey})^{\cuckooNmbrBuckets \times 2}$;
    $\cuckooFilter \in (\residues{\fieldOrder})^{\cuckooBucketCapacity
      \times \cuckooNmbrBuckets}$; $\cuckooFingerprintCtextMatrix \in
    (\ciphertextSpace{\pubKey})^{\cuckooBucketCapacity \times 2}$;
    $\plaintextMatrix \in
    (\residues{\fieldOrder})^{\cuckooBucketCapacity \times 2}$; and
    $\resultMatrix \in
    (\ciphertextSpace{\pubKey})^{\cuckooBucketCapacity \times 2}$.}
  \label{fig:protocol}
\end{figure}

For (an artificially small) example, suppose $\cuckooNmbrBuckets = 3$,
$\cuckooBucketCapacity = 1$, and that the \requesterTerm \requester{}
queries the membership of element \pmtSetElmt such that
$\cuckooBucketOne = \cuckooHashFn(\pmtSetElmt) = 3$ and
$\cuckooBucketTwo = \cuckooHashFn(\pmtSetElmt) \oplus
\cuckooHashFn(\cuckooFingerprintFn(\pmtSetElmt)) = 2$.  The
\responderTerm \responder{} generates a cuckoo filter $\cuckooFilter
\in (\residues{\fieldOrder})^{1 \times 3}$ based on its input
set. Here we assume \pmtSetElmt is in \responder{}'s set, as indicated
by $\matrixComponent{\cuckooFilter}{1}{3} =
\cuckooFingerprintFn(\pmtSetElmt)$.  Then, 
\begin{align*}
\lefteqn{\cuckooFilter
  \encMatrixScalarMult{\pubKey} \queryMatrix} \\
& \distEqual 
\left[\begin{array}{lll}
    \plaintext{1} & \plaintext{2}
    & \cuckooFingerprintFn(\pmtSetElmt)
\end{array}\right]
\encMatrixScalarMult{\pubKey}
\left[\begin{array}{ll}
\encrypt{\pubKey}(\fieldAddIdentity) & \encrypt{\pubKey}(\fieldAddIdentity) \\
\encrypt{\pubKey}(\fieldAddIdentity) & \encrypt{\pubKey}(\fieldMultIdentity) \\
\encrypt{\pubKey}(\fieldMultIdentity) & \encrypt{\pubKey}(\fieldAddIdentity)
  \end{array}\right] \\
& \distEqual \left[\begin{array}{ll}
    \encrypt{\pubKey}(\cuckooFingerprintFn(\pmtSetElmt)) & \encrypt{\pubKey}(\plaintext{2})
    \end{array}\right]
\end{align*}
where \plaintext{1} and \plaintext{2} are elements of
\residues{\fieldOrder} that, barring a collision in the output of
\cuckooFingerprintFn, are not equal to
$\cuckooFingerprintFn(\pmtSetElmt)$.  So, $(\cuckooFilter
\encMatrixScalarMult{\pubKey} \queryMatrix) \encMatrixAdd{\pubKey}
\cuckooFingerprintCtextMatrix \distEqual \left[ \begin{array}{ll}
    \encrypt{\pubKey}(\fieldAddIdentity) &
    \encrypt{\pubKey}(\plaintext{3}) \end{array}\right]$ where
$\plaintext{3} = \plaintext{2} \fieldMinus
\cuckooFingerprintFn(\pmtSetElmt) \neq \fieldAddIdentity$, again
assuming $\plaintext{2} \neq \cuckooFingerprintFn(\pmtSetElmt)$, and
thus $\plaintextMatrix \encMatrixHadamardMult{\pubKey}
\left(\left(\cuckooFilter \encMatrixScalarMult{\pubKey}
\queryMatrix\right) \encMatrixAdd{\pubKey}
\cuckooFingerprintCtextMatrix \right) \distEqual
\left[ \begin{array}{ll} \encrypt{\pubKey}(\fieldAddIdentity) &
    \encrypt{\pubKey}(\plaintext{4}) \end{array}\right]$, where
\plaintext{4} is distributed uniformly in $\residues{\fieldOrder}
\setminus \{\fieldAddIdentity\}$.

\subsection{Security Against a Malicious \RequesterTerm}
\label{sec:pmt:security:responder}

If the \responderTerm follows the protocol, then the \textit{only}
information encoded in each
\matrixComponent{\resultMatrix}{\matrixRowIdx}{\matrixColIdx} is
$\encZeroTest{\privKey}(\matrixComponent{\resultMatrix}{\matrixRowIdx}{\matrixColIdx})$,
as a corollary of the following two propositions.

\begin{prop}
\label{prop:comp:responder-security-one}
If the \responderTerm follows the protocol, then
$\cprob{\big}{\matrixComponent{\resultMatrix}{\matrixRowIdx}{\matrixColIdx}
  \in \ciphertextSpace{\pubKey}(\plaintext)}{\matrixComponent{\resultMatrix}{\matrixRowIdx}{\matrixColIdx}
  \not\in \ciphertextSpace{\pubKey}(\fieldAddIdentity)} =
\frac{1}{\fieldOrder-1}$ for any $\matrixRowIdx \in
\nats{\cuckooBucketCapacity}$, $\matrixColIdx \in \nats{2}$,
and $\plaintext \in \residues{\fieldOrder} \setminus
\{\fieldAddIdentity\}$.
\end{prop}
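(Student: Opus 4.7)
The plan is to trace the plaintext of $\matrixComponent{\resultMatrix}{\matrixRowIdx}{\matrixColIdx}$ through the responder's computation and isolate the single source of randomness that determines which nonzero plaintext class it lands in. The key observation is that the draw $\matrixComponent{\plaintextMatrix}{\matrixRowIdx}{\matrixColIdx} \getsr \residues{\fieldOrder} \setminus \{\fieldAddIdentity\}$ on \lineref{prot:responder:randMatrix} is the only responder-side randomness that affects the \emph{plaintext} of $\matrixComponent{\resultMatrix}{\matrixRowIdx}{\matrixColIdx}$, and that multiplication by a nonzero element of the field $\residues{\fieldOrder}$ permutes $\residues{\fieldOrder} \setminus \{\fieldAddIdentity\}$.

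First I would fix notation. Because the responder did not abort on \lineref{prot:responder:checkCiphertexts}, $\cuckooFingerprintCtext$ and every entry of $\queryMatrix$ lie in $\ciphertextSpace{\pubKey}$ and so encrypt well-defined plaintexts, albeit possibly adversarially chosen. Using only the homomorphic properties of $\encMatrixScalarMult{\pubKey}$ and $\encMatrixAdd{\pubKey}$ from \secref{sec:pmt:operators}, the plaintext of $\matrixComponent{(\cuckooFilter \encMatrixScalarMult{\pubKey} \queryMatrix) \encMatrixAdd{\pubKey} \cuckooFingerprintCtextMatrix}{\matrixRowIdx}{\matrixColIdx}$ is a value $\plaintext^* \in \residues{\fieldOrder}$ determined entirely by the requester's inputs and the cuckoo filter $\cuckooFilter$. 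Then, by the definition of $\encMatrixHadamardMult{\pubKey}$, the plaintext of $\matrixComponent{\resultMatrix}{\matrixRowIdx}{\matrixColIdx}$ equals $\matrixComponent{\plaintextMatrix}{\matrixRowIdx}{\matrixColIdx} \fieldMult \plaintext^*$.

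Next I would reduce the ciphertext-space event to a plaintext event. Since $\ciphertextSpace{\pubKey}$ is the disjoint union of the classes $\ciphertextSpace{\pubKey}(\plaintext)$, a ciphertext belongs to $\ciphertextSpace{\pubKey}(\plaintext)$ iff its plaintext equals $\plaintext$. Because $\matrixComponent{\plaintextMatrix}{\matrixRowIdx}{\matrixColIdx} \neq \fieldAddIdentity$ and $\residues{\fieldOrder}$ is a field, the product $\matrixComponent{\plaintextMatrix}{\matrixRowIdx}{\matrixColIdx} \fieldMult \plaintext^*$ equals $\fieldAddIdentity$ iff $\plaintext^* = \fieldAddIdentity$. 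Hence the conditioning event $\matrixComponent{\resultMatrix}{\matrixRowIdx}{\matrixColIdx} \notin \ciphertextSpace{\pubKey}(\fieldAddIdentity)$ coincides with the deterministic condition $\plaintext^* \neq \fieldAddIdentity$ on the requester's inputs, so the conditional probability is well-defined exactly in that case.

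Finally, conditioned on $\plaintext^* \neq \fieldAddIdentity$, the map $r \mapsto r \fieldMult \plaintext^*$ is a bijection on $\residues{\fieldOrder} \setminus \{\fieldAddIdentity\}$; since $\matrixComponent{\plaintextMatrix}{\matrixRowIdx}{\matrixColIdx}$ is uniform on this set, so is the product, yielding probability $1/(\fieldOrder-1)$ for each target plaintext $\plaintext \in \residues{\fieldOrder} \setminus \{\fieldAddIdentity\}$. I do not anticipate a significant obstacle; the only subtlety worth flagging is that a malicious requester may pick $\queryMatrix$ and $\cuckooFingerprintCtext$ arbitrarily, but the argument above never uses that these inputs are well-formed in the protocol's intended sense, only that each decrypts to \emph{some} plaintext, which the abort check on \lineref{prot:responder:checkCiphertexts} guarantees.
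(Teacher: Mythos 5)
Your proposal is correct and follows essentially the same route as the paper's proof: both observe that the abort check on \lineref{prot:responder:checkCiphertexts} guarantees the intermediate ciphertext $\matrixComponent{\left(\cuckooFilter \encMatrixScalarMult{\pubKey} \queryMatrix\right) \encMatrixAdd{\pubKey} \cuckooFingerprintCtextMatrix}{\matrixRowIdx}{\matrixColIdx}$ encrypts some well-defined plaintext, that the conditioning event forces this plaintext to be nonzero, and that multiplying by the uniformly random nonzero scalar from \lineref{prot:responder:randMatrix} then yields a plaintext uniform on $\residues{\fieldOrder} \setminus \{\fieldAddIdentity\}$. Your version merely spells out the bijection argument and the equivalence of the conditioning event with $\plaintext^* \neq \fieldAddIdentity$ slightly more explicitly than the paper does.
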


\begin{proof}
$\matrixComponent{\left(\cuckooFilter
  \encMatrixScalarMult{\pubKey} \queryMatrix\right)
  \encMatrixAdd{\pubKey}
  \cuckooFingerprintCtextMatrix}{\matrixRowIdx}{\matrixColIdx} \in
\ciphertextSpace{\pubKey}$, since by
\lineref{prot:responder:checkCiphertexts}, $\cuckooFingerprintCtext
\in \ciphertextSpace{\pubKey}$ and $\queryMatrix \in
(\ciphertextSpace{\pubKey})^{\cuckooNmbrBuckets \times 2}$.  Moreover,
$\matrixComponent{\left(\cuckooFilter \encMatrixScalarMult{\pubKey}
  \queryMatrix\right) \encMatrixAdd{\pubKey}
  \cuckooFingerprintCtextMatrix}{\matrixRowIdx}{\matrixColIdx} \not\in
\ciphertextSpace{\pubKey}(\fieldAddIdentity)$ since 
$\matrixComponent{\resultMatrix}{\matrixRowIdx}{\matrixColIdx} \not\in
\ciphertextSpace{\pubKey}(\fieldAddIdentity)$ by assumption.  Since
$\matrixComponent{\plaintextMatrix}{\matrixRowIdx}{\matrixColIdx}$ is
drawn uniformly from $\residues{\fieldOrder} \setminus
\{\fieldAddIdentity\}$
(\lineref{prot:responder:randMatrix}), the plaintext of
\matrixComponent{\resultMatrix}{\matrixRowIdx}{\matrixColIdx} is
uniformly distributed in $\residues{\fieldOrder} \setminus
\{\fieldAddIdentity\}$.
\end{proof}

\begin{prop}
\label{prop:comp:responder-security-two}
If the \responderTerm follows the protocol, then
$\cprob{\big}{\matrixComponent{\resultMatrix}{\matrixRowIdx}{\matrixColIdx}
  =
  \ciphertext{}}{\matrixComponent{\resultMatrix}{\matrixRowIdx}{\matrixColIdx}
  \in \ciphertextSpace{\pubKey}(\plaintext)} =
\frac{1}{\setSize{\ciphertextSpace{\pubKey}(\plaintext)}}$ for any
$\matrixRowIdx \in \nats{\cuckooBucketCapacity}$, $\matrixColIdx \in
\nats{2}$,
$\plaintext \in \residues{\fieldOrder}$, and $\ciphertext{} \in
\ciphertextSpace{\pubKey}(\plaintext)$.
\end{prop}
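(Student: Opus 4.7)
The plan is to reduce the claim to the re-randomization property built into \encAdd{\pubKey}: on inputs $\ciphertext{1} \in \ciphertextSpace{\pubKey}(\plaintext{1})$ and $\ciphertext{2} \in \ciphertextSpace{\pubKey}(\plaintext{2})$, its output is distributed uniformly over $\ciphertextSpace{\pubKey}(\plaintext{1} \fieldAdd \plaintext{2})$ with fresh randomness independent of the inputs. Since each of \encSum{}, \encMatrixAdd{\pubKey}, \encMatrixScalarMult{\pubKey}, and \encMatrixHadamardMult{\pubKey} is defined in \secref{sec:pmt:operators} purely in terms of iterated \encAdd{\pubKey} calls, it suffices to identify the final \encAdd{\pubKey} used in producing $\matrixComponent{\resultMatrix}{\matrixRowIdx}{\matrixColIdx}$ and appeal to its specification.

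First I would unfold $\matrixComponent{\resultMatrix}{\matrixRowIdx}{\matrixColIdx}$ per \lineref{prot:responder:makeResult} and the operator definitions to obtain the distributional identity
\[
\matrixComponent{\resultMatrix}{\matrixRowIdx}{\matrixColIdx} \distEqual \encSum{\pubKey}{\encSumIdx=1}{\matrixComponent{\plaintextMatrix}{\matrixRowIdx}{\matrixColIdx}} \matrixComponent{(\cuckooFilter \encMatrixScalarMult{\pubKey} \queryMatrix) \encMatrixAdd{\pubKey} \cuckooFingerprintCtextMatrix}{\matrixRowIdx}{\matrixColIdx}.
\]
Because $\matrixComponent{\plaintextMatrix}{\matrixRowIdx}{\matrixColIdx}$ is drawn from $\residues{\fieldOrder} \setminus \{\fieldAddIdentity\}$ in \lineref{prot:responder:randMatrix}, the computation concludes with at least one \encAdd{\pubKey} call in either of two cases: if $\matrixComponent{\plaintextMatrix}{\matrixRowIdx}{\matrixColIdx} > \fieldMultIdentity$, the outer \encSum{} contributes such a call; otherwise the \encAdd{\pubKey} implicit in the inner \encMatrixAdd{\pubKey} is what produces the final ciphertext directly.

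Next I would apply the \encAdd{\pubKey} specification to that final call. The plaintext of $\matrixComponent{\resultMatrix}{\matrixRowIdx}{\matrixColIdx}$ is a deterministic function of the plaintexts of the two ciphertext operands fed to the final \encAdd{\pubKey}, which are themselves determined by the randomness consumed before that call. Conditioning on the event $\matrixComponent{\resultMatrix}{\matrixRowIdx}{\matrixColIdx} \in \ciphertextSpace{\pubKey}(\plaintext)$ therefore restricts attention to executions whose pre-final randomness drives those operands' plaintexts to sum to \plaintext; within any such execution, the final \encAdd{\pubKey} independently draws its output uniformly from $\ciphertextSpace{\pubKey}(\plaintext)$, yielding $1/\setSize{\ciphertextSpace{\pubKey}(\plaintext)}$ as required.

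The main obstacle is the bookkeeping of verifying that the final operation is indeed an \encAdd{\pubKey} whose randomness is independent of the plaintext-determining randomness; this reduces to expanding each composite operator as iterated \encAdd{\pubKey} calls. I foresee no deeper complication, because the operator definitions in \secref{sec:pmt:operators} introduce no non-\encAdd{\pubKey} sources of randomness, and \encAdd{\pubKey}'s specification guarantees a fresh, uniform draw at every call.
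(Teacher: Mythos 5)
Your proposal is correct and takes essentially the same approach as the paper: the paper's own proof is a one-line appeal to the specification of \encAdd{\pubKey}, which guarantees that the last homomorphic addition producing $\matrixComponent{\resultMatrix}{\matrixRowIdx}{\matrixColIdx}$ outputs a ciphertext uniform in its plaintext's ciphertext class. Your version merely spells out the operator unfolding and the conditioning step that the paper leaves implicit.
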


\begin{proof}
This is immediate since \encAdd{\pubKey} ensures that for
$\ciphertext{1} \in \ciphertextSpace{\pubKey}(\plaintext{1})$ and
$\ciphertext{2} \in \ciphertextSpace{\pubKey}(\plaintext{2})$,
$\ciphertext{1} \encAdd{\pubKey} \ciphertext{2}$ outputs a ciphertext
$\ciphertext{}$ chosen uniformly at random from
$\ciphertextSpace{\pubKey}(\plaintext{1} \fieldAdd \plaintext{2})$.
\end{proof}

\propref{prop:comp:responder-security-one} and
\propref{prop:comp:responder-security-two} are also true for the
protocol of Wang \& Reiter~\cite{wang2019:reuse} (henceforth called
\bloomPMT), in that each protocol execution leaks to the
\requesterTerm only one yes/no answer about the \responderTerm's set
representation, regardless of the actions of the \requesterTerm.  A
critical distinction exists between our protocol and \bloomPMT,
however, in that \bloomPMT permits a malicious \requesterTerm to craft
queries so that the yes/no answer can be expected to carry a (full)
bit of information to the \requesterTerm about the \responderTerm's
set. We capture this information leakage using \textit{extraction
  complexity}, which is the expected number of queries for a malicious
\requesterTerm to extract the \responderTerm's set representation,
thereby enabling the \requesterTerm to conduct offline attacks on the
set.  More precisely, for a fixed \responderTerm set \pmtSet, the
\textit{\extractionComplexity} of a PMT protocol is the expected
number of protocol runs required for a malicious \requesterTerm to
extract enough information from an honest \responderTerm to locally
determine $\pmtSetElmt \testIn \pmtSet$ for any \pmtSetElmt with the
same accuracy as the PMT provides.

\bloomPMT enables a malicious \requesterTerm to learn any single bit
in the Bloom-filter representation of the \responderTerm's set.  Since
to accommodate a set of size \pmtSetSize with false-positive rate of
\pmtFPR for membership tests, a Bloom filter uses $O(\pmtSetSize
\log_2 \frac{1}{\pmtFPR})$ bits, this is the extraction complexity for
\bloomPMT; after this many queries, the malicious \requesterTerm knows
enough to conduct an \textit{offline} attack on set members.
In contrast:

\begin{prop}
  \label{prop:cuckooExtraction}
  The \extractionComplexity of \cuckooPMT is $\Omega(\frac{1}{\pmtFPR})$.
\end{prop}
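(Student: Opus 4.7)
The plan is to translate each protocol run into a batch of equality tests against a single target value, and then lower-bound the number of such runs required to extract enough information to match the PMT's accuracy.

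First, I would combine \propref{prop:comp:responder-security-one} and \propref{prop:comp:responder-security-two} to reduce a malicious \requesterTerm's view of a run to the indicator pattern over the $2\cuckooBucketCapacity$ cells of \resultMatrix of which cells decrypt to \fieldAddIdentity. A direct expansion of \lineref{prot:responder:makeResult} shows that $\matrixComponent{\resultMatrix}{i}{j}$ decrypts to \fieldAddIdentity iff a \requesterTerm-chosen linear combination of the cells of row $i$ of \cuckooFilter equals a single \requesterTerm-chosen target value $f$ implicit in \cuckooFingerprintCtext. Hence each run constitutes at most $2\cuckooBucketCapacity$ parallel equality tests, all sharing the same target $f$.

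Second, I would bound the expected number of ``hits'' (cells decrypting to \fieldAddIdentity) per run. Since every cell of \cuckooFilter stores either \fieldAddIdentity (for an empty slot) or an element of \cuckooFingerprintRange, and since the cuckoo-filter false-positive rate gives $|\cuckooFingerprintRange| = \Theta(\cuckooBucketCapacity/\pmtFPR)$, the probability that any individual row's linear combination equals a fixed $f$ is $O(1/|\cuckooFingerprintRange|) = O(\pmtFPR/\cuckooBucketCapacity)$, so summing across all $2\cuckooBucketCapacity$ cells yields an expected $O(\pmtFPR)$ hits per run. Equivalently, the \requesterTerm extracts a new positive identification (``fingerprint $f$ is stored at bucket $k$'') only once every $\Omega(1/\pmtFPR)$ runs in expectation.

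Third, I would argue that matching the PMT's accuracy ``for any \pmtSetElmt'' requires at least $\Omega(1)$ such identifications. The PMT has zero false negatives and false-positive rate \pmtFPR; an offline classifier matching both rates must confidently label at least some non-members ``not in'' (otherwise its false-positive rate stays at $1$). Doing so for a given \pmtSetElmt requires the \requesterTerm to have resolved the status of $\cuckooFingerprintFn(\pmtSetElmt)$ in the two bucket positions associated with \pmtSetElmt, which in turn requires explicitly probing (bucket, fingerprint) pairs among the $\Theta(|\cuckooFingerprintRange|)$ possibilities per bucket. Combining with step~2 then yields $\Omega(1/\pmtFPR)$ runs.

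The main obstacle is step~2: a craftier \requesterTerm might pick coefficients in \queryMatrix forming non-isolating linear combinations across multiple buckets, hoping to inflate the per-cell hit probability. I would handle this by observing that for any nonzero coefficient vector, at least one contributing cell value must take a specific value in $\cuckooFingerprintRange \cup \{\fieldAddIdentity\}$ in order for the combination to equal the fixed $f$, which still occurs with probability at most $1/(|\cuckooFingerprintRange|+1) = O(\pmtFPR/\cuckooBucketCapacity)$ over the filter distribution. Thus the per-cell hit probability bound is robust to arbitrary \requesterTerm strategies, and the overall $\Omega(1/\pmtFPR)$ lower bound follows.
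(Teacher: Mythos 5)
Your proposal is correct and takes essentially the same route as the paper's proof: both use \propref{prop:comp:responder-security-one} and \propref{prop:comp:responder-security-two} to reduce each protocol run to $2\cuckooBucketCapacity$ linear congruence tests against a single target chosen by the \requesterTerm, bound each test's success probability by roughly $1/\setSize{\cuckooFingerprintRange}$, and invoke $\setSize{\cuckooFingerprintRange}/2\cuckooBucketCapacity \ge 1/\pmtFPR$ to conclude the $\Omega(\frac{1}{\pmtFPR})$ bound. The only difference is cosmetic: the paper makes the worst case concrete as a single-element filter whose fingerprint is uniform in \cuckooFingerprintRange, whereas you average over a generic filter distribution and add a short argument that at least one positive identification is needed.
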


\begin{proof}
Suppose the \responderTerm behaves according to the protocol, and for
each $\matrixRowIdx \in \nats{\cuckooNmbrBuckets}$, $\matrixColIdx \in
\nats{2}$, denote by $\plaintext{\matrixRowIdx}{\matrixColIdx} \in
\residues{\fieldOrder}$ the plaintext such that
$\matrixComponent{\queryMatrix}{\matrixRowIdx}{\matrixColIdx}
\in \ciphertextSpace{\pubKey}(\plaintext{\matrixRowIdx}{\matrixColIdx})$.
Similarly, denote by $\plaintext{\cuckooFingerprintCtext} \in
\residues{\fieldOrder}$ the plaintext such that
$\cuckooFingerprintCtext
\in \ciphertextSpace{\pubKey}(\plaintext{\cuckooFingerprintCtext})$.
A corollary of
\propsref{prop:comp:responder-security-one}{prop:comp:responder-security-two}
is then that in one PMT response, the \requesterTerm learns only the
result
\begin{align}
  \left(
  \fieldSum_{\encSumIdx = 1}^{\cuckooNmbrBuckets}
  \left(
  \matrixComponent{\cuckooFilter}{\matrixRowIdx}{\encSumIdx}
  \fieldMult
  \plaintext{\encSumIdx}{\matrixColIdx}
  \right)
  \right)
  \fieldAdd
  \plaintext{\cuckooFingerprintCtext}
  & \testEquiv{\fieldOrder} \fieldAddIdentity
\label{eqn:linear-congruence}
\end{align}
for each $\matrixRowIdx \in \nats{\cuckooBucketCapacity}$,
$\matrixColIdx \in \nats{2}$, i.e., a total of
$2\cuckooBucketCapacity$ linear congruence-mod-\fieldOrder tests,
where the \plaintext{\encSumIdx}{\matrixColIdx} and
\plaintext{\cuckooFingerprintCtext} values are chosen by the
\requesterTerm.  Even if \cuckooFilter represents a set consisting of
only a single element \pmtSetElmt chosen so that
$\cuckooFingerprintFn(\pmtSetElmt)$ is uniformly distributed in
\cuckooFingerprintRange, confirming the presence of
$\cuckooFingerprintFn(\pmtSetElmt)$ in \cuckooFilter requires, in
expectation, testing $\setSize{\cuckooFingerprintRange}/2$ linear
congruences and so performing
$\setSize{\cuckooFingerprintRange}/4\cuckooBucketCapacity$ PMT
queries.  Since
$\setSize{\cuckooFingerprintRange}/2\cuckooBucketCapacity \ge
1/\pmtFPR$ to retain the false-positive rate
\pmtFPR~\cite[\secrefstatic{5.1}]{fan2014:cuckoo}, \cuckooPMT has an
extraction complexity of $\Omega(\frac{1}{\pmtFPR})$ queries.
\end{proof}

The lower bound in \propref{prop:cuckooExtraction} is very coarse, in
that it applies even for a cuckoo filter \cuckooFilter storing a
single element---not to mention one storing many.  Moreover, there are
a number of measures that can make extraction even more difficult for
a malicious \requesterTerm at minimal expense to the \responderTerm.
\begin{itemize}[nosep,leftmargin=1em,labelwidth=*,align=left]
\item The \responderTerm can permute each column of \cuckooFilter
  independently after each execution of \cuckooPMT, since the query
  matrix \queryMatrix produced by a correct \requesterTerm will select
  the same elements from \cuckooFilter regardless of this permuting.
  Interpreting the results of multiple malicious PMT queries will
  become more difficult, however.
\item The \responderTerm can select any
  $\matrixComponent{\cuckooFilter}{\matrixRowIdx}{\matrixColIdx}
  \not\in \cuckooFingerprintRange$ uniformly at random from
  $\residues{\fieldOrder} \setminus \cuckooFingerprintRange$, ensuring
  that any linear test \eqnref{eqn:linear-congruence} involving
  \matrixComponent{\cuckooFilter}{\matrixRowIdx}{\matrixColIdx}
  (i.e., for which
  $\plaintext{\matrixRowIdx}{\matrixColIdx} \neq \fieldAddIdentity$,
  using the notation in the proof of \propref{prop:cuckooExtraction})
  succeeds with probability only
  $\frac{1}{\setSize{\residues{\fieldOrder} \setminus
      \cuckooFingerprintRange}}$.
\item The \responderTerm can randomly permute the elements of
  \resultMatrix before returning it, since the result computed by a
  correct \requesterTerm will be the same
  (\lineref{prot:requester:return}).  In doing so, the \requesterTerm
  is deprived of knowing which of its linear tests
  \eqnref{eqn:linear-congruence} were satisfied (if any were).
\end{itemize}

\subsection{Security Against a Malicious \ResponderTerm}
\label{sec:pmt:security:requester}

We now prove security for the \requesterTerm against a malicious
\responderTerm.  To do so, we define a malicious \responderTerm to be
a triple $\responderAdversary{} = \langle \responderAdversary{1},
\responderAdversary{2}, \responderAdversary{3}\rangle$ of algorithms
that participates in the experiment
\responderExperiment{\passwordChoiceBit}{\cuckooPMT} defined as
follows:
\[
\begin{array}{ll}
\mathrlap{\codeExpt \responderExperiment{\passwordChoiceBit}{\cuckooPMT}(\langle\responderAdversary{1},\! \responderAdversary{2},\! \responderAdversary{3}\rangle)} \\
\hspace{3em}
& \langle\pmtSetElmt{0}, \pmtSetElmt{1}, \responderAdversaryState{1} \rangle \gets \responderAdversary{1}() \\
& \langle \langle\pubKey, \cuckooFingerprintCtext, \queryMatrix\rangle, \privKey\rangle\gets \requesterAlgorithm{\ref{prot:requester:init}}{\ref{prot:requester:calculateQueryMatrix}}(\pmtSetElmt{\passwordChoiceBit}) \\
& \langle \resultMatrix, \responderAdversaryState{2} \rangle \gets \responderAdversary{2}(\langle\pubKey, \cuckooFingerprintCtext, \queryMatrix\rangle, \responderAdversaryState{1}) \\
& \requesterOutput \gets \requesterAlgorithm{\ref{prot:requester:checkCiphertexts}}{\ref{prot:requester:return}}(\privKey, \pubKey, \resultMatrix) \\
& \responderAdversaryBit \gets \responderAdversary{3}(\responderAdversaryState{2}, \requesterOutput) \\
& \codeReturn \responderAdversaryBit
\end{array}
\]
In this experiment,
\requesterAlgorithm{\ref{prot:requester:init}}{\ref{prot:requester:calculateQueryMatrix}}
denotes steps
\ref{prot:requester:init}--\ref{prot:requester:calculateQueryMatrix}
in \figref{fig:protocol}, producing the message \ref{prot:msg:request}
and the private key \privKey.  Similarly,
\requesterAlgorithm{\ref{prot:requester:checkCiphertexts}}{\ref{prot:requester:return}}
denotes steps
\ref{prot:requester:checkCiphertexts}--\ref{prot:requester:return}.
In the experiment, \responderAdversary{1} chooses two elements
\pmtSetElmt{0}, \pmtSetElmt{1}, and \passwordChoiceBit determines
which of the two that is used in the experiment.
\responderAdversary{2} is given message \ref{prot:msg:request} and
produces the response matrix \resultMatrix.  Finally,
\responderAdversary{3} is given the final result \requesterOutput of
the protocol from \lineref{prot:requester:return}, and outputs a bit
\responderAdversaryBit.  Note that though \figref{fig:protocol} does
not disclose \requester's result explicitly to \responder{}, we allow
it to be disclosed to \responder{} (i.e., \responderAdversary{3}) in
this analysis, to permit \cuckooPMT to be used in other contexts
(e.g.,~\cite{wang2019:reuse}).  We define the \responderTerm-adversary
advantage as
\begin{align*}
  \responderAdvantage{\cuckooPMT}(\responderAdversary{}) & = \prob{\responderExperiment{0}{\cuckooPMT}(\responderAdversary{}) = 0} \\
  & \hspace{2em} - \prob{\responderExperiment{1}{\cuckooPMT}(\responderAdversary{}) = 0} \\[6pt]
  \responderAdvantage{\cuckooPMT}(\timeBound) & = \max_{\responderAdversary{}} \responderAdvantage{\cuckooPMT}(\responderAdversary{})
\end{align*}
where the maximum is taken over all adversaries \responderAdversary{}
that run in time \timeBound.  Intuitively,
\responderAdvantage{\cuckooPMT}(\timeBound) captures the
ability of any adversary running in time \timeBound to
differentiate which of two passwords of its choice
the \requesterTerm uses to run the protocol.

We reduce security of \cuckooPMT against a \responderTerm adversary
\responderAdversary{} to IND-CPA
security~\cite[\dfnrefstatic{5.8}]{bellare2005:introduction} of the
encryption \encScheme.  The IND-CPA experiment
\indcpaExperiment{\indcpaBit}{\encScheme} is
\[
\begin{array}{ll}
\mathrlap{\codeExpt \indcpaExperiment{\indcpaBit}{\encScheme}(\indcpaAdversary)} \\
\hspace{3em}
& \langle \pubKey, \privKey \rangle \gets \keygen(1^\secParam) \\
& \indcpaAdversaryBit \gets \indcpaAdversary^{\indcpaLROracle{\pubKey}{\cdot}{\cdot}}(\pubKey) \\
& \codeReturn \indcpaAdversaryBit
\end{array}
\]
Here, the IND-CPA adversary \indcpaAdversary is given access to a
``left-or-right'' oracle \indcpaLROracle{\pubKey}{\cdot}{\cdot} that
takes two plaintexts $\plaintext{0}, \plaintext{1} \in
\residues{\fieldOrder}$ as input and returns
$\encrypt{\pubKey}(\plaintext{\indcpaBit})$.  Finally,
\indcpaAdversary returns a bit \indcpaAdversaryBit, which the
experiment returns.  We define
\begin{align*}
  \indcpaAdvantage{\encScheme}(\indcpaAdversary) & = \prob{\indcpaExperiment{0}{\encScheme}(\indcpaAdversary) = 0} - \prob{\indcpaExperiment{1}{\encScheme}(\indcpaAdversary) = 0} \\
  \indcpaAdvantage{\encScheme}(\timeBound, \queryBound) & =
  \max_{\indcpaAdversary} \indcpaAdvantage{\encScheme}(\indcpaAdversary)
\end{align*}
where the maximum is taken over all IND-CPA adversaries
\indcpaAdversary running in time \timeBound and making up to
\queryBound oracle queries.

\begin{prop}
  \label{prop:compProtocol:requester-security}
  $\responderAdvantage{\cuckooPMT}(\timeBound) \le
  2\indcpaAdvantage{\encScheme}(\timeBoundAlt, \queryBound)$ for
  $\queryBound = 2\cuckooNmbrBuckets + 1$ and some $\timeBoundAlt \le
  2\timeBound$.
\end{prop}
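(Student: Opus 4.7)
The plan is a black-box reduction from the security of \cuckooPMT against a malicious \responderTerm to the IND-CPA security of \encScheme. Given a \responderTerm adversary $\responderAdversary{} = \langle\responderAdversary{1}, \responderAdversary{2}, \responderAdversary{3}\rangle$ running in time $\timeBound$, I will construct an IND-CPA adversary $\indcpaAdversary$ that uses $\responderAdversary{}$ as a subroutine, makes $\queryBound = 2\cuckooNmbrBuckets + 1$ left-or-right oracle queries, runs in time $\timeBoundAlt \le 2\timeBound$, and achieves $\indcpaAdvantage{\encScheme}(\indcpaAdversary) \ge \tfrac{1}{2}\responderAdvantage{\cuckooPMT}(\responderAdversary{})$. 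Maximizing over \responderAdversary{} of time \timeBound then yields the stated inequality.

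The simulation follows \figref{fig:protocol}. Upon receiving \pubKey from the IND-CPA experiment, $\indcpaAdversary$ runs $\responderAdversary{1}$ to obtain $\pmtSetElmt{0}, \pmtSetElmt{1}, \responderAdversaryState{1}$, and then assembles the request message $\langle\pubKey, \cuckooFingerprintCtext, \queryMatrix\rangle$ using one LR query per ciphertext, choosing the left/right plaintexts so that when the hidden IND-CPA bit $\indcpaBit = \passwordChoiceBit$, each ciphertext takes the value the real \requester would produce for input $\pmtSetElmt{\passwordChoiceBit}$. Specifically, \cuckooFingerprintCtext comes from the query $(\fieldNegative\cuckooFingerprintFn(\pmtSetElmt{0}), \fieldNegative\cuckooFingerprintFn(\pmtSetElmt{1}))$; for $\matrixComponent{\queryMatrix}{\matrixRowIdx}{1}$ the query is on $(a_0, a_1)$ with $a_b = \fieldMultIdentity$ exactly when $\matrixRowIdx = \cuckooHashFn(\pmtSetElmt{b})$ and $\fieldAddIdentity$ otherwise, and the pair for $\matrixComponent{\queryMatrix}{\matrixRowIdx}{2}$ is analogous using the second bucket index $\cuckooHashFn(\pmtSetElmt{b}) \oplus \cuckooHashFn(\cuckooFingerprintFn(\pmtSetElmt{b}))$. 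This totals exactly $1 + 2\cuckooNmbrBuckets = \queryBound$ LR queries, and by construction the distribution of the assembled message, conditioned on $\indcpaBit = \passwordChoiceBit$, is identical to that of the message produced by $\requesterAlgorithm{\ref{prot:requester:init}}{\ref{prot:requester:calculateQueryMatrix}}(\pmtSetElmt{\passwordChoiceBit})$. $\indcpaAdversary$ then forwards the message to $\responderAdversary{2}$ and receives $\langle\resultMatrix, \responderAdversaryState{2}\rangle$.

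The main obstacle is that $\indcpaAdversary$ cannot compute the true $\requesterOutput = \bigvee_{\matrixRowIdx,\matrixColIdx} \encZeroTest{\privKey}(\matrixComponent{\resultMatrix}{\matrixRowIdx}{\matrixColIdx})$ expected by $\responderAdversary{3}$, because it does not hold \privKey. My resolution is to have $\indcpaAdversary$ guess: sample $\requesterOutput^* \in \{\boolTrue, \boolFalse\}$ uniformly, run $\responderAdversaryBit \gets \responderAdversary{3}(\responderAdversaryState{2}, \requesterOutput^*)$, and output $\responderAdversaryBit$. The factor-of-$2$ slack in the claim is precisely what absorbs this guess: conditioned on the event that $\requesterOutput^*$ agrees with the true \requesterOutput{} (probability $\tfrac{1}{2}$), the simulated view is distributed identically to $\responderExperiment{\indcpaBit}{\cuckooPMT}$, contributing $\tfrac{1}{2}\responderAdvantage{\cuckooPMT}(\responderAdversary{})$ to the advantage of $\indcpaAdversary$; on the complementary event the contribution is handled by considering a companion reduction that differs only in the guess (and, if needed, negates its output bit), and taking whichever of the pair does better via a short symmetry/case-analysis argument. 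The overhead of assembling the $O(\cuckooNmbrBuckets)$ LR queries and bookkeeping keeps $\timeBoundAlt \le 2\timeBound$. The delicate step will be making this companion-adversary case analysis tight so that the factor of $2$ suffices regardless of the arbitrary way $\responderAdversary{3}$'s output may depend on \requesterOutput{}.
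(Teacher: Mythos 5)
Your reduction is essentially the paper's: \indcpaAdversary assembles the request message via one left-or-right oracle query per ciphertext ($2\cuckooNmbrBuckets+1$ in total, with the left/right plaintexts taken from the executions on \pmtSetElmt{0} and \pmtSetElmt{1} respectively), runs \responderAdversary{2} on it, substitutes a uniformly random guess for \requesterOutput because \privKey is unavailable, and echoes \responderAdversary{3}'s bit, with the factor of $2$ absorbing the guess and the time bound $2\timeBound$ coming from preparing both plaintext sequences. The only difference is that where you flag the wrong-guess branch as delicate and sketch a companion-adversary case analysis, the paper simply conditions on the guess being correct (probability $\frac{1}{2}$) and directly asserts $\indcpaAdvantage{\encScheme}(\indcpaAdversary) \ge \frac{1}{2}\responderAdvantage{\cuckooPMT}(\responderAdversary{})$ without carrying out any such analysis.
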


\begin{proof}
Given a \responderTerm adversary $\responderAdversary{} = \langle
\responderAdversary{1}, \responderAdversary{2}, \responderAdversary{3}
\rangle$, we construct an IND-CPA adversary \indcpaAdversary as
follows.  \indcpaAdversary first invokes \responderAdversary{1} to
obtain \pmtSetElmt{0} and \pmtSetElmt{1}.  Let
\plaintext{0\ciphertextIdx} denote the \ciphertextIdx-th plaintext
that \requester would encrypt in an execution of the protocol on
\pmtSetElmt{0}, and similarly let \plaintext{1\ciphertextIdx} denote
the \ciphertextIdx-th plaintext that \requester would encrypt in an
execution of the protocol on \pmtSetElmt{1}.  Then, \indcpaAdversary
simulates \requester exactly, except using its oracle to obtain the
\ciphertextIdx-th ciphertext $\ciphertext{\ciphertextIdx} \gets
\indcpaLROracle{\pubKey}{\plaintext{0\ciphertextIdx}}{\plaintext{1\ciphertextIdx}}$.
Note that because \indcpaAdversary does not have \privKey, it cannot
compute \requesterOutput as \requester would, and so it chooses
$\requesterOutput \getsr \{0, 1\}$ randomly and provides it to
\responderAdversary{3}.  When \responderAdversary{3} outputs
\responderAdversaryBit, \indcpaAdversary copies this bit as its output
\indcpaAdversaryBit.

The value \requesterOutput provided to \responderAdversary{3} is
correct, i.e., $\requesterOutput = \bigvee_{\matrixRowIdx,\matrixColIdx}
\encZeroTest{\privKey}(\matrixComponent{\resultMatrix}{\matrixRowIdx}{\matrixColIdx})$,
with probability $1/2$.  In this case, the simulation provided by
\indcpaAdversary to \responderAdversary{} is perfectly
indistinguishable from a real execution, and so
\begin{align*}
  \prob{\indcpaExperiment{0}{\encScheme}(\indcpaAdversary) = 0} &
  = \prob{\responderExperiment{0}{\cuckooPMT}(\responderAdversary{}) = 0} &
    \mbox{and} \\
  \prob{\indcpaExperiment{1}{\encScheme}(\indcpaAdversary) = 0} &
  = \prob{\responderExperiment{1}{\cuckooPMT}(\responderAdversary{}) = 0}
\end{align*}
As such, $\indcpaAdvantage{\encScheme}(\indcpaAdversary) \ge
\frac{1}{2} \responderAdvantage{\cuckooPMT}(\responderAdversary{})$.
\indcpaAdversary makes $\queryBound = 2\cuckooNmbrBuckets + 1$ queries
to construct \queryMatrix and \cuckooFingerprintCtext, and consumes
time at most $2\timeBound$ due to the time needed to construct both
\plaintext{0\ciphertextIdx} and \plaintext{1\ciphertextIdx} for each
\ciphertextIdx.
\end{proof}

\section{Performance and Scalability}
\label{sec:performance}

In this section, we describe an implementation of our framework and
evaluate its performance and scalability. The goals of our evaluation
are:
\begin{itemize}[nosep,leftmargin=1em,labelwidth=*,align=left]
\item To demonstrate the performance of our framework with varying
  parameters that could be used for real-world scenarios, e.g.,
  different numbers of participating websites for different users,
  and various sizes of suspicious password sets maintained at a
  \responderTerm;
\item To explore the potential performance degradation brought by
  adopting Tor to ensure the \requesterTerm's login privacy when the
  \directoryTerm is untrusted in this sense (i.e.,
  \untrustedForLoginPrivacy, as discussed in \secref{sec:directory});
  and
\item To evaluate the scalability of our prototype and to interpret
  its scalability in a real-world context.
\end{itemize}

\subsection{Implementation}
\label{sec:performance:impl}

Here, we give the salient details of our prototype implementation of
our framework.

\paragraph*{The PMT implementation}
We implemented \cuckooPMT (\secref{sec:pmt}) in Go.  We instantiated
the exponential Elgamal cryptosystem (\appref{sec:pmt:elgamal}) on a
prime-order elliptic curve group, secp256r1 (NIST
P-256)~\cite{certicom2000:sec2v1,nist2013:fips186-4}, which ensures
approximately 128-bit symmetric encryption security or 3072-bit RSA
security. For the cuckoo filter, we chose the bucket size
$\cuckooBucketCapacity = 16$, which permits an occupancy of 98\%.
That is, to accommodate a set of size \pmtSetSize, we need to build a
cuckoo filter with capacity at least $\pmtSetSize / 0.98$.

We leveraged precomputation on the \requesterTerm side for
\lineref{prot:requester:init} and
\lineref{prot:requester:calculateQueryMatrix} in
\figref{fig:protocol}. Specifically, the \requesterTerm can precompute
$\langle\pubKey, \privKey\rangle$, $2\cuckooNmbrBuckets-2$ ciphertexts
of \fieldAddIdentity, and two ciphertexts of \fieldMultIdentity such
that the online part of the computation in
\lineref{prot:requester:calculateQueryMatrix} is simply to assemble
the matrix \queryMatrix.

\paragraph*{The \directoryTerm}
We implemented the \directoryTerm in Go, leveraging multi-threading to
support parallel message processing. For each PMT query, the
\directoryTerm shuffles the intended \respondersTerm' addresses before
forwarding the \requesterTerm's query to those \respondersTerm and
shuffles all responses from \respondersTerm before returning them back
to the \requesterTerm. The first shuffling is to avoid evaluation bias
due to the networking or computation differences among multiple
\respondersTerm. The second shuffling further weakens the linkability
between responses and source \respondersTerm, as an extra layer of
security protection against a malicious \requesterTerm (see
\secref{sec:directory}).

\subsection{Experimental Setup}
\label{sec:performance:setup}

We set up one \requesterTerm, one \directoryTerm, and up to 256
\respondersTerm.  The \requesterTerm and the \directoryTerm ran on two
machines in our department, both with 2.67\gigahertz $\times$ 8
physical cores, 72\gibibytes RAM, and Ubuntu 18.04 x86\_64.  The 256
\respondersTerm were split evenly across eight Amazon EC2 instances in
the Eastern North American region, each with 3.2\gigahertz $\times$ 32
physical cores, 256\gibibytes RAM and Ubuntu 18.04 x86\_64. Each
\responderTerm was limited to one physical core, and had its own
exclusive data files, processes, and network sockets.

To test scenarios where the \directoryTerm is trusted for login
privacy (\trustedForLoginPrivacy) and where it is not
(\untrustedForLoginPrivacy), we set up two different types of
communication channels between the \requesterTerm and the
\directoryTerm.  For the \trustedForLoginPrivacy \directoryTerm, the
\requesterTerm and the \directoryTerm communicated directly.
For the \untrustedForLoginPrivacy \directoryTerm, we set up a private
Tor network, through which the \requesterTerm communicated with the
\directoryTerm via a newly built two-hop (i.e., with two Tor nodes)
circuit for each new query to hide its identity from the
\directoryTerm.  These two Tor nodes were chosen by Tor's default
selection algorithm from eight Tor nodes running in eight different
Amazon datacenters in North America and Europe.  In both the
\trustedForLoginPrivacy and \untrustedForLoginPrivacy cases, the
\directoryTerm communicated with \respondersTerm directly.

Each reported datapoint is the average of 50 runs. The relative
standard deviations of each datapoint for the \trustedForLoginPrivacy
\directoryTerm and \untrustedForLoginPrivacy \directoryTerm scenarios
were less than 4\% and 8\%, respectively.

\begin{figure}[t]
  \captionsetup[subfigure]{font=normalsize,labelfont=normalsize}
  \begin{subfigure}[t]{.1\columnwidth}
    \setlength\figureheight{2in}
    \begin{minipage}[t]{1\columnwidth}
      \centering
      \hspace*{4.5em}
      \resizebox{!}{2.5em}{\newenvironment{customlegend}[1][]{%
    \begingroup
    \csname pgfplots@init@cleared@structures\endcsname
    \pgfplotsset{#1}%
}{%
    \csname pgfplots@createlegend\endcsname
    \endgroup
}%

\def\addlegendimage{\csname pgfplots@addlegendimage\endcsname}

\begin{tikzpicture}

\definecolor{color0}{rgb}{0.129411764705882,0.380392156862745,0.549019607843137}

\begin{customlegend}[
	legend style={{font={\fontsize{10pt}{12}\selectfont}},{draw=none}},
	legend cell align={left},
	legend columns=3,
	legend entries={{\nmbrResponders{\accountId{}} = $1\quad$},{\nmbrResponders{\accountId{}} = $64\quad$},{\nmbrResponders{\accountId{}} = $128\quad$},{\nmbrResponders{\accountId{}} = $192\quad$},{\nmbrResponders{\accountId{}} = $256$}}]
\addlegendimage{line width=1pt, color0}
\addlegendimage{line width=1pt, color0, dotted}
\addlegendimage{line width=1pt, color0, dash pattern=on 1pt off 3pt on 3pt off 3pt}
\addlegendimage{line width=1pt, color0, dashed}
\addlegendimage{line width=1.5pt, color0, dotted}

\end{customlegend}

\end{tikzpicture}}
    \end{minipage}
  \end{subfigure}
  
  \hspace*{0.75em}
  \begin{subfigure}[b]{.49\columnwidth}
    \setlength\figureheight{2.3in}
    \begin{minipage}[b]{1\textwidth}
      \centering
      \vspace*{0em}\resizebox{!}{11em}{\begin{tikzpicture}

\definecolor{color0}{rgb}{0.129411764705882,0.380392156862745,0.549019607843137}

\pgfplotsset{every axis/.append style={
					xlabel={},
					ylabel={Response time (s)},
					compat=1.3,
                    label style={font=\small},
                    tick label style={font=\small}  
                    }}

\begin{axis}[
xmin=128, xmax=4096,
ymin=0, ymax=6,
xmode=log,
log basis x={2},
xtick={128,256,512,1024,2048,4096},
ytick={0,1,2,3,4,5,6},
yticklabels={0,1,2,3,4,5,6},
width=1.1\figurewidth,
height=1.1\figurewidth,
tick align=outside,
tick pos=left,
xmajorgrids,
minor tick num=1,
x grid style={lightgray!92.026143790849673!black},
ymajorgrids,
y grid style={lightgray!92.026143790849673!black},
]
\addplot [line width=1pt, color0]
table {%
128  0.078624
256  0.129329
512  0.209837
1024 0.382315
2048 0.741059
4096 1.502319
};
\addplot [line width=1pt, color0, dotted]
table {%
128  0.189479
256  0.261052
512  0.380885
1024 0.630404
2048 1.121129
4096 2.181854
};
\addplot [line width=1pt, color0, dash pattern=on 1pt off 3pt on 3pt off 3pt]
table {%
128  0.333078
256  0.427641
512  0.572476
1024 0.880904
2048 1.553705
4096 3.010296
};
\addplot [line width=1pt, color0, dashed]
table {%
128  0.37329
256  0.480669
512  0.665309
1024 1.055505
2048 1.885773
4096 3.724219
};
\addplot [line width=1.25pt, color0, dashed]
table {%
128  0.47435
256  0.59649
512  0.806687
1024 1.26532
2048 2.23752
4096 4.521978
};
\end{axis}
\end{tikzpicture}

\hspace*{-0.5em}\begin{tikzpicture}
\definecolor{color0}{rgb}{0.129411764705882,0.380392156862745,0.549019607843137}

\pgfplotsset{every axis/.append style={
					xlabel={},
					ylabel={},
					compat=1.3,
                    label style={font=\small},
                    tick label style={font=\small}  
                    }}
                   
\begin{axis}[
xmin=128, xmax=4096,
ymin=0, ymax=6,
xmode=log,
log basis x={2},
xtick={128,256,512,1024,2048,4096},
ytick={0,1,2,3,4,5,6},
yticklabels={},
width=1.1\figurewidth,
height=1.1\figurewidth,
tick align=outside,
tick pos=left,
minor tick num=1,
xmajorgrids,
x grid style={lightgray!92.026143790849673!black},
ymajorgrids,
y grid style={lightgray!92.026143790849673!black},
]
\addplot [line width=1pt, color0]
table {%
128  0.573917
256  0.661347
512  0.754593
1024 0.900982
2048 1.306185
4096 2.277571
};
\addplot [line width=1pt, color0, dotted]
table {%
128  0.800866
256  0.905682
512  1.038728
1024 1.280822
2048 1.942558
4096 3.129792
};
\addplot [line width=1pt, color0, dash pattern=on 1pt off 3pt on 3pt off 3pt]
table {%
128  0.948811
256  1.13057
512  1.310208
1024 1.639897
2048 2.436169
4096 3.892989
};
\addplot [line width=1pt, color0, dashed]
table {%
128  1.15235
256  1.363042
512  1.559455
1024 2.009603
2048 2.932855
4096 4.567939
};
\addplot [line width=1.25pt, color0, dashed]
table {%
128  1.361295
256  1.501054
512  1.737416
1024 2.290179
2048 3.233663
4096 5.658453
};
\end{axis}
\end{tikzpicture}}
      \hspace*{2.5em}
      \begin{minipage}[t]{8em}
        \vspace*{-0.01em}
        \caption{\trustedForLoginPrivacy \directoryTerm}
        \label{fig:responseTime:woTor}
      \end{minipage}%
      \hspace*{2.75em}
      \begin{minipage}[t]{8em}
        \vspace*{-0.01em}
        \caption{\untrustedForLoginPrivacy  \directoryTerm}
        \label{fig:responseTime:Tor}
      \end{minipage}%
    \end{minipage}
  \end{subfigure}%
  
  \vspace*{-2.65em}
  \begin{subfigure}[b]{.43\columnwidth}
    \setlength\figureheight{2in}
    \begin{minipage}[b]{1\textwidth}
      \centering
      \hspace*{12.5em}
      \resizebox{!}{1.4em}{\begin{tikzpicture}
\node at (0,0)[
  scale=1,
  anchor=south,
  text=black,
  rotate=0
]{$\pmtSetSize$};
\end{tikzpicture}}
      \vspace*{-0.4em}
    \end{minipage}
  \end{subfigure}
  \vspace*{0.5em}
  \caption{Response time with varying \pmtSetSize and
    \nmbrResponders{\accountId{}}}
  \label{fig:responseTime}
\end{figure}

\subsection{Response Time}
\label{sec:performance:results:latency}

We first report the results of our response-time evaluation
experiments for our implementation and setup above.  In these
experiments, the \requesterTerm issued one \cuckooPMT query via the
\directoryTerm to \nmbrResponders{\accountId{}}
\respondersTerm, and awaited the \nmbrResponders{\accountId{}}
responses from the \respondersTerm.  The response time is the duration
observed by the \requesterTerm between starting to generate a PMT
query (after precomputation) and receiving all responses and
outputting the result.  \figref{fig:responseTime:woTor} shows the
response time when \directoryTerm is trusted for login privacy
(\trustedForLoginPrivacy), while \figref{fig:responseTime:Tor} shows
the response time when the \directoryTerm is untrusted for login
privacy (\untrustedForLoginPrivacy). As mentioned in
\secref{sec:performance:setup}, in the former case the \requesterTerm
directly connected to the \directoryTerm, while in the latter case, the
\requesterTerm communicated with the \directoryTerm via a Tor
circuit. Tor circuit setup is included in the response-time
measurement.  In both cases, the \directoryTerm had direct connections
with all \respondersTerm, with no Tor circuits involved.

The main takeaway from \figref{fig:responseTime} is that when the
capacity \pmtSetSize of the suspicious password set at each
\responderTerm was relatively small, say $\pmtSetSize \le 2^9$, the
response time was less than 1\secs with a \trustedForLoginPrivacy
\directoryTerm and less than 2\secs with a \untrustedForLoginPrivacy
\directoryTerm, even for users with a large number of web accounts,
say $\nmbrResponders{\accountId{}} = 256$.  Since the average user has
far fewer accounts ($\nmbrResponders{\accountId{}} \approx
26$~\cite{pearman2017:habitat}), and since modern password-management
recommendations would allow suspicious-password sets to be capped at a
size $\pmtSetSize < 2^7$ (see \secref{sec:stuffing:algorithm}), we can
expect the response time for an isolated request to be less than
1\secs even in the \untrustedForLoginPrivacy \directoryTerm scenario.

\subsection{Scalability}
\label{sec:performance:results:scalability}

To evaluate the scalability of our framework, we measured the maximum
qualifying response rate that our prototype can achieve.  Here, a
\textit{qualifying} response is one for which the response time falls
within a certain allowance, which we specified as 5\secs in the
\trustedForLoginPrivacy \directoryTerm case and 8\secs in the
\untrustedForLoginPrivacy \directoryTerm case.  For each query,
\nmbrResponders{\accountId{}} \respondersTerm were chosen uniformly at
random from all 256 \respondersTerm.  To produce a conservative
\untrustedForLoginPrivacy estimate, we required the \requesterTerm to
communicate with the \untrustedForLoginPrivacy \directoryTerm via a
new Tor circuit for each new query, to account for the potential
scalability degradation brought by building Tor connections.

\begin{figure}[h]
  \captionsetup[subfigure]{font=normalsize,labelfont=normalsize}
  \begin{subfigure}[t]{.1\columnwidth}
    \setlength\figureheight{2in}
    \begin{minipage}[t]{1\columnwidth}
      \hspace*{2.5em}
      \resizebox{!}{1.6em}{\newenvironment{customlegend}[1][]{%
    \begingroup
    \csname pgfplots@init@cleared@structures\endcsname
    \pgfplotsset{#1}%
}{%
    \csname pgfplots@createlegend\endcsname
    \endgroup
}%

\def\addlegendimage{\csname pgfplots@addlegendimage\endcsname}

\begin{tikzpicture}

\definecolor{color0}{rgb}{0.129411764705882,0.380392156862745,0.549019607843137}

\begin{customlegend}[
	legend style={{font={\fontsize{10pt}{12}\selectfont}},{draw=none}},
	legend cell align={left},
	legend columns=4,
	legend entries={{$\pmtSetSize = 2^7\quad$},{$\pmtSetSize = 2^8\quad$},{$\pmtSetSize = 2^9\quad$},{$\pmtSetSize = 2^{10}\quad$}}]
\addlegendimage{line width=1pt, color0}
\addlegendimage{line width=1pt, color0, dotted}
\addlegendimage{line width=1pt, color0, dash pattern=on 1pt off 3pt on 3pt off 3pt}
\addlegendimage{line width=1pt, color0, dashed}

\end{customlegend}

\end{tikzpicture}}
    \end{minipage}
  \end{subfigure}
  
  \begin{subfigure}[b]{.48\columnwidth}
    \setlength\figureheight{2.4in}
    \begin{minipage}[b]{1\textwidth}
      \centering
      \vspace*{0em}\resizebox{!}{10.5em}{\begin{tikzpicture}

\definecolor{color0}{rgb}{0.129411764705882,0.380392156862745,0.549019607843137}

\pgfplotsset{every axis/.append style={
					xlabel={},
					compat=1.3,
                    label style={font=\small},
                    tick label style={font=\small}  
                    }}

\begin{axis}[
ylabel style={align=center}, 
ylabel=Max. qualifying\\responses per second,
xmin=25, xmax=125,
ymin=0, ymax=1200,
xtick={25,50,75,100,125},
ytick={0,200,400,600,800,1000,1200},
yticklabels={0,200,400,600,800,1000,1200},
width=1.1\figurewidth,
height=1.1\figurewidth,
tick align=outside,
tick pos=left,
xmajorgrids,
minor tick num=1,
x grid style={lightgray!92.026143790849673!black},
ymajorgrids,
y grid style={lightgray!92.026143790849673!black},
]
\addplot [line width=1pt, color0]
table {%
25		1153
50		823
75		610
100		525
125		392
};
\addplot [line width=1pt, color0, dotted]
table {%
25		710
50		403
75		333
100		258
125		180
};
\addplot [line width=1pt, color0, dash pattern=on 1pt off 3pt on 3pt off 3pt]
table {%
25		411
50		216
75		147
100		106
125		74
};
\addplot [line width=1pt, color0, dashed]
table {%
25		187
50		110
75		74
100		62
125		44
};
\end{axis}
\end{tikzpicture}

\hspace*{-0.5em}\begin{tikzpicture}
\definecolor{color0}{rgb}{0.129411764705882,0.380392156862745,0.549019607843137}

\pgfplotsset{every axis/.append style={
					xlabel={},
					ylabel={},
					compat=1.3,
                    label style={font=\small},
                    tick label style={font=\small}  
                    }}
                   
\begin{axis}[
xmin=25, xmax=125,
ymin=0, ymax=1200,
xtick={25,50,75,100,125},
ytick={0,200,400,600,800,1000,1200},
yticklabels={},
width=1.1\figurewidth,
height=1.1\figurewidth,
tick align=outside,
tick pos=left,
minor tick num=1,
xmajorgrids,
x grid style={lightgray!92.026143790849673!black},
ymajorgrids,
y grid style={lightgray!92.026143790849673!black},
]
\addplot [line width=1pt, color0]
table {%
25		253
50		196
75		172
100		169
125		162
};
\addplot [line width=1pt, color0, dotted]
table {%
25		155
50		105
75		97
100		88
125		74
};
\addplot [line width=1pt, color0, dash pattern=on 1pt off 3pt on 3pt off 3pt]
table {%
25		92
50		56
75		44
100		39
125		32
};
\addplot [line width=1pt, color0, dashed]
table {%
25		40
50		30
75		25
100		22
125		19
};
\end{axis}
\end{tikzpicture}}
      \hspace*{5.5em} 
      \begin{minipage}[t]{7.5em}
	     \vspace*{-0.01em}
        \caption{\trustedForLoginPrivacy  \directoryTerm}
        \label{fig:throughput:woTor}
      \end{minipage}%
      \hspace*{2.5em}
      \begin{minipage}[t]{7.5em}
	     \vspace*{-0.01em}
        \caption{\untrustedForLoginPrivacy  \directoryTerm}
        \label{fig:throughput:Tor}
      \end{minipage}%
    \end{minipage}
  \end{subfigure}%
  
  \vspace*{-2.5em}
  \begin{subfigure}[b]{.43\columnwidth}
    \setlength\figureheight{2in}
    \begin{minipage}[b]{1\textwidth}
      \centering
      \hspace*{13.5em}
      \resizebox{!}{1.3em}{\begin{tikzpicture}
\node at (0,0)[
  scale=1,
  anchor=south,
  text=black,
  rotate=0
]{\nmbrResponders{\accountId{}}};
\end{tikzpicture}}\vspace*{-0.4em}
    \end{minipage}
  \end{subfigure}
  \vspace*{1.5em}
  \caption{Maximum qualifying responses per second}
  \label{fig:throughput}
\end{figure}

The results of these tests are shown in \figref{fig:throughput}.  To
put these numbers in context, consider that there are $\approx$ 369.4M
credential-stuffing login attempts per day for the four U.S.\ industries
listed in \tblref{tbl:stats}.  According to the reported success rates
of credential stuffing, 0.83M of these login attempts are with correct
passwords.  Moreover, there are $\approx$ 187.6M legitimate login
attempts per day in these industries; for a conservative estimate
here, we assume that they all provide the correct passwords.  With the
baseline \textit{phishing} ADS configuration used in
\secref{sec:stuffing:eff}, i.e., $(\adsFPR{\adsCount},
\adsTPR{\adsCount}) = (0.30, 0.95)$, there would be 57.07M ($=
187.6\textrm{M} \times 0.3 + 0.83\textrm{M} \times 0.95$) login
attempts per day that induce PMT queries or, in other words, about 660
PMT queries per second.  With the baseline \textit{researching} ADS
configuration used in \secref{sec:stuffing:eff}, i.e.,
$(\adsFPR{\adsCount}, \adsTPR{\adsCount}) = (0.10, 0.99)$, an
analogous calculation suggests 19.58M PMT queries per day or 227 per
second.  Our experiments suggest that our prototype could achieve
these throughputs with just one \directoryTerm server for a range of
configurations.  For example, configured for \textit{phishing}
attackers, our \trustedForLoginPrivacy \directoryTerm should support
the requisite throughput when $\pmtSetSize < 2^7$ for up to
$\nmbrResponders{\accountId{}} \approx 69$ \respondersTerm.
Configured for \textit{researching} attackers, even the
\untrustedForLoginPrivacy configuration could support the expected
throughput when $\pmtSetSize < 2^7$ for up to
$\nmbrResponders{\accountId{}} \approx 36$ \respondersTerm, and the
\trustedForLoginPrivacy configuration would support the needed
throughput when $\pmtSetSize < 2^7$ for \nmbrResponders{\accountId{}}
as large as $\nmbrResponders{\accountId{}} = 125$.

\begin{table}[h]
  \centering
  \small
  \begin{tabular}{@{}lrcc@{}}
    \toprule
    & \parbox[m]{0.25\columnwidth}{\centering Credential-stuffing login attempts per day}
    & \parbox[m]{0.24\columnwidth}{\centering Proportion that succeed}
    & \parbox[m]{0.2\columnwidth}{\centering Proportion of all login attempts}\\
    Industry
    & \cite[\tblsrefstatic{3}{6}]{shape2018:spill}
    & \cite[\tblsrefstatic{3}{6}]{shape2018:spill}
    & \cite[\figrefstatic{13}]{shape2018:spill} \\
    \midrule
    Airline &   1.4M\hspace{2em} & 1.00\% & 60\% \\[4pt]
    Hotel   &   4.3M\hspace{2em} & 1.00\% & 44\% \\[4pt]
    Retail  & 131.5M\hspace{2em} & 0.50\% & 91\% \\[4pt]
    \parbox[m]{0.125\columnwidth}{Consumer banking}
            & 232.2M\hspace{2em} & 0.05\% & 58\% \\
    \bottomrule
  \end{tabular}
  \caption{Credential-stuffing estimates for U.S.\ industries}
  \label{tbl:stats}
\end{table}

Though already encouraging, these results leveraged only one
\requesterTerm machine and one \directoryTerm machine, and
concentrated all PMT queries to be served by (a randomly chosen subset
of size \nmbrResponders{\accountId{}} of) the same 256
\respondersTerm, each allocated only a single CPU core.
\ResponderTerm CPU was the bottleneck in the \trustedForLoginPrivacy
experiments.  Tor was the bottleneck in the \untrustedForLoginPrivacy
experiments, being the only difference from the
\trustedForLoginPrivacy experiments.  With a more dispersed query
pattern launched from more \requestersTerm, with more capable
\respondersTerm, and with a distributed directory, our design could
scale even further.

\section{Conclusion}
\label{sec:conclusion}

In this paper we have proposed a novel framework by which websites can
coordinate to detect active credential-stuffing attacks on individual
accounts.  Our framework accommodates the tendencies of human users to
reuse passwords, to enter their passwords into incorrect sites, etc.,
while still providing good detection accuracy across a range of
operating points.  The framework is built on a new private
membership-test protocol that scales better than previous
alternatives and/or ensures a higher \textit{extraction
  complexity}, which captures the ability of a \requesterTerm in the
protocol to extract enough information to search elements of the set
offline.  Using probabilistic model checking applied to novel
experiments designed to capture both usability and security, we
quantified the benefits of our framework.  Finally, we showed through
empirical results with our prototype implementation that our design
should scale easily to accommodate the login load of large sectors of
the U.S. economy, for example.

\paragraph{Acknowledgments}
We are grateful to the anonymous reviewers and to our shepherd,
Prof.\ Stephen Checkoway, for their constructive feedback.

\bibliographystyle{plain}
\bibliography{tight,main}

\appendix

\section{Exponential ElGamal Encryption}
\label{sec:pmt:elgamal}

A cryptosystem that can be used to instantiate the specification of
\secref{sec:pmt:crypto} is a variant of ElGamal
encryption~\cite{elgamal1985:public-key} commonly referred to as
``exponential ElGamal'' and implemented as follows (see,
e.g.,~\cite{cramer1997:election}).  It uses an algorithm \elgGen that,
on input $1^{\secParam}$, outputs a multiplicative abelian group
\elgGroup of order \fieldOrder for a \secParam-bit prime \fieldOrder.
\begin{itemize}[nosep,leftmargin=1em,labelwidth=*,align=left]
\item $\keygen(1^\kappa)$ generates $\elgGroup \gets
  \elgGen(1^\secParam)$; selects $\elgPrivKey \getsr
  \residues{\fieldOrder}$; and returns a private key $\privKey =
  \langle \elgPrivKey\rangle$ and public key $\pubKey = \langle
  \elgGroup, \elgGroupGenerator, \elgPubKey \rangle$, where
  \elgGroupGenerator is a generator of \elgGroup, and $\elgPubKey
  \gets \elgGroupGenerator^{\elgPrivKey}$.

\item $\encrypt{\langle\elgGroup,
  \elgGroupGenerator,\elgPubKey\rangle}(\plaintext)$ returns
  $\langle\elgEphemeralPubKey{}, \elgCiphertext{} \rangle$ where
  $\elgEphemeralPubKey{} \gets
  \elgGroupGenerator^{\elgEphemeralPrivKey{}}$,
  $\elgEphemeralPrivKey{} \getsr \residues{\fieldOrder}$, and
  $\elgCiphertext{} \gets \elgGroupGenerator^{\plaintext}
  \elgPubKey^{\elgEphemeralPrivKey{}}$.

\item $\langle \elgEphemeralPubKey{1}, \elgCiphertext{1} \rangle
  \encAdd{\langle \elgGroup, \elgGroupGenerator, \elgPubKey\rangle}
  \langle \elgEphemeralPubKey{2}, \elgCiphertext{2} \rangle$ returns
  $\langle \elgEphemeralPubKey{1} \elgEphemeralPubKey{2}
  \elgGroupGenerator^\elgGroupExponent, \elgCiphertext{1}
  \elgCiphertext{2} \elgPubKey^\elgGroupExponent\rangle$ for
  $\elgGroupExponent \getsr \residues{\fieldOrder}$ if
  $\{\elgEphemeralPubKey{1}, \elgCiphertext{1},
  \elgEphemeralPubKey{2}, \elgCiphertext{2}\} \subseteq \elgGroup$ and
  returns $\bot$ otherwise.

\item $\encZeroTest{\langle \elgPrivKey \rangle}(\langle
  \elgEphemeralPubKey{}, \elgCiphertext{}\rangle)$ returns \boolTrue
  if $\{\elgEphemeralPubKey{}, \elgCiphertext{}\} \subseteq
  \elgGroup$ and $\elgCiphertext{} =
  \elgEphemeralPubKey{}^{\elgPrivKey}$, and returns \boolFalse
  otherwise.
\end{itemize}

To use this cryptosystem in \cuckooPMT, it is necessary to test for
ciphertext validity (\lineref{prot:responder:checkCiphertexts} and
\lineref{prot:requester:checkCiphertexts}).  The next proposition
shows that it suffices to test membership in \elgGroup.

\begin{prop}
  \label{prop:elgValidCtexts}
  For the exponential ElGamal cryptosystem,
  $\ciphertextSpace{\langle \elgGroup, \elgGroupGenerator,
    \elgPubKey\rangle}$ $=$ $\elgGroup \times \elgGroup$.
\end{prop}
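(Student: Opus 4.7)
The plan is to prove the set equality by establishing both inclusions, exploiting that $\elgGroup$ is cyclic of prime order $\fieldOrder$ and generated by $\elgGroupGenerator$.

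For the $\subseteq$ direction, I would unfold the definition of \encrypt{}: every ciphertext has the form $\langle \elgGroupGenerator^{\elgEphemeralPrivKey{}},\, \elgGroupGenerator^{\plaintext} \elgPubKey^{\elgEphemeralPrivKey{}}\rangle$. Since $\elgPubKey = \elgGroupGenerator^{\elgPrivKey} \in \elgGroup$ and \elgGroup is closed under multiplication and exponentiation, both components lie in \elgGroup. Thus $\ciphertextSpace{\pubKey}(\plaintext) \subseteq \elgGroup \times \elgGroup$ for every $\plaintext \in \residues{\fieldOrder}$, and the union over \plaintext preserves this inclusion.

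For the $\supseteq$ direction, I would fix an arbitrary pair $\langle \elgEphemeralPubKey{}, \elgCiphertext{}\rangle \in \elgGroup \times \elgGroup$ and exhibit a plaintext $\plaintext$ and randomness $\elgEphemeralPrivKey{}$ realizing it as a ciphertext. Because \fieldOrder is prime and \elgGroupGenerator generates \elgGroup, the discrete-log map $\elgGroupExponent \mapsto \elgGroupGenerator^{\elgGroupExponent}$ is a bijection from $\residues{\fieldOrder}$ to \elgGroup. Hence there is a unique $\elgEphemeralPrivKey{} \in \residues{\fieldOrder}$ with $\elgEphemeralPubKey{} = \elgGroupGenerator^{\elgEphemeralPrivKey{}}$, and then a unique $\plaintext \in \residues{\fieldOrder}$ with $\elgGroupGenerator^{\plaintext} = \elgCiphertext{} \elgPubKey^{\fieldOrder - \elgEphemeralPrivKey{}}$, so that $\elgCiphertext{} = \elgGroupGenerator^{\plaintext} \elgPubKey^{\elgEphemeralPrivKey{}}$. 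These witness $\langle \elgEphemeralPubKey{}, \elgCiphertext{}\rangle \in \ciphertextSpace{\pubKey}(\plaintext)$, completing the inclusion.

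There is no substantive obstacle; the only point to take care with is invoking primality of \fieldOrder explicitly to guarantee that \elgGroupGenerator generates all of \elgGroup (so the discrete-log bijection exists) and that $\elgPubKey^{\fieldOrder - \elgEphemeralPrivKey{}}$ is a well-defined inverse of $\elgPubKey^{\elgEphemeralPrivKey{}}$. With this observation, the proposition reduces to an immediate counting/bijection argument and justifies the use of membership in \elgGroup as the sole validity check on ciphertexts in \lineref{prot:responder:checkCiphertexts} and \lineref{prot:requester:checkCiphertexts}.
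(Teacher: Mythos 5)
Your proof is correct and follows essentially the same route as the paper's: the forward inclusion by construction, and the reverse inclusion by using that \elgGroupGenerator generates \elgGroup to exhibit a preimage for an arbitrary pair in $\elgGroup \times \elgGroup$. The only cosmetic difference is that you solve for the randomness $\elgEphemeralPrivKey{}$ first and then the plaintext, whereas the paper directly identifies the plaintext as the discrete log of $\elgCiphertext{}\elgEphemeralPubKey{}^{-\elgPrivKey}$; both realize the same bijection $(\elgEphemeralPrivKey{},\plaintext) \mapsto \langle \elgGroupGenerator^{\elgEphemeralPrivKey{}}, \elgGroupGenerator^{\plaintext}\elgPubKey^{\elgEphemeralPrivKey{}}\rangle$.
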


\begin{proof}
$\ciphertextSpace{\langle \elgGroup, \elgGroupGenerator,
    \elgPubKey\rangle}$ $\subseteq$ $\elgGroup \times \elgGroup$
  follows by construction, and so we focus on proving $\elgGroup
  \times \elgGroup$ $\subseteq$ $\ciphertextSpace{\langle \elgGroup,
    \elgGroupGenerator, \elgPubKey\rangle}$.  For any $\langle
  \elgEphemeralPubKey{}, \elgCiphertext{}\rangle$ $\in$ $\elgGroup
  \times \elgGroup$, consider the group element
  $\elgCiphertext{}\elgEphemeralPubKey{}^{-\elgPrivKey}$ where
  $\elgGroupGenerator^{\elgPrivKey} = \elgPubKey$.  Since
  \elgGroupGenerator is a generator of \elgGroup, there is a plaintext
  $\plaintext \in \residues{\fieldOrder}$ such that
  $\elgGroupGenerator^{\plaintext} =
  \elgCiphertext{}\elgEphemeralPubKey{}^{-\elgPrivKey}$ and so
  $\langle \elgEphemeralPubKey{}, \elgCiphertext{}\rangle$ $\in$
  $\ciphertextSpace{\langle \elgGroup, \elgGroupGenerator,
    \elgPubKey\rangle}(\plaintext)$.
\end{proof}

\end{document}